\newcommand{\ep}{\varepsilon}
\newcommand{\dd}{\operatorname{d}}
\newcommand{\rmd}{\textnormal{d}}
\newcommand{\N}{\textnormal{N}}
\newcommand{\f}{\textnormal{f}}
\newcommand{\Vect}{\mathfrak{X}}
\DeclareMathOperator{\w}{w}
\newcommand{\catname}[1]{\textnormal{\texttt{#1}}}
\font\black=cmbx10 \font\sblack=cmbx7 \font\ssblack=cmbx5 \font\blackital=cmmib10 \skewchar\blackital='177
\font\sblackital=cmmib7 \skewchar\sblackital='177 \font\ssblackital=cmmib5 \skewchar\ssblackital='177
\font\sanss=cmss11 \font\ssanss=cmss8 
\font\sssanss=cmss8 scaled 600 \font\blackboard=msbm10 \font\sblackboard=msbm7 \font\ssblackboard=msbm5
\font\caligr=eusm10 \font\scaligr=eusm7 \font\sscaligr=eusm5  \font\fraktur=eufm10
\font\sfraktur=eufm7 \font\ssfraktur=eufm5 
\font\bsymb=cmsy10 scaled\magstep2
\def\all#1{\setbox0=\hbox{\lower1.5pt\hbox{\bsymb
 \char"38}}\setbox1=\hbox{$_{#1}$} \box0\lower2pt\box1\;}
\def\exi#1{\setbox0=\hbox{\lower1.5pt\hbox{\bsymb \char"39}}
 \setbox1=\hbox{$_{#1}$} \box0\lower2pt\box1\;}
\def\tx#1{{\fam0\relax#1}}
\def\sss#1{{\fam\ssfam\relax#1}}
\def\hpb#1{\setbox0=\hbox{${#1}$}
 \copy0 \kern-\wd0 \kern.2pt \box0}
\def\vpb#1{\setbox0=\hbox{${#1}$}
 \copy0 \kern-\wd0 \raise.08pt \box0}
\def\pmb#1{\setbox0\hbox{${#1}$} \copy0 \kern-\wd0 \kern.2pt \box0}
\def\pmbb#1{\setbox0\hbox{${#1}$} \copy0 \kern-\wd0
 \kern.2pt \copy0 \kern-\wd0 \kern.2pt \box0}
\def\pmbbb#1{\setbox0\hbox{${#1}$} \copy0 \kern-\wd0
 \kern.2pt \copy0 \kern-\wd0 \kern.2pt
 \copy0 \kern-\wd0 \kern.2pt \box0}
\def\pmxb#1{\setbox0\hbox{${#1}$} \copy0 \kern-\wd0
 \kern.2pt \copy0 \kern-\wd0 \kern.2pt
 \copy0 \kern-\wd0 \kern.2pt \copy0 \kern-\wd0 \kern.2pt \box0}
\def\pmxbb#1{\setbox0\hbox{${#1}$} \copy0 \kern-\wd0 \kern.2pt
 \copy0 \kern-\wd0 \kern.2pt
 \copy0 \kern-\wd0 \kern.2pt \copy0 \kern-\wd0 \kern.2pt
 \copy0 \kern-\wd0 \kern.2pt \box0}
\mathchardef\za="710B 
\mathchardef\zb="710C 
\mathchardef\zg="710D 
\mathchardef\zd="710E 
\mathchardef\zve="710F 
\mathchardef\zz="7110 
\mathchardef\zh="7111 
\mathchardef\zvy="7112 
\mathchardef\zi="7113 
\mathchardef\zk="7114 
\mathchardef\zl="7115 
\mathchardef\zm="7116 
\mathchardef\zn="7117 
\mathchardef\zx="7118 
\mathchardef\zp="7119 
\mathchardef\zr="711A 
\mathchardef\zs="711B 
\mathchardef\zt="711C 
\mathchardef\zu="711D 
\mathchardef\zvf="711E 
\mathchardef\zq="711F 
\mathchardef\zc="7120 
\mathchardef\zw="7121 
\mathchardef\ze="7122 
\mathchardef\zy="7123 
\mathchardef\zf="7124 
\mathchardef\zvr="7125 
\mathchardef\zvs="7126 
\mathchardef\zf="7127 
\mathchardef\zG="7000 
\mathchardef\zD="7001 
\mathchardef\zY="7002 
\mathchardef\zL="7003 
\mathchardef\zX="7004 
\mathchardef\zP="7005 
\mathchardef\zS="7006 
\mathchardef\zU="7007 
\mathchardef\zF="7008 
\mathchardef\zW="700A 
\mathchardef\zC="7009 
\newcommand{\be}{\begin{equation}}
\newcommand{\ee}{\end{equation}}
\newcommand{\ra}{\rightarrow}
\newcommand{\bea}{\begin{eqnarray}}
\newcommand{\eea}{\end{eqnarray}}
\newcommand{\R}{{\mathbb R}}
\newcommand{\Zet}{{\mathbb Z}}
\newcommand{\s}{{\textstyle *}}
\newcommand{\pa}{\partial}
\newcommand{\ti}{\times}
\def\sD{{\sss D}}
\def\sT{{\sss T}}
\def\sV{{\sss V}}
\def\sE{{\sss E}}
\def\sF{{\sss F}}
\def\sh{{\sss h}}
\def\sv{{\sss v}}
\def\xi{\tx{i}}
\def\dt{\xd_{\sss T}}
\def\cM{\cal M}
\def\cF{{\cal F}}
\def\xd{\operatorname{d}}
\def\dt{\xd_{\sT}}
\def\s*{{\scriptstyle *}}
\def\cM{\mathcal{M}}
\newcommand{\XX}{\mathfrak{X}} 
\newcommand{\veps}{\varepsilon}
\newcommand{\id}{\operatorname{id}}
\def\Linr{{\mathbf{L}}} 
\def\pLinr{{\mathbf{l}}} 
\newcommand{\und}{\underline}
\newcommand{\SymmVB}{\catname{SymmVB}}
\newcommand{\GrB}{\catname{GrB}}
\newcommand{\SpVB}{\catname{SpVB}}
\newcommand{\VB}{\catname{VB}}
\newcommand{\GrL}{\catname{GrL}}
\newcommand{\Y}[1]{\prescript{#1}{}{Y}}
\newcommand{\Z}[1]{\prescript{#1}{}{Z}}
\newcommand{\InnerBracket}{\langle \cdot, \cdot \rangle}
\newcommand{\dblInnerBracket}[1]{\langle\langle #1 \rangle\rangle}
\newcommand{\Sgroup}{\mathbb{S}}
\numberwithin{equation}{section}
\newtheorem{Theorem}{Theorem}[section]
\newtheorem{Corollary}[Theorem]{Corollary}
\newtheorem{Lemma}[Theorem]{Lemma}
\newtheorem{Proposition}[Theorem]{Proposition}
 { \theoremstyle{definition}
\newtheorem{Definition}[Theorem]{Definition}
\newtheorem{Example}[Theorem]{Example}
\newtheorem{Remark}[Theorem]{Remark} }
\begin{document}


\newcommand{\arXivNumber}{1512.02345}

\renewcommand{\PaperNumber}{106}

\FirstPageHeading

\ShortArticleName{Polarisation of Graded Bundles}

\ArticleName{Polarisation of Graded Bundles}

\Author{Andrew James BRUCE~$^\dag$, Janusz GRABOWSKI~$^\dag$ and Miko{\l}aj ROTKIEWICZ~$^\ddag$}

\AuthorNameForHeading{A.J.~Bruce, J.~Grabowski and M.~Rotkiewicz}

\Address{$^\dag$~Institute of Mathematics, Polish Academy of Sciences, Poland}
\EmailD{\href{mailto:andrewjamesbruce@googlemail.com}{andrewjamesbruce@googlemail.com}, \href{mailto:jagrab@impan.pl}{jagrab@impan.pl}}

\Address{$^\ddag$~Faculty of Mathematics, Informatics and Mechanics, University of Warsaw, Poland}
\EmailD{\href{mailto:mrotkiew@mimuw.edu.pl}{mrotkiew@mimuw.edu.pl}}

\ArticleDates{Received December 14, 2015, in f\/inal form October 25, 2016; Published online November 02, 2016}

\Abstract{We construct the \emph{full linearisation functor} which takes a graded bundle of degree $k$ (a particular kind of graded manifold) and produces a $k$-fold vector bundle. We fully characterise the image of the full linearisation functor and show that we obtain a~subcategory of $k$-fold vector bundles consisting of \emph{symmetric $k$-fold vector bundles} equipped with a family of morphisms indexed by the symmetric group~${\mathbb S}_k$. Interestingly, for the degree~2 case this additional structure gives rise to the notion of a~\emph{symplectical double vector bundle}, which is the skew-symmetric analogue of a~\emph{metric double vector bundle}. We also discuss the related case of fully linearising $N$-manifolds, and how one can use the full linearisation functor to ``superise'' a graded bundle.}

\Keywords{graded manifolds; $N$-manifolds; $k$-fold vector bundles; polarisation; supermanifolds}

\Classification{55R10; 58A32; 58A50}

{\small \tableofcontents}

\section{Introduction and background}\label{sec:Intro}
\subsection{Motivation and summary of results}

The term `graded manifold' appears in the literature in various meanings, including those related just to $\mathbb{Z}_2$-gradation and parity, i.e., to \emph{supermanifolds} (see, e.g.,~\cite{Kostant:1977}). Our general under\-standing is in the spirit of Th.~Voronov \cite{Voronov:2001qf}, who def\/ines \emph{graded manifolds} as supermanifolds equipped with a privileged class of atlases in which coordinates are assigned weights in $\Zet$, which in ge\-neral is independent of the Grassmann parity. Moreover, the coordinate changes are decreed to be polynomial in non-zero weight coordinates and respect the weight. An additional condition is that all the non-zero weight coordinates that are Grassmann even must be `cylindrical'. In precise terms, it means that the associated \emph{weight vector field is $h$-complete} (cf.~\cite{Grabowski2013}).

Within the category of graded manifolds, the most important seem to be the non-negatively graded manifolds. If the Grassmann parity of the coordinates coincides with the weight (mod~2), then we have a $N$-manifold (cf.~\cite{Roytenberg:2002,Severa:2005}). In the purely even setting, Grabowski and Rotkie\-wicz~\cite{JG_MR_higher_vb} def\/ine what they referred to as \emph{graded bundles}, being a particular class of non-negatively graded manifolds. They showed that a graded bundle, understood as a manifold with a non-negative grading on its structure sheaf, is equivalent to a manifold equipped with a smooth action of the multiplicative monoid $(\R,\cdot)$ of reals. Such actions they call \emph{homogeneity structures} (we will be more precise shortly). As this action reduced to $(\R_{>0},\cdot)$ is just generated by the weight vector f\/ield, it means, actually by def\/inition, that the weight vector f\/ield is \emph{$h$-complete}.

In this paper we examine how to canonically pass from a graded bundle of degree~$k$ (cf.~\cite{JG_MR_higher_vb}) to a~$k$-fold vector bundle and back. We construct the \emph{full linearisation functor} as the iteration of the linearisation functor (cf.~\cite{Bruce:2014}), and completely characterise its image as a subcategory of the category of $k$-fold vector bundles. A little more carefully, to every graded bundle of degree~$k$ we can associate a $k$-fold vector bundle equipped with a system of morphisms $\sigma_g$, parametrized by the elements of the symmetric group $\Sgroup_k$, satisfying an additional property; we will call such $k$-fold vector bundles \emph{symmetric}.

Heuristically, one should view the full linearisation as a \emph{polarisation} of the admissible changes of local coordinates. That is, we adjoin new coordinates in a natural way as to linearise the polynomial changes of f\/ibre coordinates. The new coordinates essentially come from the procedure of repeated dif\/ferentiation (i.e., repeated application of the tangent functor). In this way we obtain a new enlarged manifold that has the structure of a $k$-fold vector bundle. The original graded bundle can then be recovered from the `linearised version' as the `diagonal' of \emph{holonomic vectors}. The constructions we present mimic the well-known relationship between higher tangent bundles $\sT^kM$ and iterated tangent bundles. We remark that the deceptively simple notion of polarising homogeneous polynomials has been exploited in algebraic geometry, invariant theory, representation theory and dynamical systems, for example. In retrospect, it is not surprising that polarisation plays an important r\^{o}le in the general theory of graded bundles.

Motivation for considering the full linearisation functor comes from the theory of Lie algebroids that carry an extra compatible grading and the corresponding higher Lagrangian mecha\-nics \cite{Bruce:2014b, Bruce:2014}, the linearisation functor is essential in those works. In addition, we must point out the result of Jotz~Lean \cite{JotzLean:2015} who showed that $N$-manifolds of degree $2$ are categorically equivalent to what she calls \emph{metric double vector bundles}. For the specif\/ic case of degree 2, the linearisation as presented in \cite{Bruce:2014} coincides exactly with the full linearisation presented in this paper. Moreover, for the degree 2 case the additional morphism $\sigma = \sigma_{(12)}$ leads to the construction of a \emph{symplectical double vector bundle} (we will def\/ine this notion carefully in due course). Symplectical double vector bundles are essentially the same as metric double vector bundles, the former being def\/ined in terms of a skew-symmetric pairing while the latter a symmetric pairing. In light of the constructions of Jotz Lean \cite{JotzLean:2015}, we see that the subtle dif\/ference in symmetry of the pairing is really due to the dif\/ference between purely even graded bundles and $N$-manifolds. In particular, for $N$-manifolds of degree two the weight one coordinates are anticommuting, while graded bundles live in the strictly commutative world. Thus we have to interchange `symmetric' and `skew-symmetric' in the right places when switching between graded bundles and $N$-manifolds of degree~2.

\looseness=-1 We also draw the reader's attention to the PhD thesis of del Carpio-Marek \cite{Carpio-Marek:2015}, who (independently) established results equivalent to that that Jotz Lean, but in terms of double vector bundles equipped with an involution. Conceptually this approach is similar to our use of symmetric $k$-fold vector bundles. The work of del~Carpio-Marek starts from some of the results of Bursztyn et al.~\cite{Bursztyn:2015+}. According to del~Carpio-Marek, they show that the category of $N$-manifolds of degree two is equivalent to the category of \emph{involutive double vector bundle sequences}. The notion of a~vector bundle sequence is due to Chen et al.~\cite{Chen:2014}. We further remark Jotz Lean and del Carpio-Marek both concentrate solely on $N$-manifolds of degree~2, where our constructions naturally cover higher degree $N$-manifolds (upon minor modif\/ications). The only other work we are aware of that deals with `linearising' higher degree $N$-manifolds is that of Vishnyakova~\cite{Vishnyakova:2015}: she establishes a categorical equivalence of $N$-manifolds of degree $k$ and $k$-fold vector bundles (in the category of supermanifolds) equipped with a family of $k$ odd vector f\/ields. Vishnyakova uses repeated application of the antitangent functor and substructures thereof to build $k$-fold vector bundles from $N$-manifolds. In particular, the family of odd vector f\/ields are essentially de Rham dif\/ferentials associated with each antitangent bundle. We must remark that we became aware of the works of del~Carpio-Marek and Vishnyakova only towards the end of completing this paper.

Philosophically, graded bundles are a natural generalisation of vector bundles (in the category of smooth manifolds). It is thus natural to wonder if there is some analogue of the parity reversion functor for graded bundles. The obvious dif\/f\/iculty is that the admissible f\/ibre coordinate transformations on a graded bundle are in general non-linear (they are polynomial) and so directly mimicking the parity reversion functor for vector bundles fails.

However, one can use the full linearisation functor, coupled with the standard parity reversion functor, to construct a supermanifold from a graded bundle. Alternatively, the full linearisation functor allows one to canonically associate with an arbitrary graded bundle a $\mathbb{Z}_{2}^{k}$-supermanifold in the sense of Covolo, Grabowski and Poncin~\cite{Covolo:2014a,Covolo:2014b}, and Molotkov~\cite{Molotkov:2010}. From many perspectives, the natural superisation of a $k$-fold vector bundle is a $\mathbb{Z}_{2}^{k}$-supermanifold rather than a~standard supermanifold (see \cite[Proposition~6.1]{Covolo:2014a}). Note that the $\mathbb{Z}_{2}^{k}$-superisation is dif\/ferent to the $\mathbb{Z}_{2}$-superisation of $k$-fold vector bundles as def\/ined by Th.~Voronov~\cite{Voronov:2012}. As the parity reversion functor for vector bundles has turned out to be a very important notion, it is hoped that the $\mathbb{Z}_{2}^{k}$-superisation of a graded bundle will also develop into a useful concept.

A further remark is that the notion of the full linearisation should not be confused with the notion of a splitting of a graded bundle into a direct sum of graded vector bundles. Moreover, the resulting $k$-fold vector bundles or the related $\mathbb{Z}_{2}^{k}$-superisations are not (generally) canonically split. Of course, at each `stage' we have a Batchelor--Gaw\c{e}dzki-like theorem (cf.~\cite{Batchelor:1979,Gawedzki:1977}), but in general the splittings are non-canonical. The existence of Batchelor--Gaw\c{e}dzki-like theorems in various categories has been folklore for quite some time. According to our knowledge, the f\/irst proper proof for $N$-manifolds can be found in~\cite{Bonavolonta:2013}, for graded bundles see \cite{Bruce:2014} and for $\mathbb{Z}_{2}^{k}$-supermanifolds \cite{Covolo:2014b}.

It is desirable to generalise all of the considerations of this paper to more general (purely even) $\mathbb{Z}$-graded manifolds, at least as far as possible. However, manifolds that have both positive and negative gradings are intrinsically harder to understand and work with: basic questions about their topological properties and dif\/ferential calculus remain. $\mathbb{Z}$-graded manifolds in general are not f\/ibre bundles and we lose much of our intuition gained from the study of non-negatively graded structures. Even if we are given an atlas with polynomial change of coordinates, the weight vector f\/ield itself does not in general carry this information. Moreover, there is a lack of illustrative examples to help give insight into the general theory. Graded bundles, in contrast, have much better topological properties and there exists many natural examples. Moreover, if one looses the non-negative grading then there is no possibility of describing the geometry in terms of a homogeneity structure: for example homogeneity structures play a fundamental r\^ole in understanding Lie groupoids in the category of graded bundles \cite{Bruce:2014c}. For these reasons, we will not touch upon the linearisation or polarisation of $\mathbb{Z}$-graded manifolds outside the non-negatively graded case.

We summarise the main results of this paper as follows:
\begin{itemize}\itemsep=0pt
\item We present the full linearisation functor, its inverse (the diagonalisation functor), and show the categorical equivalence between graded bundles of degree k and symmetric $k$-fold vector bundles (cf.~Theorem~\ref{thm:equivalence_k}).
\item The relation between the linearisation of a graded bundle of degree~2 and symplectical double vector bundles is carefully explained (cf.~Theorem \ref{thm:equivalence_symmetric-symplectical}) and canonical Lie algebroid structures on the latter are discovered.
\item The analogous result for $N$-manifolds of degree~2 is found in Proposition~\ref{prop:Jotzlean}, which conceptually simplif\/ies the result of Jotz~Lean \cite[Theorem~3.17]{JotzLean:2015}.
\item The question of the superisation of a graded bundle is answered through the full linearisation functor and then passing canonically to a $\mathbb{Z}_{2}^{k}$-supermanifold (cf.~Theorem~\ref{thm:superisation}).
\end{itemize}

\begin{Remark}\looseness=-1
It is possible to consider non-negatively graded manifolds in terms of consistently def\/ined homogeneous local coordinates that do \emph{not} lead to $h$-complete weight vector f\/ields, i.e., in a broader sense than in~\cite{Voronov:2001qf}. In our opinion, the term ``graded manifolds'' should be reserved for this general concept as illustrated by the following examples. Consider $\R^2\supset F = (0, +\infty)\times (0, +\infty)$ and let $x, y\colon  F\rightarrow \R$ be the standard projections on $(0, +\infty)$. Say, we want to assign weights~$1$,~$1$ to the coordinate functions~$x$,~$y$ on~$F$, so the weight vector f\/ield $\Delta \in \mathfrak{X}(F)$ is $\Delta= x \pa_x + y \pa_y$ which is not $h$-complete ($\Delta$ integrates to $(\R_+, \cdot)$-action which can not be extended to the action of the whole monoid $(\R, \cdot)$). Then $(x, y)\mapsto (x' = x^2/y, y'=y)$ is a~dif\/feomorphism $F\ra F$ and $x'$, $y'$ are homogeneous weight $1$ functions with respect to~$\Delta$. Thus we are forced to accept~$(x', y')$ as a~graded coordinate system on $F$ equally `good' as $(x, y)$. However, the manifold $F$ equipped with the class of graded coordinates represented by $(x, y)$, both of weight~1, and the associated weight vector f\/ield~$\Delta$, inherits now almost no properties of a~vector space. For example, we see that af\/f\/ine combinations can not be def\/ined intrinsically on~$F$.

Another toy example of a non-negatively graded manifold modelled on the same mani\-fold~$F$, is constructed by assigning weights $1$, $2$ to the coordinates $x$, $y$, respectively. Then $(x'=x^{1/2} y^{1/4}, y'=y)$ is another coordinate system in the same class as $(x, y)$. Note that we cannot kill intrinsically coordinates of highest weight, i.e., the coordinate $y$ in the example, so~$F$ does not give a tower of f\/ibrations~\eqref{eqn:fibrations} (see below).

In particular, it will be clear that the full linearisation functor, the principal construction of the paper, can \emph{not} be applied to such graded manifolds. From our perspective, such graded manifolds exhibit pathological behaviour and so we will not consider them in the remainder of this paper.

On the other hand, all our constructions will remain valid if we assume that the transformation functions in homogeneous coordinates are polynomial, so for purely even \emph{non-negatively graded manifolds} in the sense of~\cite{Voronov:2001qf}. The assumption about polynomiality is automatically sa\-tis\-f\/ied for graded bundles, so for simplicity we will work only with this class of graded manifolds.
\end{Remark}

\textbf{Arrangement.} In the remainder of this section we recall the basic theory of graded bundles, $n$-fold graded bundles and the linearisation functor. In Section~\ref{sec:GradedBundlesandHigherVectorBundles} we describe the full linearisation functor, its characterisation and the categorical equivalences it establishes. In Section~\ref{sec:Nmanifolds} the methods of the previous sections are slighted modif\/ied to cope with $N$-manifolds. The question of superisation of graded bundles via the full linearisation is addressed in Section~\ref{sec:superisation}.

\subsection[Graded bundles and $n$-fold graded bundles]{Graded bundles and $\boldsymbol{n}$-fold graded bundles}

An important class of graded manifolds are those that carry non-negative grading. For the moment we will consider only purely even manifolds explicitly, although the statements in this subsection generalise to the supercase. We will furthermore require that this grading is associated with a smooth action $h\colon \R\ti F\to F$ of the monoid $(\R,\cdot)$ of multiplicative reals on a mani\-fold~$F$; a \emph{homogeneity structure} in the terminology of~\cite{JG_MR_higher_vb}. This action reduced to $\R_{>0}$ is the one-parameter group of dif\/feomorphism integrating the \emph{weight vector field}, thus the weight vector f\/ield is in this case \emph{$h$-complete}~\cite{Grabowski2013} and only \emph{non-negative integer weights} are allowed. Thus the algebra $\mathcal{A}(F)\subset C^\infty(F)$ spanned by homogeneous functions is $\mathcal{A}(F) = \bigoplus_{i \in \mathbb{N}}\mathcal{A}^{i}(F)$, where $\mathcal{A}^{i}(F)$ consists of homogeneous function of degree~$i$.

Importantly, we have that for $t \neq 0$ the action $h_{t}$ is a dif\/feomorphism of~$F$ and, when $t=0$, it is a smooth surjection $\tau=h_0$ onto $F_{0}=M$, with the f\/ibres being dif\/feomorphic to~$\mathbb{R}^{N}$ (cf.~\cite{JG_MR_higher_vb}). Thus, the objects obtained are particular kinds of \emph{polynomial bundles} $\tau\colon F\to M$ (e.g.,~\cite{Bertram:2014, Voronov:2010_highr_alg}), i.e., f\/ibrations which locally look like $U\times\R^N$ and the change of coordinates (for a~certain choice of an atlas) are polynomial in $\R^N$. For this reason graded manifolds with non-negative weights \emph{and} $h$-complete weight vector f\/ields~$\zD$ are also known as \emph{graded bundles}~\cite{JG_MR_higher_vb}.

\begin{Example}\looseness=-1 If the weight is constrained to be either zero or one, then the weight vector f\/ield is precisely a vector bundle structure on~$F$ and will be generally referred to as an \emph{Euler vector field}.
\end{Example}
\begin{Example} The principle canonical example of a graded bundle is the higher tangent bundle~$\sT^{k}M$; i.e., the $k$-th jets (at zero) of curves $\gamma\colon \mathbb{R} \rightarrow M$. Given a~smooth function~$f$ on a~mani\-fold~$M$, one can construct functions $f^{(\alpha)}$ on $\sT^k M$, where $0\leq \alpha\leq k$, the so called \emph{$(\alpha)$-lifts} of $f$ (see~\cite{Morimoto_Lifts}). They are def\/ined by
\begin{gather*}
f^{(\alpha)}([\gamma]_k):= \left.\frac{\dd^\alpha}{\dd t^\alpha}\right|_{t=0} f(\gamma(t)).
\end{gather*}
where $[\gamma]_k\in \sT^k M$ is the class of the curve $\gamma\colon \R\rightarrow M$. The functions $f^{(k)}\colon \sT^k M\to \R$ and $f^{(1)}\colon \sT M\to \R$ are called the \emph{$k$-complete lift} and the \emph{tangent lift} of~$f$, respectively. A coordinate system~$(x^a)$ on $M$ gives rise to so called \emph{adapted coordinate systems} $\big(x^{a, (\alpha)}\big)_{0\leq \alpha \leq k}$ on $\sT^k M$ in which $x^{a, (\alpha)}$ is of degree~$\za$, i.e.,
\begin{gather*}\zD=\sum_{a,\za}\za x^{a, (\alpha)}\pa_{x^{a, (\alpha)}}.
\end{gather*}
\end{Example}

On a general graded bundle, one can always pick an atlas of $F$ consisting of charts for which we have homogeneous local coordinates $\big(x^{A},y_{w}^{i}\big)$, where $\w\!\big(x^{A}\big) =0$ and $\w(y_{w}^{i}) = w$ with $1\leq w\leq k$, for some $k \in \mathbb{N}$ known as the \emph{degree} of the graded bundle. Note that, according to this def\/inition, a graded bundle of degree $k$ is automatically a graded bundle of degree $l$ for $l\ge k$. However, there is always a \emph{minimal degree}.

It will be convenient to group all the coordinates with non-zero weight together, as these form a basis of the function algebra of the graded bundle. The index $i$ should be considered as a ``generalised index'' running over all the possible weights. The label $w$ in this respect largely redundant, but it will come in very useful when checking the validity of various expressions. The local changes of coordinates respect the weight and hence are polynomial for non-zero weight coordinates. A little more explicitly, the changes of local coordinates are of the form
\begin{gather*}
 x^{A'} = x^{A}(x), \qquad y^{i'}_{w} = \sum \frac{1}{n!} y^{j_{1}}_{w_{1}} y^{j_{2}}_{w_{2}} \cdots y^{j_{n}}_{w_{n}}T_{j_{n} \cdots j_{2} j_{1}}^{\:\:\:\:\:\:\:\:\:\:\:\: \:\:\:\: i'}(x), \end{gather*}
where $w=w_1+\cdots+w_n$ and we assume the tensors $T_{j_{n} \cdots j_{2} j_{1}}^{\:\:\:\:\:\:\:\:\:\:\:\: \:\:\:\: i'}$ to be symmetric in lower indices. Naturally, changes of coordinates are invertible and so, automatically, $\big(T_{j}^{\:\: i'}(x)\big)$ is an invertible matrix.

Importantly, a graded bundle of degree $k$ admits a sequence of surjections
\begin{gather}\label{eqn:fibrations}
F=F_{k} \stackrel{\tau^{k}_{k-1}}{\longrightarrow} F_{k-1} \stackrel{\tau^{k-1}_{k-2}}{\longrightarrow} \cdots \stackrel{\tau^{3}_2}{\longrightarrow} F_{2} \stackrel{\tau^{2}_1}{\longrightarrow}F_{1} \stackrel{\tau^{1}}{\longrightarrow} F_{0} = M,
\end{gather}
where $F_l$ itself is a graded bundle over $M$ of degree $l$, obtained from the atlas of $F_k$ by removing all coordinates of degree greater than~$l$ (see the next paragraph).

Note that $F_{1} \rightarrow M$ is a linear f\/ibration and the other f\/ibrations $F_{l} \rightarrow F_{l-1}$ are af\/f\/ine f\/ibrations in the sense that the changes of local coordinates for the f\/ibres are linear plus an additional additive terms of appropriate weight. The model f\/ibres here are~$\mathbb{R}^{n}$.

Morphisms between graded bundles necessarily preserve the weight; in other words, morphisms relate the respective homogeneity structures, or equivalently morphisms relate the respective weight vector f\/ields. Evidently, morphisms of graded bundles can be composed as standard maps between smooth manifolds and so we obtain the category of graded bundles. We will denote the category of graded bundles of degree $k$ by $\GrB[k]$.

\begin{Remark}
We will also encounter $N$-manifolds in this paper, which are very similar to graded bundles in many respects, and with hindsight are special examples of graded manifolds as def\/ined by Voronov \cite{Voronov:2001qf}. \v{S}evera \cite{Severa:2005} def\/ines an $N$-manifold as a supermanifold equipped with an action of $(\mathbb{R}, \cdot)$ such that $-1 \in \mathbb{R}$ acts as the parity operator (it f\/lips sign of any Grassmann odd coordinate). Roytenberg \cite{Roytenberg:2002} def\/ines an $N$-manifold in terms of an atlas of charts consisting of homogeneous coordinates for which coordinates of even weight are Grassmann even and coordinates of odd weight are Grassmann odd. It turns out that, under the additional assumption that even coordinates of non-zero degree are `cylindrical', both these notions of $N$-manifolds are equivalent, though the f\/irst (sketch) of a proof we know of is to be found in \cite{Bruce:2014c} and a complete proof can be found in \cite{Jozwikowski:2016}.
\end{Remark}

The notion of a double vector bundle \cite{Pradines:1974} (or an $n$-fold vector bundle \cite{Gracia-Saz:2009,Gracia-Saz:2012,Mackenzie:2005,Voronov:2012}) is conceptually clear in the graded language in terms of mutually commuting weight vector f\/ields; see \cite{Grabowski:2006,JG_MR_higher_vb}. This leads to the higher analogues known as \emph{$n$-fold graded bundles}, which are manifolds for which the structure sheaf carries an $\mathbb{N}^{n}$-grading such that all the weight vector f\/ields are $h$-complete and pairwise commuting. The local triviality of $n$-fold graded bundles means that we can always equip an $n$-fold graded bundle with an atlas such that the charts consist of coordinates that are simultaneously homogeneous with respect to the weights associated with each weight vector f\/ield (cf.~\cite{JG_MR_higher_vb}). If all the weight vector f\/ields are in fact Euler vector f\/ields, then we have an \emph{$n$-fold vector bundle}. The changes of local coordinates on an $n$-fold graded bundle must respect the weights. Similarly, morphisms between $n$-fold graded bundles respect the weights of the local coordinates.

We will denote a $k$-fold vector bundle as $\sD = (D; \Delta^1, \ldots, \Delta^k)$, where $\Delta^i$'s are Euler vector f\/ields on an underlying manifold $D$. We will often use the short hand `DVB' for double vector bundle. We shall use a similar notation for a $k$-fold graded bundle, but shall use the letter $\sF = (F; \Delta^1, \ldots, \Delta^k)$ to indicate that weights greater than one are allowed. We remark that the ordering of the weight vector f\/ields is a part of the def\/inition of a $k$-fold graded bundle and that morphisms of $k$-fold vector bundles respect this ordering. Thus, any permutation of the ordering of the weight vector f\/ields gives a dif\/ferent $k$-fold vector bundle structure on the same underlying manifold. The $k$-fold graded bundle $\sF$ gives rise to a number of graded bundles $\tau_i\colon F\rightarrow F_i$ def\/ined by pairs $(F; \Delta^i)$ called \emph{legs} of $\sF$, and a number of $(k-1)$-fold graded bundles $\sF_i = (F_i; \Delta^1, \ldots, \Delta^{i-1}, \Delta^{i+1}, \ldots, \Delta^k)$ which we call \emph{feet} of~$\sF$.

Important examples are: the \emph{iterated tangent bundle} $\sT^{(k)} M:=\sT\cdots\sT M$ (a $k$-fold vector bundle) and~$\sT^k\sT^l M$ (a double graded bundle).

By iterating the construction of $(\za)$-lifts, we obtain functions $f^{(\beta, \alpha)} := (f^{(\beta)})^{(\alpha)}$ on $\sT^k \sT^l M$ for $0\leq \alpha\leq k$, $0\leq \beta\leq l$, and, generally, functions $f^{(\ep_r, \ldots, \ep_1)}$ on $\sT^{n_1} \cdots \sT^{n_r} M$ for $0\leq \ep_j\leq n_j$, $1\leq j\leq r$. A coordinate system $(x^a)$ on $M$ gives rise to so called \emph{adapted coordinate sys\-tems}~$(x^{a, (\ep)})_\ep$ on $\sT^{(k)} M$, where the multi-index $\ep=(\ep_r, \ldots, \ep_1)$ is from $\{0,1\}^k$ and $x^{a, (\alpha)}$, $x^{a, (\ep)}$ are obtained from $x^a$ by the above lifting procedure.

\begin{Definition}\label{def:symmetricVB}
Consider a double vector bundle, which following classical notation we denote as $\sD=(D; A, B; M)$, where $A$ and $B$ are the side bundles. If we are given a vector bundle isomorphism $A \simeq B$ then $\sD$ is called \emph{balanced}. We may simply write $\sD=(D; A, A; M)$ but usually will avoid this. In more generality, a $k$-fold vector bundle~$\sD$ with legs $D\rightarrow D_i$, $i=1, \ldots, k$, is said to be a \emph{balanced $k$-fold vector bundle} if all feet $D_i$ are pairwise-isomorphic as $(k-1)$-fold vector bundles in some canonical way.
\end{Definition}

\subsection{Some constructions}
By saying that a vector f\/ield $\Delta$ on a manifold $F$ \emph{is a weight vector field} we mean that $F$ can be given a graded bundle structure which def\/ines $\Delta$ as its weight vector f\/ield. In particular, $\Delta$~has to be complete and the linear part of $\Delta$ in any of singular point of $\Delta$ has to have integer, non-negative eigenvalues. A important remark \cite[Remark~5.2]{JG_MR_higher_vb} is that the sum (unlike the dif\/ference) of two commuting weight vector f\/ields is again a weight vector f\/ield, hence a linear combination of weight vector f\/ields with non-negative integer coef\/f\/icients $a_s$, $\Delta = \sum_s a_s \Delta^s$ is a~weight vector f\/ield. We shall use this fact to set some further useful notation.

\begin{Definition}[removal coordinates of weight $> l$]\label{def:removal_coordinates1} Let $\sF = (F; \Delta)$ be a graded bundle.
Denote by $F[\Delta \leq l]$ the base manifold of the locally trivial f\/ibration def\/ined by taking the weight $>l$ coordinates with respect to this complementary weight to be the f\/ibre coordinates. We have a~natural projection that we will denote as
\begin{gather*}
\textnormal{p}^F_{[\Delta \leq l]}\colon \  F \rightarrow F[\Delta \leq l].
\end{gather*}
\end{Definition}
Note that, as the transformation rules of coordinates of weights $\le l$ involve only coordinates of weights $\le l$, the above def\/inition is correct. Since compositions of commuting homogeneity structures are homogeneity structures, we immediately get the following.
\begin{Proposition} Linear combinations of commuting weight vector fields with non-negative integer coefficients are weight vector fields. If at start $\sF = (F; \Delta^1, \ldots, \Delta^k)$ is a $k$-fold graded bundle and $\Delta=\sum a_s \Delta^s$, where $a_i\in\N$, then $F[\Delta \leq l]$ has the induced $k$-fold graded bundle structure such that $\textnormal{p}^F_{[\Delta \leq l]}$ is a $k$-fold graded bundle morphism.
\end{Proposition}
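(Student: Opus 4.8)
The plan is to verify the three assertions in turn, reducing everything to the already-cited fact \cite[Remark~5.2]{JG_MR_higher_vb} that a sum of two commuting weight vector fields is again a weight vector field, together with the observation that the corresponding homogeneity structures compose. First I would establish the first sentence: if $\Delta^1,\dots,\Delta^k$ are pairwise commuting weight vector fields and $a_1,\dots,a_k\in\N$, then $\Delta=\sum_s a_s\Delta^s$ is a weight vector field. By the quoted remark, $a_s\Delta^s$ is a weight vector field for each $s$ (rescaling a weight vector field by a non-negative integer rescales the integer eigenvalues, keeping them non-negative integers; equivalently the homogeneity structure $h_t$ is replaced by $h_{t^{a_s}}$). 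Since the $\Delta^s$ commute, so do the rescaled fields $a_s\Delta^s$, and then an induction on $k$ using the remark gives that the partial sums $\sum_{s\le j}a_s\Delta^s$ are weight vector fields; the case $j=k$ is the claim. Concretely, at the level of homogeneity structures, $\Delta$ is the weight vector field of the composed $(\R,\cdot)$-action $t\mapsto h^1_{t^{a_1}}\circ\cdots\circ h^k_{t^{a_k}}$, which is a genuine monoid action because the $h^s$ commute.

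Next I would treat the induced $k$-fold graded bundle structure on $F[\Delta\le l]$. By Definition~\ref{def:removal_coordinates1}, $F[\Delta\le l]$ is the base of the fibration obtained by declaring the coordinates of $\Delta$-weight $>l$ to be fibre coordinates, and the remark following that definition notes this is well-defined because the coordinate changes among weight-$\le l$ coordinates involve only weight-$\le l$ coordinates. I would pick a simultaneously homogeneous atlas for $\sF$ — this exists by the local triviality of $k$-fold graded bundles recalled in the text — so that each coordinate carries a multiweight $(w_1,\dots,w_k)$, hence a $\Delta$-weight $\sum_s a_s w_s$. The coordinates surviving in $F[\Delta\le l]$ are exactly those with $\sum_s a_s w_s\le l$; since each $\Delta^s$ has non-negative weights and the $a_s$ are non-negative, this set of multiweights is ``downward closed'' in each coordinate direction in the relevant sense, so the restriction of each $\Delta^s$ to these coordinates is still a well-defined vector field on $F[\Delta\le l]$ with the same (homogeneous) local form. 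Each restricted $\Delta^s$ is still complete with non-negative integer eigenvalues — because $F[\Delta\le l]$ is itself a graded bundle for $\Delta^s$, being a base of a fibration compatible with the $\Delta^s$-grading — so it is a weight vector field on $F[\Delta\le l]$, and the restricted fields still pairwise commute. Hence $(F[\Delta\le l];\Delta^1,\dots,\Delta^k)$ is a $k$-fold graded bundle.

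Finally, that $\textnormal{p}^F_{[\Delta\le l]}$ is a $k$-fold graded bundle morphism amounts to checking it relates each $\Delta^s$ on $F$ to the corresponding restricted $\Delta^s$ on $F[\Delta\le l]$; but in the adapted atlas the projection is simply ``forget the weight-$>l$ coordinates'', and each $\Delta^s$ acts coordinate-wise, so the intertwining $\mathrm{p}_*\Delta^s_F=\Delta^s_{F[\Delta\le l]}$ is immediate. The one point requiring a little care — and the place where I expect the only genuine work — is confirming that the restricted $\Delta^s$ really is a weight vector field on $F[\Delta\le l]$ rather than merely a homogeneous vector field: this is exactly the statement that $F[\Delta\le l]$ is a graded bundle for $\Delta^s$, i.e.\ that the $\Delta^s$-homogeneity structure descends along $\textnormal{p}^F_{[\Delta\le l]}$. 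This follows because $h^s$ and the $\Delta$-homogeneity structure commute (both being built from the commuting $h^1,\dots,h^k$), so $h^s$ preserves the equivalence relation defining $F[\Delta\le l]$ and thus descends; equivalently, one invokes directly the preceding sentence in the text, ``compositions of commuting homogeneity structures are homogeneity structures,'' applied to the pair $(h^s,\text{the }\Delta\text{-action})$. With this in hand all three claims are established.
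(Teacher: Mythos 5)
Your argument is correct and takes essentially the same route as the paper, which states no separate proof but derives the proposition ``immediately'' from the preceding observation that compositions of commuting homogeneity structures are homogeneity structures, together with the cited fact that a sum of two commuting weight vector fields is a weight vector field. You have merely made explicit the coordinate-level verifications (non-negativity of the $\Delta$-weights ensuring the surviving coordinates transform among themselves, and the descent of each $h^s$) that the paper leaves implicit.
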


The next construction is in a sense dual and depends on putting to zero coordinates of negative weights with respect to the vector f\/ield $X=\sum a_s \Delta^s$. Here, we assume that $a_s\in \Zet$ can be negative (see Remark \ref{rem:non_integer_weights}).
\begin{Definition}[putting to zero coordinates of $X$-negative weights]\label{def:removal_coordinates2} Consider a $k$-fold graded bundle $(F; \Delta^1, \ldots, \Delta^k)$ and let $X=\sum_s a_s \Delta^s$ be a linear combination of weight vector f\/ields of~$F$ with integer coef\/f\/icients. Def\/ine locally a submanifold $F[X\geq 0] \subset F$ as
\begin{gather*}
 F[X\geq 0]:= \bigg\{\big(y^i_w\big)\colon y^i_w=0 \  \text{for all} \  w=(w_1, \ldots, w_k)\  \text{such that}\  \sum_s a_s w_s < 0\bigg\}.
\end{gather*}
\end{Definition}

Note f\/irst that $F[X\geq 0]$ is a well-def\/ined submanifold of $F$. Indeed, set $X$-weight $\w_X(f)$ for the weight with respect to the vector f\/ield $X$ of a homogeneous function $f$, thus $\w_X(y^i_w) = \sum_s a_s w_s$ if $w=(w_1, \ldots, w_s)$, and consider another coordinate system $(y^{i'})$. Since coordinate changes preserve $\w_X$ and $y^{i'}$ is a linear combination of monomials of type $y^{i_1}\cdots y^{i_j}$, we see that if $\w_X(y^{i'})$ is negative then at least one of the weight $\w_X(y^{i_1})$, $\w_X(y^{i_2}), \ldots, \w_X(y^{i_j})$ has to be negative, and thus $y^{i'}$ vanishes on $F[X\geq 0]$, hence the latter is invariantly def\/ined. As the weight vector f\/ields $\Delta^1, \ldots, \Delta^k$ preserve the $X$-weight, they are tangent to the submanifold $F[X\geq 0]$ which therefore has the induced $k$-fold graded bundle structure.

Considering $-X$ instead of $X$, we obtain graded subbundles $F[-X\ge 0]$, thus we can put to zero invariantly all coordinates of positive $X$-weight, and, eventually, graded subbundle \begin{gather*}
F[X=0] = F[X\geq 0] \cap F[-X\geq 0].
\end{gather*}
Is is clear that $F[X=0]$ consist of singular points of the vector f\/ield $X = \sum a_s \Delta^s$.

\begin{Remark}\label{rem:non_integer_weights} Formally we could admit even irrational coef\/f\/icient in $X$ to def\/ine $F[X\geq 0]$ or $F[X=0]$ but this does not enlarge possible constructions, e.g., if $\Delta^1 = x \pa_x$, and $\Delta^2 = y\pa_y$ are weight vector f\/ields of a double graded bundle $F$ then $F[x\pa_x + \sqrt{2} y \pa_y\geq 0]$ is the same as $F[x\pa_x + q y \pa_y\geq 0]$ for some rational number $q$ close to $\sqrt{2}$ since there is only a f\/inite number of weights assigned to coordinates; moreover, $F[X\geq 0] = F[k X\geq 0]$, $k\neq 0$, so we can forget about non-integer coef\/f\/icients in $X$.
\end{Remark}

\begin{Example}
If $(F, \Delta)$ is a graded bundle, then $F[\Delta=0]$ is the base of $F$.
\end{Example}
\begin{Example} \label{exmpl:core} If $\sD = (D; \Delta^1, \Delta^2)= (D; A, B; M)$ is a double vector bundle, then $C:=\sD[\Delta^1-\Delta^2=0]$ is the core bundle of $\sD$. Besides, $\sD[\Delta^1-\Delta^2\geq 0]$ is the kernel of the vector bundle morphism $D\rightarrow B$, so the intersection $\sD[\Delta^1-\Delta^2\geq 0]\cap \sD[\Delta^2-\Delta^1\geq 0]$ coincides with the core bundle.
\end{Example}

\begin{Remark} The projection $\textnormal{p}^F_{[\Delta \leq l]}$ from Def\/inition \ref{def:removal_coordinates1} and the inclusions
$F[X=0] \rightarrow F$ and $F[X\geq 0] \rightarrow F$ are functorial. We shall use this fact later on.
\end{Remark}

\subsection{The na\"{\i}ve approach to superisation}
Let us brief\/ly discuss the problem of the `superisation' of a graded bundle via the explicit example of the degree 2 case: this will be suf\/f\/icient to outline the problem. Consider $F_{2}$ equipped with local coordinates $\big(x^{A},~y^{a},~z^{i}\big)$ of weight $0$, $1$ and $2$, respectively. The admissible changes of local coordinates are of the form
\begin{gather*}
x^{A'} = x^{A'}(x), \qquad y^{a'} = y^{b}T_{b}^{\:\:\: a'}(x), \qquad z^{i'} = z^{j}T_{j}^{\:\:\: i'}(x) + \frac{1}{2}y^{a}y^{b}T_{ba}^{\:\:\:\: i'}(x),
 \end{gather*}
where we take $T_{ba}^{\:\:\:\: a'}$ to be symmetric in the lower indices.

Any meaningful notion of `superisation' should be in terms of an invertible functor and thus establish a categorical equivalence between the category of graded bundles and some subcategory of the category of supermanifolds. Let us by brute force `superise'~$F_2$ by declaring the Grassmann parity of the coordinates to be equal to the weight. However, notice that this procedure does not work properly as the changes of coordinates on~``$\Pi F_{2}$'' would not contain the tensor $T_{ba}^{\:\:\:\: i'}$ due to symmetry of the tensor in the lower indices. Therefore, some information about $F_2$ is lost and so we are unable to recover~$F_2$  from such a `superisation'. This example illustrates a `no-go theorem'.

\emph{There is no canonical nontrivial direct procedure for constructing a one-to-one correspondence between graded bundles and $N$-manifolds.}

 That is, there is no direct analogue of the parity reversion function $\Pi$ for graded bundles. One cannot in general simply declare (maybe up to signs) some coordinates to be Grassmann odd. However, that is not to say that one can never superise graded bundles. An obvious exception here are $k$-fold graded bundles for which one or more of the graded structures is linear. In such cases one can apply the standard parity reversion functor; a great example here are $k$-fold vector bundles or graded-linear bundles.

\subsection{The linearisation of a graded bundle}
 In this subsection we shall recall the construction of the \emph{linearisation functor} given in~\cite{Bruce:2014}. In order to avoid some notational clashes, we have changed some of the notation as compared with previous works.

The tangent bundle $\sT F_{k}$ of a graded bundle naturally has the structure of a double graded bundle. The f\/irst weight vector f\/ield is simply the \emph{tangent lift} \cite{Grabowski:1995,Yano:1973} $\Delta^{1} := \rmd_{\sT}\Delta_{F_{k}}$ of the weight vector f\/ield $\Delta_{F_{k}}$ on $F_{k}$, and the second being the natural Euler vector f\/ield on the tangent bundle $\Delta^{2} := \Delta_{\sT F_{k}}$. The tangent bundle of a~graded bundle is a canonical example of a graded-linear bundle, i.e., a graded bundle equipped with a compatible linear structure on the total space.

\begin{Definition}[\cite{Bruce:2014}] A double graded bundle $\sF = (F; \Delta^{1}, \Delta^{2})$ such that~$\zD^1$ is of degree $k-1$ and $\zD^2$ is of degree~1, i.e., $\Delta^{2}$ is an Euler vector f\/ield, will be referred to as a \emph{graded-linear bundle}, or for short a $\GrL$-bundle.
\end{Definition}

It follows that $\sF$ is of total weight $\le k$ with respect to $\Delta := \Delta^{1} + \Delta^{2}$ and a vector bundle structure
\begin{gather*}
\textnormal{p}^{F}_{[\Delta^{2} \leq 0]}\colon \  F \rightarrow F\big[\Delta^{2} \leq 0\big].
\end{gather*}
with respect to the projection onto the submanifold $F[\Delta^{2} \leq 0]$ which inherits a graded bundle structure of degree $k-1$.

\begin{Example} Consider the vertical bundle with respect to the projection $\tau\colon F_{k} \rightarrow M$. The weight vector f\/ields on the vertical bundle $\sV F_{k} \subset \sT F_{k}$ are simply the appropriate restrictions of those on the tangent bundle. \emph{Via} passing to the total weight, we can view~$\sV F_{k}$ as a graded bundle of degree $k+1$. However, it will be useful to shift the f\/irst component of bi-weight to allow us to consider the vertical bundle as a graded bundle of degree $k$. We will always consider this shifted weight when encountering a vertical bundle of a~graded bundle. In other words, if we take $\Delta^1\in \XX(\sT F_k)$ the tangent lift of the weight vector f\/ield and $\Delta^2\in \XX(\sT F_k)$ the Euler vector f\/ield associated with the tangent bundle structure on $F_k$ we get
\begin{gather*}
\sV F_k = \sT F_k\big[\Delta^1-\Delta^2 \geq 0\big],
\end{gather*}
equipped with the weight vector f\/ield $\Delta^1_{\sV F_k}:= (\Delta^1-\Delta^2)|_{\sV F_k}$ of degree $k$ and the Euler vector f\/ield $\Delta^2|_{\sV F_k}$. It is crucial here that, although it is not a combination with non-negative coef\/f\/icients, $\Delta^1-\Delta^2$ is still a weight vector f\/ield on $\sV F_k$.
\end{Example}

\begin{Example}
 Consider a degree two graded bundle $F_{2}$ which we equip with homogeneous local coordinates $(x,y,z)$ of weight $0$, $1$ and $2$, respectively. Taking the tangent functor combined with the necessary shift in the weight, produces homogeneous local coordinates
 \begin{gather*}
 \Big(\underbrace{x}_{(0,0)}, \hspace{5pt} \underbrace{y}_{(1,0)}, \hspace{5pt} \underbrace{z}_{(2,0)}; \hspace{5pt} \underbrace{\dot{x}}_{(-1,1)},\hspace{5pt}\underbrace{\dot{y}}_{(0,1)}, \hspace{5pt} \underbrace{\dot{z}}_{(1,1)}\Big).
 \end{gather*}
The removal of the coordinate $\dot{x}$ is well def\/ined (see Def\/inition~\ref{def:removal_coordinates2}). Doing so produces a double graded bundle which we recognize as the standard vertical bundle of $F_{2}$.
 \end{Example}

The \emph{linearisation functor} takes a graded bundle and produces a $\GrL$-bundle. The basic idea is to mimic the canonical embedding
\begin{gather*}
\sT^{k}M \hookrightarrow \sT\big(\sT^{k-1}M\big),
 \end{gather*}
which sends a point in $\sT^k M$ represented by a curve $t \mapsto \gamma(t)$ to a vector tangent to $\sT^{k-1} M$ represented by the curve $s\mapsto [t\mapsto \gamma(s+t)]_{k-1}\in \sT^{k-1} M$. The linearisation of a~graded bundle~$F_{k}$ is then viewed as a reduction of the tangent bundle~$\sT F_{k}$.
\begin{Definition}[\cite{Bruce:2014}]\label{def:linearisation}
The \emph{linearisation of a graded bundle} $F_{k}$ is the $\GrL$-bundle def\/ined as
\begin{gather*}
\pLinr (F_{k}) := \sV F_{k}\big[\Delta_{\sV F_{k}}^{1} \leq k-1\big],
\end{gather*}
so that we have the natural projection $\textnormal{p}^{\sV F_{k}}_{\pLinr(F_{k})}\colon \sV F_{k} \rightarrow \pLinr(F_{k})$. With a small abuse of notation, we would write $\pLinr (F_k) = \sT F_k[0\leq \Delta^1 - \Delta^2 \leq k-1]$, i.e., we get rid of coordinates on~$\sT F_k$ of negative weights and of weight~$k$ with respect to the vector f\/ield $\Delta^1 - \Delta^2$.
\end{Definition}

\begin{Theorem}[\cite{Bruce:2014}]\label{theom:functorial linearisation}
Linearisation is a functor from the category of graded bundles to the category of double graded bundles.
\end{Theorem}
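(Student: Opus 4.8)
The strategy I would use is to recognise $\pLinr$ as a composition of three functors and to isolate the single non-formal step. By Definition~\ref{def:linearisation} one has $\pLinr(F_k)=\sT F_k[0\le\Delta^1-\Delta^2\le k-1]$, where $\Delta^1:=\rmd_{\sT}\Delta_{F_k}$ is the tangent lift of the weight vector field and $\Delta^2:=\Delta_{\sT F_k}$ is the Euler vector field of the tangent bundle. This is produced in three stages: first apply the tangent functor $F_k\mapsto(\sT F_k;\Delta^1,\Delta^2)$; then pass to the graded subbundle $\sV F_k=\sT F_k[\Delta^1-\Delta^2\ge 0]$ of Definition~\ref{def:removal_coordinates2}; and finally apply the projection $\textnormal{p}^{\sV F_k}_{\pLinr(F_k)}$ onto $\sV F_k[\Delta^1_{\sV F_k}\le k-1]$ of Definition~\ref{def:removal_coordinates1} (here $\Delta^1_{\sV F_k}=(\Delta^1-\Delta^2)|_{\sV F_k}$ is a genuine weight vector field on $\sV F_k$). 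By the functoriality of these two constructions recorded earlier (see the Remark following Example~\ref{exmpl:core}), both $F[X\ge 0]\hookrightarrow F$ and $\textnormal{p}^F_{[\Delta\le l]}$ are functorial on $k$-fold graded bundles; so it suffices to show that $F_k\mapsto(\sT F_k;\Delta^1,\Delta^2)$, $\phi\mapsto\sT\phi$, is a well-defined functor from $\GrB[k]$ to the category of double graded bundles. Then $\pLinr$ is a composite of three functors, hence a functor, and in particular preserves identities and composition.

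On objects there is essentially nothing new: recall that the tangent bundle $\sT F_k$ of a graded bundle carries the canonical double graded bundle structure $(\sT F_k;\Delta^1,\Delta^2)$, and a glance at adapted coordinates $(x^A,y^i_w,\dot x^A,\dot y^i_w)$ shows that the $(\Delta^1-\Delta^2)$-weights are $0$, $w$, $-1$, $w-1$ respectively. Deleting the negative ones ($\dot x^A$) and then those of weight $k$ ($y^i_k$) leaves $x^A$, the $y^i_w$ with $w\le k-1$ and the $\dot y^i_w$ with $w\le k$, on which $\Delta^1-\Delta^2$ restricts to a weight vector field of degree $k-1$ and $\Delta^2$ restricts to an Euler vector field; thus $\pLinr(F_k)$ is a $\GrL$-bundle, a particular double graded bundle (this is precisely the content of \cite{Bruce:2014}).

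The heart of the argument is that for a morphism $\phi\colon F_k\to F_k'$ of graded bundles the map $\sT\phi\colon\sT F_k\to\sT F_k'$ is a morphism of double graded bundles, that is, it relates $\Delta^2$ with $\Delta_{\sT F_k'}$ and $\Delta^1$ with $\rmd_{\sT}\Delta_{F_k'}$. The first is standard: $\sT\phi$ is fibrewise linear over $\phi$, which is exactly the condition to intertwine the two Euler vector fields. For the second I would invoke the naturality of the complete (tangent) lift of vector fields: since $\phi$ is a graded bundle morphism it relates $\Delta_{F_k}$ and $\Delta_{F_k'}$, and if $\Phi_s$, $\Phi_s'$ denote their (local) flows, then $\phi\circ\Phi_s=\Phi_s'\circ\phi$ implies $\sT\phi\circ\sT\Phi_s=\sT\Phi_s'\circ\sT\phi$; as $\sT\Phi_s$, $\sT\Phi_s'$ are the flows of $\rmd_{\sT}\Delta_{F_k}$, $\rmd_{\sT}\Delta_{F_k'}$, differentiating at $s=0$ gives the required $\sT\phi$-relation (equivalently, one can verify it directly in adapted coordinates using the coordinate formula for $\rmd_{\sT}$). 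Granting this, $\sT\phi$ is a double graded bundle morphism; by the functoriality of $F[X\ge 0]\hookrightarrow F$ it restricts to $\sV\phi\colon\sV F_k\to\sV F_k'$, which still relates $\Delta^1_{\sV F_k}$ with $\Delta^1_{\sV F_k'}$ and $\Delta^2$ with $\Delta^2$; by the functoriality of $\textnormal{p}^F_{[\Delta\le l]}$ it then descends along the projections to $\pLinr(\phi)\colon\pLinr(F_k)\to\pLinr(F_k')$. Preservation of identities and composition is inherited stagewise from $\sT$, $F[X\ge 0]$ and $F[\Delta\le l]$. The one step I would write out carefully is precisely the naturality of the tangent lift — that $\sT\phi$ genuinely intertwines $\Delta^1$ and $\rmd_{\sT}\Delta_{F_k'}$ — since everything else is bookkeeping with weights together with the functoriality of the two removal constructions already established.
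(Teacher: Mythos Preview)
Your proposal is correct and is essentially a detailed elaboration of what the paper says. The paper does not actually prove this theorem here: it simply cites \cite{Bruce:2014} for a full proof and remarks in one sentence that ``the functorial properties of the linearisation follow from the fact that it is constructed from the tangent functor.'' Your decomposition of $\pLinr$ as the composite of the tangent functor, the graded-subbundle construction $F\mapsto F[X\ge 0]$, and the quotient construction $F\mapsto F[\Delta\le l]$ makes this remark precise, and your identification of the naturality of the tangent lift of vector fields as the one substantive step is exactly right.
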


A full proof of the above theorem can be found in \cite{Bruce:2014}. Essentially, the functorial properties of the linearisation follow from the fact that it is constructed from the tangent functor.

Let us brief\/ly describe the local structure of the linearisation. Consider $F_{k}$ equipped with local coordinates $\big(x^{A}, y_{w}^{a}, z^{i}_{k}\big)$, where the weights are assigned as $\w\big(x^A\big)=0$, $\w(y^a_{w}) = w$ $(1\leq w < k)$ and $\w(z^i_k) =k$. It will be convenient to single out the highest weight coordinates, as well as the zero weight coordinates, and so we change our notation slightly. In any natural homogeneous system of coordinates on the vertical bundle $\sV F_{k}$, one projects out the highest weight coordinates on~$F_{k}$ to obtain~$\pLinr(F_{k})$. Thus, on $\pLinr(F_{k})$ we have local homogeneous coordinates
\begin{gather*}
\Big(\underbrace{x^{A}}_{(0,0)}, \hspace{5pt} \underbrace{y^{a}_{w}}_{(w,0)}; \hspace{5pt} \underbrace{\dot{y}^{b}_{w}}_{(w-1,1)}, \hspace{5pt} \underbrace{\dot{z}_{k}^{i}}_{(k-1,1)}\Big).
 \end{gather*}
Note that with this assignment of the weights, the linearisation of a graded bundle of degree~$k$ is itself a graded bundle of degree~$k$ when passing from the bi-weight to the total weight. It is important to note that the linearisation of a graded bundle has the structure of a vector bundle $\pLinr(F_{k}) \rightarrow F_{k-1}$, hence the nomenclature ``linearisation''. The vector bundle structure is clear from the construction by examining the bi-weight. The second leg of $\GrL$-bundle $\pLinr(F_{k})$ is a degree $k-1$ graded bundle with base~$F_1$:
\begin{gather*}
\xymatrix{
\pLinr(F_{k}) \ar[rr]\ar[d] && F_1 \ar[d] \\
F_{k-1} \ar[rr] && M.
}
\end{gather*}
The basic properties of the linearisation functor are summarized in the following.
\begin{Proposition}\label{prop:pLin}\quad
\begin{enumerate}[$(a)$]\itemsep=0pt
\item \label{prop_pL:item_pullback} There exist a canonical vector bundle morphism $p=\textnormal{p}^{\sV F_{k}}_{\pLinr(F_{k})} \colon \sV F_{k} \rightarrow \pLinr(F_{k})$ covering the projection $\tau^k_{k-1}\colon F_k\rightarrow F_{k-1}$ such that $\sV F_{k}$ is the pullback of the vector bundle $\pLinr(F_{k})\rightarrow F_{k-1}$ by $p$:
 \begin{gather*}
\xymatrix{
p^* \pLinr(F_{k}) \simeq \sV F_{k}\ar[rr]^p\ar[d] && \pLinr(F_{k}) \ar[d] \\
F_k \ar[rr]^{\tau^k_{k-1}} && F_{k-1}.
}
\end{gather*}
\item \label{prop_pL:itemTkM} The linearisation $\pLinr(\sT^k M)$ of the $k$-th order tangent bundle of $M$ is isomorphic with $\sT \sT^{k-1} M$ as $\GrL$-bundle. In particular, there is a canonical epimorphism $ p\colon \sV \sT^k M = \ker (\sT \tau^k_M\colon \sT\sT^k M $ $\rightarrow \sT M) \twoheadrightarrow \sT\sT^{k-1} M$. In coordinates,
 \begin{gather*}
 p^*\big(x^{a, (\alpha)}\big) = x^{a, (\alpha)}, \qquad p^*\big(dx^{a, (\beta-1)}\big) = dx^{a, (\beta)},
 \end{gather*}
 where $0\leq \alpha\leq k-1$, $1\leq \beta \leq k$.
\item \label{prop_pL:item embedding} There is a canonical $($total$)$ weight preserving embedding
\begin{gather*}
\iota \colon \ F_{k} \hookrightarrow \pLinr(F_{k})
\end{gather*}
given by the composition of the weight vector field $\Delta_{F_k} \in \Vect(F_k)$ considered as a section of~$\sV F_{k}$ with the natural projection $p\colon \sV F_{k} \rightarrow \pLinr(F_{k})$. In natural local coordinates, the nontrivial part of the embedding is given by
\begin{gather*}
\iota^{*}(\dot{y}^{a}_{w}) = w   y^{a}_{w}, \qquad \iota^{*}(\dot{z}_{k}^{i}) = k  z_{k}^{i}.
\end{gather*}
\item \label{prop_pL:item_pl_tau} The induced morphism $\pLinr(\tau^k_{k-r})\colon \pLinr(F_k)\rightarrow \pLinr(F_{k-r})$ coincides with $\pLinr(F_k) \rightarrow \pLinr(F_k)[\Delta^1_{\pLinr(F_{k})} \leq k-r-1]$, where $\tau^k_{k-r} = p^{F_k}_{[\Delta_{F_k}\leq k-r]}\colon  F_k\to F_{k-r}$ is the canonical projection on a lower graded bundle.
\item \label{prop_pL:item_injective} Given an injective graded bundle morphism $\phi\colon F_k\rightarrow F_k'$, the associated morphism $\pLinr(\phi)$ is injective, as well.
\end{enumerate}
\end{Proposition}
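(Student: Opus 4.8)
The plan is to establish all five items by working in natural homogeneous coordinates, tracking how bi-weights transform under the tangent functor and under the coordinate-removal operations of Definitions~\ref{def:removal_coordinates1}--\ref{def:removal_coordinates2}, and then invoking functoriality of $\sT$ (Theorem~\ref{theom:functorial linearisation}) and of those coordinate-removal operations (which are functorial, as recorded earlier) to upgrade the local computations to functorial statements. In this way items~\ref{prop_pL:item_pullback}--\ref{prop_pL:item_pl_tau} are essentially bookkeeping; item~\ref{prop_pL:item_injective} is where the real content lies.

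For~\ref{prop_pL:item_pullback}: in adapted coordinates $\sV F_k$ is the vector bundle over $F_k$ with fibre coordinates $\dot y^a_w,\dot z^i_k$, and by Definition~\ref{def:linearisation} the map $p$ is exactly the projection forgetting the weight-$k$ base coordinates $z^i_k$ and retaining everything else. One then observes that the transformation laws of $\dot y^a_w$ and $\dot z^i_k$, obtained from the law of $y^{i'}_w$ by applying the vertical (Euler) derivative, involve only $x$, the $y$'s of weight $<k$, and the $\dot y$'s and $\dot z$'s, never $z^i_k$, because differentiating a weight-$k$ monomial strips off a factor of positive weight. Hence the fibre transformation laws of $\pLinr(F_k)\to F_{k-1}$ coincide literally with those of $\sV F_k\to F_k$, which is precisely the assertion that $\sV F_k$ is the fibre product $F_k\times_{F_{k-1}}\pLinr(F_k)$; that $p$ covers $\tau^k_{k-1}$ is built into the construction. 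Item~\ref{prop_pL:item embedding} then follows: $\Delta_{F_k}$ is $\tau$-vertical (it has no $\partial_{x^A}$-component), hence a section $F_k\to\sV F_k$, and $\iota:=p\circ\Delta_{F_k}$ has the stated coordinate form; from this $\iota^{*}$ is onto $C^{\infty}(F_k)$ (one recovers $z^i_k=\frac1k\iota^{*}(\dot z^i_k)$ and $y^a_w=\iota^{*}(y^a_w)$), so $\iota$ is a closed embedding, it preserves the total weight because $z^i_k$ has weight $k$ while $\dot z^i_k$ has bi-weight $(k-1,1)$, and it is a graded bundle morphism because it intertwines the total weight vector fields.

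For~\ref{prop_pL:itemTkM}: with adapted coordinates $x^{a,(\alpha)}$ of weight $\alpha$ on $\sT^kM$, the shifted bi-weights on $\sV\sT^kM$ are $\w(x^{a,(\alpha)})=(\alpha,0)$ and $\w(dx^{a,(\beta)})=(\beta-1,1)$, so deleting the weight-$k$ coordinate $x^{a,(k)}$ (Definition~\ref{def:linearisation}) leaves $x^{a,(\alpha)}$ for $0\le\alpha\le k-1$ together with $dx^{a,(\beta)}$ for $1\le\beta\le k$. On $\sT\sT^{k-1}M$, regarded as a $\GrL$-bundle, one has $x^{a,(\alpha)}$ of bi-weight $(\alpha,0)$ and $dx^{a,(\alpha)}$ of bi-weight $(\alpha,1)$ for $0\le\alpha\le k-1$; thus $x^{a,(\alpha)}\mapsto x^{a,(\alpha)}$, $dx^{a,(\beta-1)}\mapsto dx^{a,(\beta)}$ matches bi-weights, and the fact that it intertwines the transition functions is the coordinate incarnation of the canonical inclusion $\sT^kM\hookrightarrow\sT\sT^{k-1}M$ recalled above (along which the tangent lift of the function $f^{(\alpha)}$ on $\sT^{k-1}M$ restricts to $f^{(\alpha+1)}$). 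This yields the $\GrL$-isomorphism, and composing it with the $p$ of item~\ref{prop_pL:item_pullback} yields the epimorphism with the stated coordinate formulas. For~\ref{prop_pL:item_pl_tau}: both $\pLinr(\tau^k_{k-r})$ and the projection onto $\pLinr(F_k)[\Delta^1_{\pLinr(F_k)}\le k-r-1]$ are instances of the functorial operation of Definition~\ref{def:removal_coordinates1}, and a bi-weight count shows that the surviving coordinates of the latter's target --- namely $x^A$, the $y^a_w$ with $w<k-r$, and the $\dot y^a_w$ with $w\le k-r$ --- are precisely the coordinates of $\pLinr(F_{k-r})$; so the two targets are canonically identified and the two morphisms agree coordinate-wise, and functoriality finishes it.

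Finally~\ref{prop_pL:item_injective}. Since $\pLinr(\phi)$ is a vector bundle morphism $\pLinr(F_k)\to\pLinr(F_k')$ covering the graded bundle morphism $\phi_{k-1}\colon F_{k-1}\to F_{k-1}'$ induced by $\phi$, it is injective if and only if $\phi_{k-1}$ is injective and $\pLinr(\phi)$ is injective on each fibre; moreover, from the coordinate formula one reads off that the fibre map is block-triangular for the weight filtration, with diagonal blocks the ``leading-part'' matrices $T^{i'}_i$ of $\phi$. Both conditions would then follow from the structural statement that an injective morphism $\phi$ of graded bundles has (a) injective base map $\phi_0$ and (b) injective leading parts $T^{i'}_i$ at every weight: (a) holds because $\phi$ intertwines the homogeneity structures and hence maps the zero section --- the singular locus of the weight vector field --- injectively into the zero section; granting (a) and (b), the easy implication ``injective base map and injective leading parts imply injectivity'' (an induction up the tower of affine fibrations) shows that every induced $\phi_l$, in particular $\phi_{k-1}$, is injective, while (b) alone makes the block-triangular fibre map injective. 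The one genuinely delicate point, which I expect to be the main obstacle, is the implication that injectivity of $\phi$ forces injectivity of the leading parts $T^{i'}_i$; I would prove this by induction on the degree, restricting $\phi$ to the fibres of $\tau\colon F_k\to M$ and exploiting the tower of affine fibrations, the inductive step --- ruling out a kernel of $T^{i'}_i$ at an intermediate weight --- being where the actual effort is concentrated.
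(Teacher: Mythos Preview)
Your treatment of items~\ref{prop_pL:item_pullback}, \ref{prop_pL:item embedding} and~\ref{prop_pL:item_pl_tau} is correct and matches the paper, which simply declares these to follow ``directly'' or be ``very easy'' and refers elsewhere for~\ref{prop_pL:item embedding}.

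For item~\ref{prop_pL:itemTkM} your argument is correct but genuinely different from the paper's. You identify $\pLinr(\sT^kM)$ with $\sT\sT^{k-1}M$ by a direct bi-weight and transition-function match, invoking the canonical inclusion $\sT^kM\hookrightarrow\sT\sT^{k-1}M$. The paper instead gives a geometric argument: it uses the canonical \emph{almost tangent structure} $J$ of order $k$ on $\sT\sT^kM$ (the nilpotent endomorphism $J\,\partial_{x^{A,(\alpha)}}=\partial_{x^{A,(\alpha+1)}}$), observes that $\operatorname{im}J=\sV\sT^kM$ and $\ker J=\operatorname{im}J^k$, and then factors $\sT\tau^k_{k-1}\colon\sT\sT^kM\to\sT\sT^{k-1}M$ through $J$ to obtain the epimorphism $\sV\sT^kM\twoheadrightarrow\sT\sT^{k-1}M$, which it then checks agrees with $p$. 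Your route is more elementary and self-contained; the paper's route explains \emph{why} the shift $dx^{(\beta-1)}\mapsto dx^{(\beta)}$ is canonical rather than an accident of coordinates.

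For item~\ref{prop_pL:item_injective} you go well beyond the paper, which asserts only that it is ``very easy to prove directly''. Your block-triangular analysis is sound, and you are right to flag as the crux the implication ``$\phi$ injective $\Rightarrow$ leading parts $T^{\,i'}_i$ injective''. Note, however, that the only use the paper makes of~\ref{prop_pL:item_injective} is for the canonical embedding $\iota\colon F_k\hookrightarrow\sT^kF_k$, which is an embedding of graded subbundles; in that situation one may choose adapted coordinates so that $\phi$ is a coordinate inclusion, and then the injectivity of $\pLinr(\phi)$ is immediate. So the ``delicate point'' you isolate is real for arbitrary injective morphisms, but is not needed for the paper's applications.
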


\begin{proof} We shall only give a geometrical explanation of~\eqref{prop_pL:itemTkM}. Another proof can be found in the arXiv preprint version of~\cite{Bruce:2014b}. The statement~\eqref{prop_pL:item_pullback} follows directly from the def\/inition of the linearisation functor as both vector bundles in interest are of the same rank. The statements~\eqref{prop_pL:item_pl_tau} and~\eqref{prop_pL:item_injective} are very easy to proof directly, while~\eqref{prop_pL:item embedding} was carefully proved in~\cite{Bruce:2014}.

Recall~\cite{Leon:1992, Eliopoulos:1966}, a $k$-th tangent bundle $\sT^k M$ is equipped with an endomorphism~$J$ of its tangent bundle, called the canonical \emph{almost tangent structure of order~$k$}, which has the form
\begin{gather*}
J \pa_{x^{A, (\alpha)}} = \pa_{x^{A, (\alpha+1)}} \quad (\text{for}\ 0\leq \alpha\leq k-1), \qquad J \pa_{x^{A, (k)}} = 0,
\end{gather*}
so the image of $J^i=\underbrace{J\circ \cdots \circ J}_{i\text{-times}}$ coincides with the kernel of $J^{k+1-i}$ for $i=1, \ldots, k$. Thus we have an exact sequence
\begin{gather*}
0\rightarrow \operatorname{im} J^k \rightarrow \sT \sT^{k} M \xrightarrow{J} \sV \sT^k M.
\end{gather*}
The image of $J^k$ is given in $\sT \sT^{k}M$ by vanishing all dif\/ferentials $d x^{A, (\alpha)}$ with $0\leq \alpha <k$, hence $\sT \tau^k_{k-1} \colon \sT \sT^k M \twoheadrightarrow \sT\sT^{k-1} M$ vanishes on the kernel of $J$, and so it factors to a vector bundle morphism $\sV \sT^k M \twoheadrightarrow \sT \sT^{k-1} M$. It is immediate to check that the obtained projection coincides with $p^{\sV F_k}_{\pLinr F_k}\colon \sV F_k \rightarrow \pLinr(F_k)$, where $F_k=\sT^k M$.
\end{proof}

The linearisation functor associates in a canonical functorial way a $\GrL$-bundle with any graded bundle. It is clear that one can now iterate the application of the linearisation functor and from a graded bundle of degree $k$ produce a $k$-fold vector bundle. We describe this construction carefully in the next section.

\section[Graded bundles vs $k$-fold vector bundles]{Graded bundles vs $\boldsymbol{k}$-fold vector bundles}\label{sec:GradedBundlesandHigherVectorBundles}
\subsection{The full linearisation functor}

Recall, that the linearisation functor $\pLinr$ takes a graded bundle $F_k$ of degree $k$ to a double graded bundle of bi-degree $(k-1,1)$. We may apply the functor $\pLinr$ again to the graded bundle structure of degree $k-1$ on $\GrL$-bundle $\pLinr(F_k)$ to get a triple graded bundle of multi-degree $(k-2,1,1)$. After $k-1$ iterations we arrive at a $k$-fold vector bundle denoted by $\pLinr^{(k-1)}(F_k)$ which we take as a~def\/inition of the \emph{full linearisation functor}.
\begin{Definition} \label{def:full_linearisation_functor} The full linearisation functor
\begin{gather*}
\Linr\colon \ \GrB[k] \rightarrow \VB[k]
\end{gather*}
is def\/ined as the composition of the linearisation functor $(k-1)$ times, i.e., $\Linr(F_k) = \pLinr^{(k-1)} (F_k)$.
\end{Definition}

\begin{Example}
$\Linr(\sT^k M) = \sT^{(k)} M$ by Proposition~\ref{prop:pLin}.
\end{Example}

Similarly, we can consider iterations of the vertical functor. Recall, the vertical functor $\sV$ takes a graded bundle of degree $k$ to a double graded bundle of degree $(k, 1)$. Applying again the functor $\sV$ to $\big(\sV(F_k), \Delta^1_{\sV F_k}\big)$ we obtain $\sV \sV F_k$ which is a $3$-fold graded bundle of degree $(k, 1, 1)$. Continuing this way we f\/ind that for any $r\geq 1$, $\sV^{(r)} F_k =
\underbrace{\sV \sV\cdots \sV}_{r\text{-times}} F_k$ is a $(r+1)$-fold graded bundle of degree $(k, 1, \ldots, 1)$.

\begin{Proposition}\label{prop:l_r_epi_and_inclusion} There is a canonical epimorphism $(0\leq r\leq k-1)$
\begin{gather}\label{prop:epimorphism_r}
p^{\sV^{(r)}F_k}_{\pLinr^{(r)}({F}_k)}\colon \ \sV^{(r)} F_k \rightarrow \pLinr^{(r)} (F_k) = \sV^{(r)} F_k \big[\Delta^1_{\sV^{(r)} F_k} \leq k-r\big]
\end{gather}
and an embedding
\begin{gather}\label{prop:embedding_r}
\pLinr^{(r)} (F_k) \hookrightarrow \sT^{(r)}\sT^{k-r} F_k.
\end{gather}
Both mappings respect $(r+1)$-fold graded bundle structures.
\end{Proposition}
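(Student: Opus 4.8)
The plan is to prove both claims by induction on $r$, using the single-step results of Proposition~\ref{prop:pLin} as the base case ($r=1$) and the fact that $\pLinr^{(r)}$ is built by iterating the linearisation functor on a graded-bundle leg. First I would set up the induction hypothesis: suppose \eqref{prop:epimorphism_r} and \eqref{prop:embedding_r} hold for $r-1$, so that we have an epimorphism $\sV^{(r-1)}F_k \twoheadrightarrow \pLinr^{(r-1)}(F_k)$ of $r$-fold graded bundles and an embedding $\pLinr^{(r-1)}(F_k)\hookrightarrow \sT^{(r-1)}\sT^{k-r+1}F_k$. For the epimorphism in step $r$, observe that $\sV^{(r)}F_k = \sV(\sV^{(r-1)}F_k)$ and that the vertical functor $\sV$ is applied to the graded-bundle leg $(\sV^{(r-1)}F_k;\Delta^1_{\sV^{(r-1)}F_k})$, which by hypothesis has degree $k-r+1$ (after the appropriate weight shift). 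Applying the linearisation functor $\pLinr$ to this leg is, by Definition~\ref{def:linearisation}, precisely taking $\sV^{(r)}F_k[\Delta^1_{\sV^{(r)}F_k}\le k-r]$, and the canonical projection $\textnormal{p}^{\sV F}_{\pLinr(F)}$ from Proposition~\ref{prop:pLin}\eqref{prop_pL:item_pullback} supplies the epimorphism. Since this projection respects the bi-weight of a $\GrL$-bundle, and the remaining $r-1$ Euler weight vector fields (inherited from the earlier verticals) are untouched by the removal of coordinates of $\Delta^1$-weight $k-r$, the map respects the full $(r+1)$-fold graded bundle structure; composing with the hypothesised epimorphism for $r-1$ (functorially lifted through $\sV$) finishes this part.

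For the embedding \eqref{prop:embedding_r}, I would again use the inductive embedding $\pLinr^{(r-1)}(F_k)\hookrightarrow \sT^{(r-1)}\sT^{k-r+1}F_k$ together with the canonical embedding $\sT^m N \hookrightarrow \sT\,\sT^{m-1}N$ (the model map recalled just before Definition~\ref{def:linearisation}), applied with $N = \sT^{(r-1)}\cdots$ inside. Concretely: apply the functor $\pLinr$ to both sides of the $r-1$ embedding. On the left this gives $\pLinr^{(r)}(F_k)$; on the right, Proposition~\ref{prop:pLin}\eqref{prop_pL:item_injective} guarantees that $\pLinr$ preserves injective morphisms, and Proposition~\ref{prop:pLin}\eqref{prop_pL:itemTkM} identifies $\pLinr(\sT^{k-r+1}N')$ with $\sT\,\sT^{k-r}N'$ for the relevant base $N'$; combined with the fact that $\pLinr$ commutes with the outer $\sT^{(r-1)}$-tower (since the linearisation acts only on the graded-bundle leg and the outer vector bundle legs go along for the ride), this yields the embedding into $\sT^{(r)}\sT^{k-r}F_k$. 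That the embedding respects the $(r+1)$-fold graded bundle structure follows because each constituent map (the model embedding $\sT^m N\hookrightarrow \sT\sT^{m-1}N$ and the linearisation projection) is weight-preserving in the appropriate bi-weights, as already established in the single-step case.

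The main obstacle I anticipate is bookkeeping the weight shifts correctly: each application of $\sV$ or $\pLinr$ introduces a new Euler-type weight vector field and requires the "shift of the first component of the bi-weight" described in the vertical-bundle example, so one must verify that after $r$ iterations the leg $\Delta^1_{\sV^{(r)}F_k}$ genuinely has degree $k-r$ and that the condition $\Delta^1_{\sV^{(r)}F_k}\le k-r$ in \eqref{prop:epimorphism_r} is the image of the condition $\Delta^1\le k-r+1$ under one more application of $\pLinr$ — this is exactly the content of Proposition~\ref{prop:pLin}\eqref{prop_pL:item_pl_tau}, which says $\pLinr(\tau^k_{k-1})$ is removal of the top $\Delta^1$-weight, so invoking that lemma at each step keeps the induction clean. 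A secondary point to check is that all removals and restrictions involved are along weight vector fields that are honest weight vector fields on the relevant multi-graded bundle (the difference $\Delta^1-\Delta^2$ appearing in the vertical construction is not a non-negative combination, but is still admissible, as stressed in the vertical-bundle example); once this is granted at each stage, the functoriality Remark guarantees everything glues.
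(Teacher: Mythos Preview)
Your approach to the embedding \eqref{prop:embedding_r} is essentially the paper's: both iterate $\pLinr$ on the canonical inclusion and invoke Proposition~\ref{prop:pLin}\eqref{prop_pL:itemTkM} and~\eqref{prop_pL:item_injective} at each step. The paper phrases the base of the iteration as the canonical embedding $F_k\hookrightarrow \sT^k F_k$ (with $N=F_k$) rather than as an inductive hypothesis, but the mechanism is the same.

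For the epimorphism \eqref{prop:epimorphism_r}, however, there is a genuine error in your bookkeeping. You claim that the graded-bundle leg $\big(\sV^{(r-1)}F_k;\Delta^1_{\sV^{(r-1)}F_k}\big)$ has degree $k-r+1$; it does not. As recorded just before the proposition, $\sV^{(r)}F_k$ is an $(r+1)$-fold graded bundle of degree $(k,1,\ldots,1)$: each application of $\sV$ adds a new Euler leg but the first weight stays at $k$, because the shifted weight vector field $\Delta^1_{\sV F}=(\Delta^1-\Delta^2)|_{\sV F}$ still sees the original coordinates $y^a_w$ with unshifted weight $w$. It is $\pLinr^{(r-1)}(F_k)$, not $\sV^{(r-1)}F_k$, whose first leg has degree $k-r+1$. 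Consequently, applying $\pLinr$ once to $\sV^{(r-1)}F_k$ only produces $\sV^{(r)}F_k\big[\Delta^1\le k-1\big]$, not $\sV^{(r)}F_k\big[\Delta^1\le k-r\big]$, and your direct step does not reach $\pLinr^{(r)}(F_k)$.

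The paper fixes exactly this point. Writing $\tilde F_k=\sV^{(r)}F_k$ (degree $k$), one first takes the single-step projection $\sV\tilde F_k\to \pLinr(\tilde F_k)=\sV\tilde F_k\big[\Delta^1\le k-1\big]$, and then applies $\pLinr$ to the projection $\tilde\tau^k_{k-r}\colon \tilde F_k\to \tilde F_k\big[\Delta^1\le k-r\big]=\pLinr^{(r)}(F_k)$ (the latter identification is the inductive hypothesis). Proposition~\ref{prop:pLin}\eqref{prop_pL:item_pl_tau} then identifies $\pLinr(\tilde\tau^k_{k-r})$ with the further removal down to $\Delta^1\le k-r-1$, and the composite is the desired epimorphism. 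You cite \eqref{prop_pL:item_pl_tau} in your ``main obstacle'' paragraph, so you have the right tool; but it must be applied to $\tau^k_{k-r}$ on the degree-$k$ bundle $\tilde F_k$, not treated as a single ``removal of the top $\Delta^1$-weight'' on something already of degree $k-r+1$.
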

\begin{proof} We shall prove \eqref{prop:epimorphism_r} by induction. The case $r=0$ is trivial and the case $r=1$ follows directly from the def\/inition of the functor $\pLinr$. Denote $\tilde{F}_k := \sV^{(r)} F_k$ which we consider now as a~graded bundle of degree $k$. By the inductive hypothesis, the base of the canonical projection $\tilde{\tau}^k_{k-r}\colon \tilde{F}_k\rightarrow \tilde{F}_{k}\big[\Delta^1_{\tilde{F}_k \leq k-r}\big]$, which is def\/ined in~\eqref{eqn:fibrations}, coincides with $\pLinr^{(r)}(F_k)$. To get $\pLinr^{(r+1)} F_k$, apply the functor $\pLinr$ to the projec\-tion~$\tilde{\tau}^k_{k-r}$, then compose it with the canonical projection~$p^{\sV \tilde{F}_k}_{\pLinr(\tilde{F}_k)}$ on the linearisation of $\tilde{F}_k$. Using Proposition~\ref{prop:pLin}(\ref{prop_pL:item_pl_tau}) we arrive at the sequence
\begin{gather*}
\sV^{(r+1)} F_k = \sV \tilde{F}_k \xrightarrow{p^{\sV \tilde{F}_k}_{\pLinr(\tilde{F}_k)}} (\sV \tilde{F}_k) \big[\Delta^1_{\sV^{(r+1)}F_k}\leq k-1\big] = \pLinr(\tilde{F}_k) \xrightarrow{\pLinr(\tilde{\tau}^k_{k-r})} \pLinr^{(r+1)} (F_k)  \\
\hphantom{\sV^{(r+1)} F_k = \sV \tilde{F}_k \xrightarrow{p^{\sV \tilde{F}_k}_{\pLinr(\tilde{F}_k)}}}{}
\overset{(*)}{=} \pLinr (\tilde{F}_k) \big[\Delta^1_{\pLinr (\tilde{F}_k)}\leq k-r-1\big] = \sV^{(r+1)} F_k \big[\Delta^1_{\sV^{(r+1)} F_k}\leq k-r-1\big]
\end{gather*}
that completes our inductive reasoning.

To prove~\eqref{prop:embedding_r} consider the graded bundle $\tau\colon F_k\rightarrow M$ as a canonical substructure of $\tau^k_N\colon \sT^k N\rightarrow N$ with $N=F_k$ (for this fundamental observation see~\cite{JG_MR_higher_vb})  and apply the functor $\pLinr$ to both structures. By Proposition~\ref{prop:pLin}\eqref{prop_pL:itemTkM} and~\eqref{prop_pL:item_injective} we get a $\GrL$-bundle embedding $\pLinr(\iota)\colon \pLinr(F_k) \rightarrow \pLinr(\sT^k N) = \sT \sT^{k-1} N$, where $\iota\colon F_k\rightarrow \sT^k N$ denotes the canonical embedding. We can iterate the application of the linearisation functor and f\/inally get~\eqref{prop:embedding_r}.
\end{proof}

\begin{Example}
Consider a graded bundle $F_{2}$ equipped with homogeneous local coordinates $(x^{A}, ~ y^{a}, ~ z^{i})$ of weight $0,1$ and $2$, respectively. Then, $\sV F_2\subset \sT F_2$, $\sV^{(2)}F_{2} \subset \sT^{(2)} F_2$, and $\sV^{(3)} F_2\subset \sT^{(3)} F_2$ come equipped with homogeneous coordinates
\begin{gather*}
\big(x^{A,(0)}_0, y^{a,(0))}_1, z^{i,(0)}_2; y^{a,(1)}_0, z^{i,(1)}_1\big), \\
\big(x^{A, (0,0)}_{0}, y^{a, (0,0)}_{1}, z^{i, (0,0)}_{2}, y^{a, (0,1)}_{0}, z^{i, (0,1)}_{1}; y^{a,(1, 0)}_{0}, z^{i, (1,0)}_{1}, z^{i, (1,1)}_{0}\big), \\
\big(x^{A, (000)}_{0}, y^{a, (000)}_1, z^{i, (000)}_2, y^{a, (001)}_0, z^{i, (001)}_1, y^{a,(010)}_0, z^{i, (010)}_1, z^{i, (011)}_0;  \\
\qquad {} y^{a,(100)}_0, z^{i,(100)}_1, z^{i,(101)}_0, z^{i,(110)}_0\big),
\end{gather*}
respectively. The f\/irst component written in the bottom of the multi-weight is that naturally inherited from $F_{2}$, while the other components come from the vector bundle structure of the tangent bundles. In full generality, consider a graded bundle $F_{k}$ equipped with homogeneous local coordinates $\big(x^{A}, y^i_w\big)$, then $\sV^{(r)}F_{k}$ can be equipped with inherited from~$\sT^{(r)} F_k$ homogeneous local coordinates
\begin{gather*}
\big(x^A, y^{i, (\ep)}_w\big),
\end{gather*}
where $\veps\in\{0,1\}^{r}$ and $|\ep|\leq w$. The weight of $y^{i, (\ep)}_w$ with respect to $\Delta^1_{\sV^{(r)}} F_k$ is $w - |\ep|$ (not explicitly written now). Consider $(\sT^{(r)} F_k; \Delta^0, \Delta^1, \ldots, \Delta^r)$ as a $(r+1)$-fold graded bundle with Euler vector f\/ields $\Delta^1, \ldots, \Delta^r$ of $r$-th iterated tangent bundle of $F_k$ (namely, $\Delta^i$ is the Euler vector f\/ield of the vector bundle projection $\sT^{(i-1)} \tau_{T^{(r-i)} F_k}\colon \sT^{(r)} F_k \rightarrow \sT^{(r-1)} F_k$, $1\leq i\leq r$) and the weight vector f\/ield $\Delta^0:= \dd_{\sT}^{(r)}\Delta_{F_k}$ being the (iterated) tangent lift of the weight vector f\/ield~$\Delta_{F_k}$ of~$F_k$. Def\/ine~$X_r \in \mathfrak{X}(\sT^{(r)} F_k)$ as $X_r = \Delta^0 - \sum_1^k \Delta^k$, so the weight of $y^{a, (\ep_1, \ldots, \ep_r)}_w$ with respect to $X_r$ is $w-(\ep_1+\cdots+\ep_r)$, hence
 \begin{gather}\label{eqn:V_r}
 \sV^{(r)} F_k = \sT^{(r)}F_k[X_r\geq 0].
 \end{gather}
Note that $\Delta^{1}_{\sV^{(r)}F_{k}} = X_{r}|_{\sV^{(r)}F_{k}}$.
\end{Example}

\begin{Example}\label{exmpl:F3}
Let us consider in some detail the generic case when we are dealing with a graded bundle of order three. This example should be enough to highlight the main constructions explicitly without too much notational clutter. Let us employ homogeneous local coordinates $\big(x^{A}, y^{a}, z^{i},w^{\kappa}\big)$ on $F_{3}$ where the weight is assigned as $0$, $1$, $2$ and $3$, respectively. The admissible changes of local coordinates are of the form
\begin{gather*}
 x^{A'} = x^{A'}(x), \qquad  y^{a'} = y^{b}T_{b}^{\:\:\: a'}(x), \qquad  z^{i'} = z^{j}T_{j}^{\:\:\: i'}(x) + \frac{1}{2}y^{a}y^{b}T_{ba}^{\:\:\:\: i'}(x),\\
w^{\kappa'} = w^{\mu}T_{\mu}^{\:\: \kappa'}(x) + z^{i}y^{a}T_{ai}^{\:\:\: \kappa'}(x) + \frac{1}{3!}y^{a}y^{b}y^{c}T_{cba}^{\:\:\:\:\:\: \kappa'}(x). \end{gather*}

Then on $\pLinr(F^3)$ we can employ coordinates $(x^A, \dot{y}^a, y^a, \dot{z}^i,z^i, \dot{w}^\kappa)$ of weights $(0,0)$, $(0,1)$, $(1, 0)$, $(1, 1)$, $(2,0)$ and $(2, 1)$, respectively. The additional transformation rules are
\begin{gather*}
\dot{y}^{a'} = \dot{y}^{b} T^{a'}_{b}, \qquad \dot{z}^{i'}=\dot{z}^j T^{i'}_j + \dot{y}^a y^b T^{i'}_{ba}, \\
\dot{w}^{\kappa'} = \dot{w}^{\mu}T_{\mu}^{\:\: \kappa'} + (\dot{z}^{i}y^{a} + z^{i}\dot{y}^{a}) T_{ai}^{\:\:\: \kappa'} + \frac{1}{2!} \dot{y}^{a}y^{b}y^{c}T_{cba}^{\:\:\:\:\:\: \kappa'}.
\end{gather*}
 By iterating this construction and dif\/ferentiation above formulas, we get $\pLinr^{(2)} F_3$ with coordinates
 $(x^A$, $\dot{y}^a=y^{a,(10)}$, $y^a=y^{a, (00)}$, $\dot{z}^i=z^{i,(10)}$, $dy^a=y^{a,(01)}$, $d\dot{z}^i=z^{i,(11)}$, $dz^i=z^{i,(01)}$, $d\dot{w}^\kappa=w^{\kappa,(11)})$ of degrees $(000)$, $(010)$, $(100)$, $(110)$, $(001)$, $(011)$, $(101)$, $(111)$, respectively, and transformation laws for extra coordinates
\begin{gather*}\allowdisplaybreaks
dy^{a'} = dy^b T^{a'}_b(x), \\
d\dot{z}^{i'} = d\dot{z}^j  T^{i'}_j(x) + \dot{y}^a dy^b T^{i'}_{ba}(x),\\
d z^{i'} = d z^j T^{i'}_j(x) + dy^a y^b T^{i'}_{ba}(x),\\
d \dot{w}^{\kappa'} = d \dot{w}^\mu\,T^{\kappa'}_\mu(x) + (\dot{z}^i dy^a + d\dot{z}^i y^a + dz^i \dot{y}^a)T^{\kappa'}_{ai}(x) +
\dot{y}^a\,dy^b y^c T^{\kappa'}_{cba}(x).
\end{gather*}
We see that transformation laws for coordinates of the same total degree are essentially the same. We conclude that $\pLinr^{(2)}(F_3)$ is a triple vector bundle with isomorphic feet $\pLinr(F_2)$
\begin{gather*}
\xymatrix{
& \pLinr F_2 \ar[rr]\ar[dd] && F_1\ar[dd] \\
\Linr F_3 = \pLinr^{(2)} F_3 \ar[dd] \ar[ru]\ar[rr] && \pLinr F_2 \ar[ru]\ar[dd] & \\
& F_1 \ar[rr]&& M \\
\pLinr F_2 \ar[ru] \ar[rr]&& F_1 \ar[ru] &}
\end{gather*}
\end{Example}

In full generality, it is natural to use the coordinate system $\big(x^A, y^{a, (\ep)}_w\big)$, $\ep\in\{0,1\}^r$, for $\pLinr^{(r)} F_k$ as well (now we take only such $y^{a, (\ep)}_w$ that $0\leq w-|\ep| \leq k-r$), as these coordinates are constant on f\/ibers of the f\/ibration $p^{\sV^{(r)}(F_k)}_{\pLinr^{(r)}(F_k)}$.

On the other hand, consider $F_3$ as a substructure of $\sT^3 F_3$ given by equations $x^{A, (\beta)} = 0$, $y^{a, (\alpha)}_w = 0$, for $1\leq \alpha, \beta\leq 3$ such that $\alpha \neq w$. Then we recognise $\pLinr(F_3)$ as a subset of $\sT \sT^2 F_3$ and $\Linr(F_3)=\pLinr^{(2)}(F_3)$ as a subset of $\sT^{(3)} F_3$, the latter is def\/ined by equations $x^{A, (\ep)} =0$, for $\ep\neq (0,0,0)$ and $y_w^{a, (\ep)} = 0$ if $\ep_1+\ep_2+\ep_3\neq w$ in local coordinate system $\big(x^A, y_w^{a, (\ep)}\big)$, $\ep = (\ep_1, \ep_2, \ep_3)\in\{0, 1\}^3$, on $\sT^{(3)} F_3$. Transformations for the remaining coordinates are the same as given above for $\pLinr^{(2)} F_3$ with the identif\/ication $\big(x^A, y_{|\ep|}^{(\ep_1, \ep_2, \ep_3)}\big) \mapsto \big(x^A, y_{|\ep|}^{(\ep_2, \ep_3)}\big)$, $|\ep| = \ep_1+\ep_2+\ep_3$. The weight of $y_w^{a, (\ep)}$ with respect to the vector f\/ield $X_3\in \mathfrak{X}(F_3)$, def\/ined as the dif\/ference between $\dd_\sT^{(3)} \Delta_{F_3}$ and the sum of three Euler vector f\/ields def\/ining triple vector bundle structure on $\sT^{(3)} F_3$, is equal $w-(\ep_1+\ep_2+\ep_3)$. Thus $\Linr(F_3)$ as a subset of $\sT^{(3)} F_3$ is the set of singular points of the vector f\/ield~$X_3$.

In general, the embedding $\Linr (F_k) \subset \sT^{(k)} F_k$ def\/ined in~\eqref{prop:embedding_r} can be given an explicit form, namely
\begin{gather}\label{eqn:full_linearisation_def}
\Linr(F_k) \simeq \sT^{(k)}F_k[X_k=0],
\end{gather}
where $X_k = \Delta^0 - \sum_1^k \Delta^k$ is as above. (It is not worth to consider $\Linr(F_k)$ as a $(k+1)$-fold graded bundle since $\Delta^0$ restricted to $\Linr(F_k)$ coincides with the sum of Euler vector f\/ields $\Delta^i$, so $(\sT^{(k)} F_k [X = 0]; \Delta^{0})$ is just a graded bundle of degree $k$ def\/ined by taking the total degree in $k$-fold vector bundle~$\Linr(F_k)$.)

Similarly, the epimorphism $\sV^{(k-1)} F_k \rightarrow \Linr(F_k)$ has the explicit form
\begin{gather}\label{eqn:LF_k_explicit}
\sV^{(k-1)} F_k \rightarrow \big(\sV^{(k-1)}F_{k} \big)\big[\Delta^{1}_{\sV^{(k-1)}F_{k}} \leq 1\big]
\end{gather}
and $\Delta^{1}_{\sV^{(k-1)}F_{k}} = X_{k-1}|_{\sV^{(k-1)}F_{k}}$. The isomorphism $\big(V^{(k-1)}F_{k} \big)[\Delta^{1}_{V^{(k-1)}F_{k}} \leq 1] \simeq \sT^{(k)}F_k[X=0]$ has the form
\begin{gather}\label{eqn:local_iso_LFk_explicit}
\big(x^A, y^{i, (\ep)}_w\big) \mapsto \big(x^A, y^{i, (w-|\ep|, \ep_2, \ldots, \ep_k)}_w\big),
\end{gather}
where $\ep = (\ep_2, \ldots, \ep_k)\in \{0,1\}^{k-1}$, so $|\ep| = \ep_2+ \cdots+ \ep_k$, and $0\leq w-|\ep|\leq 1$.

\begin{Corollary}\label{cor:L_emb_and_epim} The full linearisation of a graded bundle $F_k$ can be seen as a quotient structure of the iterated vertical bundle $\sV^{(k-1)} F_k$ and as a substructure of the iterated tangent bundle~$\sT^{(k)} F_k$, as well. The local coordinate formulas are obtained from~\eqref{eqn:V_r}, \eqref{eqn:full_linearisation_def}, \eqref{eqn:LF_k_explicit} and~\eqref{eqn:local_iso_LFk_explicit}.
\end{Corollary}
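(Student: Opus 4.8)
The statement is a direct corollary of Proposition~\ref{prop:l_r_epi_and_inclusion} together with the explicit coordinate computations carried out in the preceding examples, so the plan is simply to assemble those ingredients and check their mutual compatibility. First I would set $r=k-1$ in Proposition~\ref{prop:l_r_epi_and_inclusion}. Since $\pLinr^{(k-1)}(F_k)=\Linr(F_k)$ by Definition~\ref{def:full_linearisation_functor}, the epimorphism \eqref{prop:epimorphism_r} becomes $p^{\sV^{(k-1)}F_k}_{\Linr(F_k)}\colon \sV^{(k-1)}F_k \twoheadrightarrow \Linr(F_k)$, which exhibits $\Linr(F_k)$ as a quotient of the iterated vertical bundle, while the embedding \eqref{prop:embedding_r} becomes $\Linr(F_k)\hookrightarrow \sT^{(k-1)}\sT^{1}F_k=\sT^{(k)}F_k$, which exhibits it as a substructure of the iterated tangent bundle; both maps respect the $k$-fold graded bundle structures by the same Proposition. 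This already proves the first sentence of the corollary.

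It then remains to make the two descriptions explicit, which is where the four displayed formulas enter. For the substructure picture I would start from the identity $\sV^{(r)}F_k=\sT^{(r)}F_k[X_r\geq 0]$ of \eqref{eqn:V_r} and observe that, by Definition~\ref{def:linearisation}, each application of $\pLinr$ discards the coordinates of top $X$-weight from the vertical bundle; after $k-1$ iterations only the coordinates of $X_k$-weight $0$ survive, and by the invariance argument given right after Definition~\ref{def:removal_coordinates2} this locus is precisely the invariantly defined set of singular points $\sT^{(k)}F_k[X_k=0]$ of \eqref{eqn:full_linearisation_def}. For the quotient picture I would rewrite the epimorphism as \eqref{eqn:LF_k_explicit}, namely $\sV^{(k-1)}F_k \to \big(\sV^{(k-1)}F_k\big)\big[\Delta^{1}_{\sV^{(k-1)}F_k}\leq 1\big]$, using that $\Delta^{1}_{\sV^{(k-1)}F_k}=X_{k-1}|_{\sV^{(k-1)}F_k}$, and then identify the target with $\sT^{(k)}F_k[X_k=0]$ through the coordinate substitution \eqref{eqn:local_iso_LFk_explicit}, $\big(x^A,y^{i,(\ep)}_w\big)\mapsto\big(x^A,y^{i,(w-|\ep|,\ep_2,\ldots,\ep_k)}_w\big)$ with $0\leq w-|\ep|\leq 1$.

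The one genuine point to verify --- and the step I expect to be the main obstacle --- is that the local substitution \eqref{eqn:local_iso_LFk_explicit} is globally well defined, i.e., that it intertwines the two sets of transformation rules and hence really defines an isomorphism of $k$-fold graded bundles $\big(\sV^{(k-1)}F_k\big)\big[\Delta^{1}_{\sV^{(k-1)}F_k}\leq 1\big]\xrightarrow{\ \sim\ }\sT^{(k)}F_k[X_k=0]$. This rests on the observation already illustrated in Example~\ref{exmpl:F3}: the transformation law for an iterated-lift coordinate $y^{i,(\ep)}_w$ depends, up to relabelling of the lift multi-index, only on the total degree $w$ and not on how that degree is distributed among the $k$ tangent directions --- this is precisely the defining feature of the polarisation procedure, and it is exactly what makes moving the ``missing'' degree $w-|\ep|$ from the first slot to any other slot harmless.

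Granting this compatibility, the two canonical maps fit into the commuting chain $\sV^{(k-1)}F_k \twoheadrightarrow \big(\sV^{(k-1)}F_k\big)\big[\Delta^{1}_{\sV^{(k-1)}F_k}\leq 1\big]\xrightarrow{\ \sim\ }\sT^{(k)}F_k[X_k=0]\hookrightarrow \sT^{(k)}F_k$, the outer composite being the substructure inclusion of \eqref{prop:embedding_r} and the first arrow being the quotient map of \eqref{prop:epimorphism_r}. The local coordinate formulas asserted in the corollary are then read directly off \eqref{eqn:V_r}, \eqref{eqn:full_linearisation_def}, \eqref{eqn:LF_k_explicit} and \eqref{eqn:local_iso_LFk_explicit}, which is all that is claimed; so beyond invoking Proposition~\ref{prop:l_r_epi_and_inclusion} the proof is essentially bookkeeping with multi-indices and a check that the polarised transformation rules are slot-independent.
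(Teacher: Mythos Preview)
Your proposal is correct and follows exactly the approach implicit in the paper: the corollary is stated there without a separate proof, as it is an immediate consequence of Proposition~\ref{prop:l_r_epi_and_inclusion} specialised to $r=k-1$ together with the explicit local formulas \eqref{eqn:V_r}--\eqref{eqn:local_iso_LFk_explicit} worked out in the text and in Example~\ref{exmpl:F3}. Your identification of the only nontrivial check---the slot-independence of the polarised transformation laws ensuring that \eqref{eqn:local_iso_LFk_explicit} is globally well defined---is precisely the point the paper illustrates in Example~\ref{exmpl:F3} and then takes for granted.
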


In view of Corollary~\ref{cor:L_emb_and_epim}, it does not fundamentally matter if we chose to interpret the induced homogeneous coordinates on $\Linr(F_{k})$ as coming from $\sV^{(k-1)}F_{k}$ or $\sT^{(k)}F_{k}$, though the latter maybe easier to comprehend. However, when examining the structure of the full linearisation, understanding~$\Linr(F_{k})$ as a substructure of~$\sT^{(k)}F_{k}$ will prove useful.

\subsection{Characterisation of the full linearisation for the degree two case}\label{ssec:Characterisation_deg_two}
The full linearisation is a functor from the category of graded bundles of degree $k$ to the category of $k$-fold vector bundles, however it is clear that not all $k$-fold vector bundles arise as the full linearisation of a graded bundle. All $k$-fold vector bundles are isomorphic, but usually non-canonically, to a \emph{decomposed} or \emph{split} $k$-fold vector bundle which is build out of $2^k-1$ vector bundles $E_\ep \rightarrow M$, where the weight $\ep\in \{0,1\}^k$ is a nonzero sequence, known as \emph{building $($vector$)$ bundles.} Essentially, every $k$-fold vector bundle is built from its building bundles, just not in a canonical fashion. For example, for any DVB, say $\sD = (D; A, B; M)$ with the core bundle $C\rightarrow M$, there exist a double vector bundle isomorphism $\sD \rightarrow (A\times_{M} B\times_{M} C; A, B; M)$ inducing the identity on all the building vector bundles $A=\sD_{(0,1)}$, $B=\sD_{(1,0)}$ and $C=\sD_{(1,1)}$.

If the $k$-fold vector bundle in question is balanced, then every vector bundle $E_\ep$ of a given total weight are identif\/ied and thus \emph{any} balanced $k$-fold vector bundle is the full linearisation of some graded bundle of degree~$k$. However, this is not suf\/f\/icient to completely characterise the full linearisation. There are morphisms between the full linearisations that are permissible and not permissible; the latter means that they cannot be traced back to morphisms between the original graded bundles. For concreteness let us f\/irst consider the degree two case carefully.

The full linearisation $\sD := \Linr(F_{2})$ of a graded bundle $F_{2}$ comes with an additional canonical structure: a double vector bundle morphism $\sigma\colon \sD \rightarrow \sD^{\f}$, where $\sD^{\f}=(D,\zD^2,\zD^1)$ is the f\/lip of $\sD=(D,\zD^1,\zD^2)$. The morphism $\sigma$ simply exchanges the legs (so the Euler vector f\/ields), and can be def\/ined as the restriction of the canonical involution $\kappa\colon \sT\sT F_{2} \rightarrow \sT\sT F_{2} $ to the double vector subbundle $D \subset \sT\sT F_2=\sT^{(2)}F_{2}$. In local coordinates as introduced earlier, the canonical DVB morphism is given by
\begin{gather} \label{eqn:sigma_ab_i}
 \sigma^{*}y^{a,(10)} = y^{a,(01)}, \qquad  \sigma^{*}y^{b,(01)} =y^{b,(10)},\qquad \sigma^{*}z^{i,(11)} = z^{i,(11)}.
\end{gather}

\begin{Remark} One must note that a general balanced DVB does \emph{not} admit a canonical $\sigma\colon$ $\sD \rightarrow \sD^{\f}$, the local description above is given in specif\/ic (adapted) coordinates on $\sT^{(2)} F_2$.
\end{Remark}

Now we can consider a category whose objects are double vector bundles $\sD=(D; A, B; M)$ equipped with a DVB morphism $\sigma\colon \sD\rightarrow \sD^\f$, such that
\begin{gather}\label{eqn:condition_on_sigma1}
\sigma^\f\circ \sigma = \id_D
\end{gather}
and
\begin{gather}\label{eqn:condition_on_sigma2}
\sigma|_C = \id_C,
\end{gather}
where $C\rightarrow M$ denotes the core bundle of $\sD$. It follows from \eqref{eqn:condition_on_sigma1} that $\sD$ is balanced, as $\sigma|_A\colon A\rightarrow B$ and $\sigma|_B\colon B\rightarrow A$ are vector bundle morphisms satisfying $\sigma|_B\circ \sigma|_A = \id_A$ and $\sigma|_A\circ \sigma|_B = \id_B$. Under the identif\/ication $A = B$ given by~$\sigma|_A$, it follows that the following sets coincide
\begin{gather*}
\{q\in D\colon \sigma(q)=q\} = \big\{q\in D\colon \pi^D_A(q) = \pi^D_B(q)\big\}.
\end{gather*}

\begin{Definition} Such pairs $(\sD, \sigma)$ we will refer to as \emph{symmetric double vector bundles} and the above set as the \emph{diagonal} or the set of \emph{holonomic vectors} of $(\sD,\zs)$. A morphism of symmetric double vector bundles $(\sD, \sigma) \rightarrow (\sD', \sigma')$ is a DVB morphism $\phi\colon \sD\rightarrow \sD'$ commuting with the additional structure:
\begin{gather}\label{eqn:morphism}
\sigma' \circ \phi = \phi^\f \circ \sigma.
\end{gather}
As the f\/lip operation is a functor, $\phi^{\f} \colon \sD^{\f} \rightarrow \sD'^{\f}$ coincides with $\phi$ as a map between sets; a DVB and its f\/lip have the same underlying manifold structure. Composition of morphisms in this category is the standard composition of smooth maps between manifolds. We will denote this category as~$\SymmVB[2]$.
\end{Definition}

Let us pick local coordinates $\big(x^{A}, y_{(1,0)}^{a}, y_{(0,1)}^{b}, z_{(1,1)}^{i}\big)$ on an arbitrary symmetric double vector bundle $(\sD, \sigma)$ in such a way that $\sigma^{*} y^{a}_{(0,1)} = y^{a}_{(1,0)}$, so $\sigma^{*}y^{a}_{(1,0)} = y^{a}_{(0,1)}$ by~\eqref{eqn:condition_on_sigma1}. Then, by \eqref{eqn:condition_on_sigma2}, the isomorphism $\sigma\colon \sD \rightarrow \sD^{\f}$ is of the form
\begin{gather*}
 \sigma^{*} z^{i}_{(1,1)} = z^{i}_{(1,1)} + y^{a}_{(1,0)}y^{b}_{(0,1)}\sigma_{ba}^{i}(x).
\end{gather*}
From \eqref{eqn:condition_on_sigma1}, `applying sigma twice' to the $z$ coordinate results in
\begin{gather*}(\id_{D})^{*}z^{i}_{(1,1)} = z^{i}_{(1,1)} + y^{a}_{(1,0)}y^{b}_{(0,1)}\sigma_{ba}^{i}(x) + y^{a}_{(0,1)}y^{b}_{(1,0)}\sigma_{ba}^{i}(x), \end{gather*}
and thus $\sigma_{ba}^{i} = {-} \sigma_{ab}^{i}$.

A morphism of double vector bundles $\phi\colon \sD \rightarrow \sD'$ is locally of the form (we do not explicitly write the morphism $M \rightarrow M'$ as its exact form is not important for our discussion)
\begin{gather*}
\phi^{*} y^{a'}_{(1,0)} = y^{a}_{(1,0)} Q_{a}^{\:\:\: a'}(x), \qquad \phi^{*} y^{b'}_{(0,1)} = y^{b}_{(0,1)} R_{b}^{\:\:\: b'}(x),\\
 \phi^{*} z^{i'}_{(1,1)} = z^{i}_{(1,1)}Q_{i}^{\:\:\: i'}(x) + y^{a}_{(1,0)} y^{b}_{(0,1)}Q_{ba}^{\:\:\:\: i'}(x).
\end{gather*}
The reader can quickly convince themselves that if we further insist on the condition (\ref{eqn:morphism}), then $Q_{a}^{\:\: \: a'} = R_{a}^{\:\: \: a'}$ and $Q_{ba}^{\:\:\:\: i'} = Q_{ab}^{\:\:\:\: i'}$. We now explicitly see the role of the morphism $\sigma$ in ensuring that morphism are `symmetric'. In particular, any morphism satisfying (\ref{eqn:morphism}) comes from a~morphism of graded bundles $F_{2} \rightarrow F_{2}'$, where $F_{2} := \{ q \in D ~ | ~ \pi^{D}_{A}(q) = \pi^{D}_{B}(q) \}$ and similar for~$F_{2}'$.

As a matter of formality, let us denote the full linearisation
\begin{gather*}
\Phi\colon \  \GrB[2] \rightarrow \SymmVB[2],
\end{gather*}
which assigns $(\Linr(F_{2}) , \kappa|_{\Linr(F_{2})})$ to $F_{2}$, and the \emph{diagonalisation} functor as
\begin{gather*}
\Psi \colon \  \SymmVB[2] \rightarrow \GrB[2],
\end{gather*}
which assigns $F_{2}:= \{ q \in D \,|\, \pi^{D}_{A}(q) = \pi^{D}_{B}(q)\}$ to $D$. It is clear that $F_{2} \simeq \Psi(\Phi(F_{2}))$ canonically.

The other way round, we claim that $\Phi \circ \Psi \simeq \textnormal{Id}_{\SymmVB[2]}$. This claim boils down to f\/inding a~canonical isomorphism
\begin{gather*}
I\colon \  (\sD, \sigma) \rightarrow (\Linr(F_{2}), \kappa|_{\Linr(F_{2})}),
\end{gather*}
where $F_2=\Psi(\sD, \sigma)$, such that the following diagram is commutative:
\begin{gather*}
\leavevmode
\begin{xy}
(0,20)*+{\sD}="a"; (30,20)*+{\sD^{\f}}="b";%
(0,0)*+{\Linr(F_{2})}="c"; (30,0)*+{ (\Linr(F_{2}))^{\f}.}="d";%
{\ar "a";"b"}?*!/_3mm/{\sigma};
{\ar "a";"c"}?*!/^3mm/{I};
{\ar "b";"d"}?*!/_3mm/{I^{\f}};
{\ar "c";"d"}?*!/^3mm/{\kappa|_{\Linr(F_{2})}};
\end{xy}
\end{gather*}
Let us denote the naturally inherited coordinates on $F_{2} = \Psi(\sD, \sigma)$ as
 \begin{gather*}\big(x^{A},\, y^{a} = y^{a}_{(1,0)}|_{F_{2}} = y^{a}_{(0,1)}|_{F_{2}},\, z^{i} = z^{i}_{(1,1)}|_{F_{2}}\big)\end{gather*}
 and on the full linearisation $\Linr(F_{2}) \subset \sT^{(2)}F_{2}$ we have the induced coordinates
 \begin{gather*}\big(x^{A},\, y^{a,(1,0)}, \, y^{b,(0,1)}, \, z^{i,(1,1)}\big).\end{gather*}
It is then a matter of calculation (we omit details) to show that
\begin{gather}\label{eqn:def_iso_I}
 I^{*} y^{a,(1,0)} = y^{a}_{(1,0)}, \qquad  I^{*} y^{b,(0,1)} = y^{b}_{(0,1)}, \qquad I^{*} z^{i,(1,1)} = z^{i}_{(1,1)} + \frac{1}{2} y^{a}_{(1,0)} y^{b}_{(0,1)}\sigma_{ba}^{i}(x)
\end{gather}
is the natural isomorphism we are looking for. The reader should note that $I$ does not actually depend on the coordinates employed on $D$. To sum up, we have thus established the following:
\begin{Proposition}\sloppy
There is an equivalence of categories between $\GrB[2]$ and the category $\SymmVB[2]$.
\end{Proposition}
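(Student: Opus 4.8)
Almost all of the work has already been carried out in the discussion above: we have the functor $\Phi\colon\GrB[2]\to\SymmVB[2]$ sending $F_2$ to $(\Linr(F_2),\kappa|_{\Linr(F_2)})$, the diagonalisation functor $\Psi\colon\SymmVB[2]\to\GrB[2]$ sending $(\sD,\sigma)$ to $F_2=\{q\in D:\pi^D_A(q)=\pi^D_B(q)\}$, the identification $F_2\simeq\Psi(\Phi(F_2))$, and the candidate isomorphism $I$ of~\eqref{eqn:def_iso_I}. The plan is therefore to (i) check that $\Phi$ and $\Psi$ are well defined on objects and on morphisms; (ii) upgrade the two identifications $\Psi\circ\Phi\simeq\id_{\GrB[2]}$ and $\Phi\circ\Psi\simeq\id_{\SymmVB[2]}$ to \emph{natural} isomorphisms; and (iii) conclude that $(\Phi,\Psi)$ is an equivalence of categories. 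Since $\Phi$ is the linearisation functor $\pLinr$ itself (because $k=2$), decorated with the canonical involution $\kappa$, and $\pLinr$ is functorial by Theorem~\ref{theom:functorial linearisation}, the only genuinely new verifications concern the interaction with $\sigma$.

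\textbf{$\Phi$ is well defined.} On objects, condition~\eqref{eqn:condition_on_sigma1} for $\sigma=\kappa|_{\Linr(F_2)}$ is immediate from $\kappa\circ\kappa=\id$, while~\eqref{eqn:condition_on_sigma2} holds because the canonical involution $\kappa$ of $\sT\sT F_2$ restricts to the identity on the core (its only non-base fibre coordinate is $z^{i,(11)}$, which $\kappa$ fixes by~\eqref{eqn:sigma_ab_i}), hence so does its restriction to the double vector subbundle $\Linr(F_2)\subset\sT^{(2)}F_2$. On morphisms, for $\phi\colon F_2\to F_2'$ the induced map $\Linr(\phi)$ is a DVB morphism by functoriality of $\pLinr$, and the required identity~\eqref{eqn:morphism} is exactly the naturality square of $\kappa$ with respect to the tangent functor, restricted along the embeddings into $\sT^{(2)}F_2$ and $\sT^{(2)}F_2'$; thus $\Phi(\phi)$ is a morphism of $\SymmVB[2]$. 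Compatibility with composition and identities is inherited from $\pLinr$ and the flip functor.

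\textbf{$\Psi$ is well defined.} Fix $(\sD,\sigma)$ with $\sigma$-adapted coordinates $\big(x^A,y^a_{(1,0)},y^b_{(0,1)},z^i_{(1,1)}\big)$, so that $\sigma^*y^a_{(0,1)}=y^a_{(1,0)}$ and, by~\eqref{eqn:condition_on_sigma1}--\eqref{eqn:condition_on_sigma2}, $\sigma^*z^i_{(1,1)}=z^i_{(1,1)}+y^a_{(1,0)}y^b_{(0,1)}\sigma^i_{ba}(x)$ with $\sigma^i_{ba}=-\sigma^i_{ab}$. The set $F_2=\Psi(\sD,\sigma)$ is cut out by the intrinsic equations $y^a_{(1,0)}=y^a_{(0,1)}$ (intrinsic because $\sigma$ identifies the side bundles $A$ and $B$, so these equations are independent of the choice of $\sigma$-adapted chart); it is therefore a closed submanifold with coordinates $\big(x^A,y^a,z^i\big)$, where $y^a=y^a_{(1,0)}|_{F_2}$ and $z^i=z^i_{(1,1)}|_{F_2}$. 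Restricting the DVB transition rules of $\sD$ to $F_2$ and replacing the bi-weight by the total weight produces exactly the admissible transition rules of a graded bundle of degree two --- the two side transitions agree because a single matrix $Q^{a'}_a(x)$ governs both $y_{(1,0)}$ and $y_{(0,1)}$, and the bilinear term $y^a_{(1,0)}y^b_{(0,1)}$ becomes a symmetric quadratic monomial on the diagonal --- so $\Psi(\sD,\sigma)\in\GrB[2]$. On morphisms, if $\phi$ satisfies~\eqref{eqn:morphism} then for $q$ with $\sigma(q)=q$ one has $\sigma'(\phi(q))=\phi^{\f}(\sigma(q))=\phi^{\f}(q)=\phi(q)$, using that $\phi^{\f}=\phi$ as a map of sets; hence $\phi$ restricts to $\Psi(\phi)\colon F_2\to F_2'$, which is weight preserving because $\phi$ preserves $\Delta^1+\Delta^2$, i.e., it is a graded bundle morphism. (The coordinate consequences of~\eqref{eqn:morphism} noted in the excerpt, namely $Q^{a'}_a=R^{a'}_a$ and $Q^{i'}_{ba}=Q^{i'}_{ab}$, will re-emerge automatically once the natural isomorphisms below are in place.)

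\textbf{The natural isomorphisms, and the main obstacle.} For $\Psi\circ\Phi\simeq\id_{\GrB[2]}$: $\Psi(\Phi(F_2))$ is the set of $\kappa$-fixed (holonomic) points of $\Linr(F_2)$, and the canonical embedding $\iota\colon F_2\hookrightarrow\Linr(F_2)$ of Proposition~\ref{prop:pLin} has image exactly this set --- from $\iota^*y^{a,(10)}=\iota^*y^{a,(01)}=y^a$ the image lies in $\{y^{a,(10)}=y^{a,(01)}\}$, and a dimension count gives equality --- so $\eta_{F_2}:=\iota$ is an isomorphism $F_2\xrightarrow{\sim}\Psi(\Phi(F_2))$, natural in $F_2$ because $\iota$ is built functorially from the weight vector field (viewed as a section of $\sV F_2$) and the canonical projection $p$. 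For $\Phi\circ\Psi\simeq\id_{\SymmVB[2]}$: one takes $I\colon(\sD,\sigma)\to(\Linr(F_2),\kappa|_{\Linr(F_2)})$, with $F_2=\Psi(\sD,\sigma)$, defined by~\eqref{eqn:def_iso_I}. The formula is block-triangular with identity diagonal blocks, so $I$ is a DVB isomorphism, and the square relating $\sigma$ with $\kappa|_{\Linr(F_2)}$ collapses, after the linear coordinates cancel, to a single identity on $z^{i,(1,1)}$ that holds precisely because of $\sigma^i_{ba}=-\sigma^i_{ab}$ together with the factor $\frac{1}{2}$ in~\eqref{eqn:def_iso_I}. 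The step I expect to be the main obstacle is the remaining bookkeeping: one must show that $I$ does not depend on the chosen $\sigma$-adapted coordinates on $D$ --- equivalently, that a change of such coordinates, governed by $\big(Q^{a'}_a,Q^{i'}_{ba}\big)$ with $Q^{i'}_{ba}=Q^{i'}_{ab}$, intertwines the two instances of~\eqref{eqn:def_iso_I} --- and hence that $I$ is natural in $(\sD,\sigma)$; this is where the antisymmetry of $\sigma^i_{ba}$ and the symmetry of the morphism tensors are used together, and where the various $\frac{1}{2}$-normalisations must be tracked carefully. Once both natural isomorphisms are established, $(\Phi,\Psi)$ is an equivalence of categories, which is the assertion of the Proposition.
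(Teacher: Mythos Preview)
Your proposal is correct and follows essentially the same approach as the paper: the paper's proof is exactly the discussion preceding the Proposition (defining $\Phi$, $\Psi$, noting $F_2\simeq\Psi(\Phi(F_2))$, and exhibiting the isomorphism $I$ of~\eqref{eqn:def_iso_I}), with the coordinate-independence of $I$ and the commutativity of the square relating $\sigma$ to $\kappa|_{\Linr(F_2)}$ left as omitted calculations. You have in fact supplied more of the bookkeeping than the paper does, in particular the explicit check that $\Psi$ is well defined on morphisms and the identification of $\eta_{F_2}=\iota$ as the unit of the equivalence.
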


\begin{Example}
Consider $D = \sT^{(2)}M$ and $ \sigma = \kappa_M\colon \sT^{(2)}M \rightarrow \sT^{(2)}M$ which is not an automorphism but an isomorphism
 from $\sT^{(2)}M$ to the f\/lip $(\sT^{(2)}M)^{\f}$. Thus $(\sT^{(2)}M, \kappa_M)$ is a symmetric DVB and the diagonalisation is clearly $\sT^{2}M$.
\end{Example}

\subsection{Relation with symplectical double vector bundles}
Consider a symmetric double vector bundle $(\sD, \sigma)$, where $\sD = (D; A, B; M)$, so $\sigma\colon (D; A, B; M)$ $\rightarrow (D; B, A; M)$ is a DVB morphism satisfying~\eqref{eqn:condition_on_sigma1} and~\eqref{eqn:condition_on_sigma2}. Taking the dual of $\sigma$ with respect to vector bundle structures def\/ined by vertical arrows produces a~\emph{map} $D_{B}^{*} \rightarrow D^{*}_{A}$, and not just a~relation, as $\sigma$ is an isomorphism. We shall see soon that it is a DVB morphism depicted in the diagram~\eqref{diag:sigma_V} which we shall denote by~$\sigma^\sv\colon \sD^{\f\sv} \rightarrow \sD^\sv$. Moreover, $\sigma^{\sv}|_{C^*} = \id_{C^*}$ because of~\eqref{eqn:condition_on_sigma2},
\begin{gather}\label{diag:sigma_V}
\leavevmode
\begin{tabular}{ >{\centering\arraybackslash}m{0.5in} >{\centering\arraybackslash}m{0.5in} >{\centering\arraybackslash}m{0.5in} >{\centering\arraybackslash}m{0.5in} >{\centering\arraybackslash}m{0.5in}}
\begin{xy}
(0,20)*+{(D^{*}_{B})^*_{C^*}}="a"; (20,20)*+{C^{*}}="b";
(10, 10)*+{\sE^\sh}="e";
(0,0)*+{A}="c"; (20,0)*+{ M}="d";%
{\ar "a";"b"};
{\ar "a";"c"};
{\ar "b";"d"};
{\ar "c";"d"};
\end{xy}
&
$\stackrel{I}{ \xleftarrow{\hspace*{1cm}} }$
&
\begin{xy}
(0,20)*+{D^{*}_{A}}="a"; (20,20)*+{C^{*}}="b";
(10, 10)*+{\sD^\sv}="e";
(0,0)*+{A}="c"; (20,0)*+{ M}="d";%
{\ar "a";"b"};
{\ar "a";"c"};
{\ar "b";"d"};
{\ar "c";"d"};
\end{xy}
&
$\stackrel{\sigma^{\sv}}{ \xleftarrow{\hspace*{1cm}} }$
&
\begin{xy}
(0,20)*+{D^{*}_{B}}="a"; (20,20)*+{C^{*}}="b";
(10, 10)*+{\sE=\sD^{\f\sv}}="e";
(0,0)*+{B}="c"; (20,0)*+{ M.}="d";%
{\ar "a";"b"};
{\ar "a";"c"};
{\ar "b";"d"};
{\ar "c";"d"};
\end{xy}
\end{tabular}
\end{gather}

The vector bundles $D^{*}_{A}$ and $D^{*}_{B}$ are in duality~\cite{Konieczna:1999} (over base $C^{*}$, where $C$ is the core of $\sD$), which results in a DVB isomorphism $\sD^\sv = (D^*_A; A, C^*; M)\rightarrow \sD^{\f\sv\sh} = ((D^*_B)^*_{C^*}; A, C^*; M)$. It is worth to remember that with the choice of the pairing
\begin{gather}\label{eqn:DVB_pairing}
 \dblInnerBracket{\cdot, \cdot}\colon \  D^*_A\times_{C^*} D^*_B \rightarrow \R, \qquad \dblInnerBracket{\Phi, \Psi} = \Phi(d)-\Psi(d), \qquad
 d\in D,
\end{gather}
as given in~\cite{Konieczna:1999}, where $\Phi\in (D_a)^*$, $\Psi\in (D_b)^*$ and $d$ maps to $(a, b)$ under the canonical projection $D\rightarrow A\times_{M} B$, the DVB isomorphism $I\colon \sD^\sv \simeq \sD^{\f\sv\sh}$ is the identity on the bundle $C^*\rightarrow M$, and the core bundle $B^*\rightarrow M$, but it is minus the identity on the bundle $A\rightarrow M$. Moreover, there is no canonical DVB isomorphism which is the identity on all the building bundles. Nevertheless, $\sigma^{\sv}$ gives a vector bundle isomorphism from~$D^{*}_{B}$ to its dual over $C^*$ covering $\id_{C^*}$, and so the additional structure of $\sigma$ gives rise to a non-degenerate bilinear form on $E\rightarrow C^*$, where \mbox{$E=D^*_B$}. We shall shortly show that the bilinear form is skew-symmetric, and it induces a DVB isomorphism from $\sE = \sD^{\f\sv} = (E; B, C^*; M)$ to $E^\sh$. These observation lead us to the following def\/inition.

\begin{Definition}
A \emph{symplectical double vector bundle} is a double vector bundle $\sE = (E; E_{01}$, $E_{10}; M)$ equipped with a linear non-degenerate skew-symmetric form on $ E^{*}_{E_{10}} \rightarrow C^{*}$ that induces a double vector bundle isomorphism $\beta\colon \sE \rightarrow \sE^\sh$.
\end{Definition}

A morphism between symplectical DVBs
\begin{gather*}
\Omega\colon \  \sE = (E; E_{01}, E_{10}; M) \dashrightarrow \sE' = (E'; E'_{01}, E'_{10}; M')
\end{gather*}
 is a relation of the form $\Omega = \gamma^\sv$ for some DVB morphism
 \begin{gather*}
 \gamma\colon \ \sE'^\sv = ({E'}^*_{E'_{01}}; E'_{01}, {E'}^*_{11}; M') \rightarrow \sE^\sv = ({E}^*_{E_{01}}; E_{01}, {E}^*_{11}; M).
\end{gather*}
It follows that $\Omega$ is a vector subbundle of $\overline{E}\times E' \rightarrow E_{11}\times E_{11}'$. It is required that it is an isotropic subbundle with respect to the dif\/ference of the skew-symmetric forms on $E'$ and $E$ (which we indicated by the bar over $E$).

Evidently, symplectical double vector bundles form a category with we denote as $\SpVB[2]$.

\begin{Remark}
We will use the nomenclature \emph{symplectical double vector bundle} as opposed to \emph{symplectic double vector} as to avoid possible confusion with double vector bundles equipped with symplectic forms (of whatever weight) on their total spaces. This choice also avoids confusion with $n$-graded symplectic manifolds as def\/ined and studied in~\cite{Grabowski:2006}.
\end{Remark}

To highlight the main constructions of this subsection, given any symmetric double vector bundle $(\sD, \sigma)$ one associates with it a sympletical double vector bundle $(D_{B}^{*}, \langle , \rangle, \sigma^{\sv})$, where the pairing is given by
\begin{gather*}\langle \Psi_{1}, \Psi_{2} \rangle:= \Psi_{1}(\sigma(d)) - \Psi_{2}(d),\end{gather*}
where $\Psi_{1}, \Psi_{2} \in D_{B}^{*}$ project to the same point in $C^{*}$. The point $d\in D$ is any point that projects to $(a_{1},b_{2}) \in A\times_{M}B$, we def\/ine $b_{2} := \pi_{B}^{D_{B}^{*}}(\Psi_{2})\in B$ and $a_{1} := \sigma(\pi_{B}^{D_{B}^{*}}(\Psi_{1})) \in A$. This association is functorial, moreover we have the following theorem.

\begin{Theorem}\label{thm:equivalence_symmetric-symplectical}
The category of symplectical double vector bundles is equivalent to the category of symmetric double vector bundles, and hence is also equivalent to the category of graded bundles of degree~$2$.
\end{Theorem}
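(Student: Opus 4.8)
Since the earlier subsections have already produced (i) a functor $\Phi\colon\GrB[2]\to\SymmVB[2]$ with a quasi-inverse $\Psi$, establishing $\GrB[2]\simeq\SymmVB[2]$, and (ii) the explicit assignment $(\sD,\sigma)\mapsto(\sE,\langle\cdot,\cdot\rangle)=(D_B^{*},\langle\cdot,\cdot\rangle,\sigma^\sv)$ sending a symmetric DVB to a candidate symplectical DVB, it suffices to prove that this last assignment is (the object part of) an equivalence $\SymmVB[2]\to\SpVB[2]$; composing with the equivalence already in hand then gives the full chain of the theorem. So the proof splits into three tasks: show the assignment lands in $\SpVB[2]$ (well-definedness), construct an inverse assignment, and check that both are functorial and mutually inverse up to natural isomorphism.

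\textbf{Step 1: $(\sE,\langle\cdot,\cdot\rangle)$ is a symplectical DVB.} Here I must verify the three ingredients of the definition: that $\sigma^\sv\colon\sD^{\f\sv}\to\sD^\sv$ is a genuine DVB morphism (this is the duality argument already sketched around diagram~\eqref{diag:sigma_V}, using that $\sigma$ is an isomorphism so its vertical dual is a map, not just a relation); that the induced bilinear form $\langle\cdot,\cdot\rangle$ on $E=D_B^{*}\to C^{*}$ is non-degenerate (immediate, since $\sigma$ and hence $\sigma^\sv$ is an isomorphism); and — the one genuine computation — that $\langle\cdot,\cdot\rangle$ is \emph{skew}-symmetric. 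The skew-symmetry should be extracted exactly as in the degree-two calculation already carried out in Subsection~\ref{ssec:Characterisation_deg_two}: the relation $\sigma^\f\circ\sigma=\id_D$ combined with $\sigma|_C=\id_C$ forces the local coefficients $\sigma^i_{ba}$ to satisfy $\sigma^i_{ba}=-\sigma^i_{ab}$, and the pairing $\langle\Psi_1,\Psi_2\rangle=\Psi_1(\sigma(d))-\Psi_2(d)$ is built precisely from these coefficients, so it inherits the skew-symmetry. One also checks independence of the choice of $d$ (two choices differ by an element of the core $C$, on which $\sigma$ is the identity, so the difference cancels) and that $\beta\colon\sE\to\sE^\sh$ induced by the form is a DVB isomorphism.

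\textbf{Step 2: the inverse construction.} Given a symplectical DVB $\sE=(E;E_{01},E_{10};M)$ with form $\beta$, I recover a symmetric DVB by reversing the duality: the form $\beta\colon\sE\to\sE^\sh$ together with the canonical self-duality of a DVB over its core-dual base yields a DVB isomorphism of $\sD:=\sE^\sv$ with (the flip of) its vertical dual, which is exactly the data of a $\sigma\colon\sD\to\sD^\f$; conditions~\eqref{eqn:condition_on_sigma1} and~\eqref{eqn:condition_on_sigma2} translate respectively into $\beta$ being an isomorphism (self-inverse up to the canonical identifications) and the normalisation $\sigma^\sv|_{C^*}=\id_{C^*}$ noted after diagram~\eqref{diag:sigma_V}. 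The skew-symmetry of $\beta$ is what guarantees $\sigma^\f\circ\sigma=\id$ rather than some other involutive-type identity. That the two constructions are mutually inverse is then a bookkeeping check on the building bundles, using that vertical-dualisation $\sD\mapsto\sD^\sv$ is an involution on DVBs up to canonical isomorphism (the $I$ in~\eqref{diag:sigma_V}), with the one sign subtlety — $I$ is minus the identity on the $A$-leg — tracked carefully so that it cancels when composed twice.

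\textbf{Step 3: morphisms.} Finally I must match the two notions of morphism. A morphism of symmetric DVBs is an honest map $\phi$ with $\sigma'\circ\phi=\phi^\f\circ\sigma$; a morphism of symplectical DVBs is by definition a relation of the form $\gamma^\sv$ for a DVB morphism $\gamma$ between the vertical duals, required to be isotropic for the difference of the forms. The correspondence is: $\phi\mapsto(\phi^\sv)^\sv$-type data, and the compatibility $\sigma'\circ\phi=\phi^\f\circ\sigma$ dualises exactly to the statement that $\phi^\sv$ intertwines $\sigma^\sv$ with $\sigma'^\sv$, which in turn is equivalent to the relation $\gamma^\sv$ being isotropic for the difference of the skew-forms. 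One checks this respects composition and identities, giving functors in both directions, and that they are inverse up to natural isomorphism by the object-level computation of Step~2.

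\textbf{Main obstacle.} The routine parts are the local-coordinate verifications of skew-symmetry and of the morphism condition. The genuinely delicate point is bookkeeping the vertical-dual duality of double vector bundles — in particular keeping straight over which base ($C^{*}$) the duality is taken, and correctly accounting for the sign asymmetry in the canonical isomorphism $I\colon\sD^\sv\simeq\sD^{\f\sv\sh}$ (identity on $C^{*}$ and on the core, \emph{minus} identity on the side bundle $A$). It is precisely this sign that interchanges ``symmetric'' and ``skew-symmetric'' when passing between the $\sigma$-picture and the $\beta$-picture, so the equivalence genuinely hinges on getting it right; everything else is diagram-chasing against the structure already set up in~\eqref{diag:sigma_V} and~\eqref{eqn:DVB_pairing}, together with the equivalence $\GrB[2]\simeq\SymmVB[2]$ already proven.
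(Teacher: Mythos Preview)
Your plan is correct and follows essentially the same three-part structure as the paper's proof: verify skew-symmetry of the induced form, build the inverse assignment $\beta\mapsto\sigma=(I^{-1}\circ\beta)^{\sv}$, and match morphisms via isotropy of the graph of $\phi^{\sv}$.

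One methodological difference is worth flagging. For the skew-symmetry in Step~1 you propose to go through local coordinates, deducing it from $\sigma^{i}_{ba}=-\sigma^{i}_{ab}$. The paper instead gives a short coordinate-free argument directly from the axioms: with $\langle\Psi_1,\Psi_2\rangle=\Psi_1(\sigma(d))-\Psi_2(d)$ for $d$ over $(a_1,b_2)$, one computes $\langle\Psi_2,\Psi_1\rangle$ using $d'=\sigma(d)$ (which lies over $(a_2,b_1)$) and then $\sigma^{\f}\circ\sigma=\id_D$ immediately yields $\langle\Psi_2,\Psi_1\rangle=-\langle\Psi_1,\Psi_2\rangle$. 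This is cleaner and makes transparent that skew-symmetry is exactly the condition $\sigma^{\f}\circ\sigma=\id$, while your coordinate route obscures this and also requires separately checking coordinate-independence. Similarly, for Step~3 the paper verifies isotropy by an invariant calculation and isolates a small technical lemma (that the vertical dual $R^{\sv}$ of a DVB relation $R$ is again a DVB relation) which you implicitly need but do not name; you should make this explicit rather than absorbing it into ``the duality argument already sketched''.
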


We precede the proof with a technical lemma.
\begin{Lemma} Assume $R\colon \sD'=(D'; A', B'; M') \dashrightarrow \sD=(D; A, B; M)$ is a DVB relation, i.e., $R\subset D'\times D$ is a double vector subbundle. Then $R^\sv\colon \sD^\sv \dashrightarrow {\sD'}^\sv$, where
\begin{gather*}R^\sv:= \big\{(\Phi, \Phi')\in {D}^*_{A}\times {D'}^*_{A'}: \Phi(d)=\Phi'(d')\  \text{for all}\  d, d'\in D \  \text{such that}\  (d', d)\in R\big\},
\end{gather*}
is a DVB relation, as well.
\end{Lemma}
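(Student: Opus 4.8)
The plan is to verify directly that $R^\sv$ is a double vector subbundle of the product double vector bundle
$\sD^\sv\times{\sD'}^\sv=\bigl(D^*_A\times{D'}^*_{A'};\,A\times A',\,C^*\times{C'}^*;\,M\times M'\bigr)$,
checking the two vector bundle structures in turn. The common engine is the elementary observation that the dual of a linear relation is again a linear relation of complementary dimension: if $W\subset U\times V$ is a linear subspace of finite-dimensional vector spaces, then
$W^\sv:=\{(\phi,\psi)\in U^*\times V^*:\phi(u)=\psi(v)\ \text{for all}\ (u,v)\in W\}$
is carried isomorphically onto the annihilator $\mathrm{Ann}(W)\subset(U\times V)^*$ by $(\phi,\psi)\mapsto(\phi,-\psi)$, so $\dim W^\sv=\dim U+\dim V-\dim W$; moreover this assignment is natural with respect to linear relations and intertwines the flip $U\times V\leftrightarrow V\times U$ with the corresponding flip on duals. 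All the analytic content of the lemma reduces to spreading this fact out over a base by a constant-rank argument.

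\emph{First vector bundle structure.} Regard $\sD'\times\sD\to A'\times A$ as an ordinary vector bundle. Since $R$ is a double vector subbundle of $\sD'\times\sD$ it is, in particular, a vector subbundle of $\sD'\times\sD\to A'\times A$ over a submanifold $S\subset A'\times A$, and $S$ is itself a vector subbundle of $A'\times A\to M'\times M$ over some $S_0\subset M'\times M$. Hence the fibrewise annihilator $\mathrm{Ann}(R)$ is a smooth vector subbundle of $\bigl(\sD'\times\sD\bigr)^*_{A'\times A}\big|_{S}$ of locally constant rank $\operatorname{rk}D+\operatorname{rk}D'-\operatorname{rk}R$. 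The fibrewise-linear diffeomorphism that swaps the two factors and negates one of them identifies $\mathrm{Ann}(R)$ with $R^\sv$; therefore $R^\sv$ is a smooth vector subbundle of $D^*_A\times{D'}^*_{A'}\to A\times A'$ over the submanifold $\widetilde S\subset A\times A'$ obtained from $S$ by interchanging the two factors, and $\widetilde S$ is a vector subbundle of $A\times A'\to M\times M'$ over $\widetilde S_0$. This yields the $A$-side structure of $R^\sv$ together with all of its side and base data.

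\emph{Second vector bundle structure and core.} It remains to show that $R^\sv$ is also a vector subbundle for the structure $D^*_A\times{D'}^*_{A'}\to C^*\times{C'}^*$, with the expected core; here the hypothesis that $R$ is a \emph{double} vector subbundle — not just a vector subbundle over $A'\times A$ — is essential. Recall from~\cite{Konieczna:1999} (compare~\eqref{eqn:DVB_pairing} and diagram~\eqref{diag:sigma_V}) that the second vector bundle structure on the vertical dual $\sD^\sv$ is itself a vertical-dual structure: dualising $\sD^\sv$ over its side $C^*$ returns, canonically up to the sign conventions fixed in~\eqref{eqn:DVB_pairing}, the horizontal dual $\sD^\sh=(D^*_B;\,C^*,\,B;\,M)$. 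Under this identification, dualising $R^\sv$ over the $C^*$-structure corresponds to the analogous construction $R\mapsto R^\sh$ obtained from the vector bundle $\sD'\times\sD\to B'\times B$; and $R^\sh$ is a smooth vector subbundle by exactly the argument of the previous paragraph, because $R$, being a double vector subbundle, is also a vector subbundle over $B'\times B$. Since dualisation of linear relations is an involution, $R^\sv$ is therefore a vector subbundle for the $C^*$-structure over a submanifold of $C^*\times{C'}^*$. Finally, the core of $R^\sv$ is the dual relation of the core $C_R\subset C'\times C$ of $R$, which is a vector subbundle of $C'\times C\to M'\times M$ precisely because $R$ is a double vector subbundle; hence its dual relation is a vector subbundle of $B^*\times{B'}^*\to M\times M'$. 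All the defining conditions of a double vector subbundle are met, so $R^\sv\colon\sD^\sv\dashrightarrow{\sD'}^\sv$ is a DVB relation.

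The step I expect to be the main obstacle is not analytic but organisational: keeping the two dualisations, the flip, and the sign conventions of~\eqref{eqn:DVB_pairing} mutually consistent when one identifies the re-dual of $\sD^\sv$ over $C^*$ with $\sD^\sh$ and, correspondingly, the re-dualisation of $R^\sv$ with $R^\sh$. Once this bookkeeping is pinned down, the lemma follows from the constant-rank argument applied to the two vector bundle structures of $\sD'\times\sD$.
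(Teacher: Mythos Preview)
Your first step is sound and matches what the paper tacitly assumes: fibrewise over $A\times A'$, the set $R^\sv$ is (up to a sign flip) the annihilator of a vector subbundle of constant rank, hence a smooth vector subbundle itself. The problem is your second step. You want to deduce that $R^\sv$ is a vector subbundle for the $C^*$-structure from the identification $(\sD^\sv)^*_{C^*}\cong\sD^\sh$ together with the fact that $R^\sh$ is a vector subbundle over $B\times B'$. But the ``first-paragraph argument'' applied to $R^\sh$ only tells you that its dual relation over $C^*$ is a vector subbundle \emph{provided $R^\sh$ is already known to be a vector subbundle for the $C^*\times{C'}^*$-structure}; and by the obvious symmetry between $R^\sv$ and $R^\sh$, that is precisely the statement you are trying to prove. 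Knowing that $R^\sh$ is smooth and a vector subbundle over $B\times B'$ does not, on its own, control the $C^*$-structure: the slogan ``dualisation of linear relations is an involution'' works fibre by fibre, but does not by itself give you a submanifold base in $C^*\times{C'}^*$ with constant-rank fibres. So the argument as written is circular.

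The paper sidesteps all of this with a one-line computation. Having $R^\sv$ as a submanifold (your Step~1), it invokes \cite[Theorem~2.3]{JG_MR_higher_vb}: a submanifold of a vector bundle that is invariant under the scalar multiplication is automatically a vector subbundle. So it suffices to check that $(\lambda\cdot_{C^*}\Phi,\,\lambda\cdot_{{C'}^*}\Phi')\in R^\sv$ whenever $(\Phi,\Phi')\in R^\sv$, and this follows immediately from the identity $(\lambda\cdot_{C^*}\Phi)(\lambda\cdot_B d)=\lambda\,\Phi(d)$ together with the fact that $R$ is closed under $\lambda\cdot_B$. That last closure is exactly where the \emph{double} vector subbundle hypothesis on $R$ enters, and it replaces your entire second paragraph without any of the duality/sign bookkeeping you flag as the main obstacle.
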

\begin{proof} It amounts to check that $R^\sv$ is a vector subbundle of the product of vector bundles $D^*_A\rightarrow C^*$ and ${D'}^*_{A'} \rightarrow {C'}^*$. Due to \cite[Theorem 2.3]{JG_MR_higher_vb}, since $R^\sv$ is a submanifold of $D^*_A \times {D'}^*_{A'}$ it is enough to show that
\begin{gather*}
(\lambda \cdot_{C^*}\Phi, \lambda \cdot_{{C'}^*}\Phi') \in R^\sv \qquad \text{for any}\quad (\Phi, \Phi')\in R^\sv,\quad \text{and}\quad \lambda\in \R.
\end{gather*}
We have $(\lambda\cdot_{C^*} \Phi)(\lambda\cdot_{B} d) = \lambda \cdot \Phi(d) = \lambda \cdot \Phi'(d') = (\lambda\cdot_{{C'}^*}\Phi') (\lambda\cdot_{B'} d')$, hence the result follows as $(d', d)\in R$ if and only $(\lambda\cdot_{B'} d', \lambda\cdot_{B} d)\in R$.
\end{proof}
\begin{proof}[Proof of Theorem \ref{thm:equivalence_symmetric-symplectical}]
Assume $(\sD, \sigma)$ is a symmetric DVB. Then $\sigma^\sv$ is DVB isomorphism and, as we noticed earlier, it def\/ines a bilinear form on $E\rightarrow C^*$, where $E=D^*_B$. We shall show f\/irst that it is skew-symmetric. Take $\Psi_1, \Psi_2\in D_B^*$ that project to the same $\alpha\in C^*$. Denote $b_j = \pi^{D_B^*}_B(\Psi_j)\in B$, $j=1,2$, and set $a_j=\sigma(b_j)\in A$. Thus,
\begin{gather*}
\alpha(c)=\Psi_1(d_1 {+}_{B} c)-\Psi_1(d_1) = \Psi_2(d_2 {+}_{B} c)-\Psi_2(d_2)
\end{gather*}
for any $c\in C_m$, $m=\pi^B_M(b_1)=\pi^B_M(b_1)\in M$, and any $d_j\in D$ such that $\pi^D_B(d_j)=b_j$, $j=1,2$.
The bilinear form induced by $\sigma$ is given by
\begin{gather}\label{eqn:skew_form_on_E}
\langle \Psi_1, \Psi_2\rangle_E = \dblInnerBracket{\sigma^V(\Psi_1), \Psi_2} = \sigma^\sv(\Psi_1)(d)-\Psi_2(d) = \Psi_1(\sigma(d)) - \Psi_2(d),
\end{gather}
where $d$ is any element of $D$ that projects to $(a_1, b_2)$ under the projection $D\rightarrow A\times_{M} B$, and $\dblInnerBracket{\cdot, \cdot}$ is the pairing \eqref{eqn:DVB_pairing}. We have
\begin{gather*}
\langle \Psi_2, \Psi_1\rangle = \Psi_2(\sigma(d')) - \Psi_1(d'),
\end{gather*}
where now $d'\in D$ is arbitrary that projects to $(a_2, b_1)$, so we may take $d'=\sigma(d)$ due to \eqref{eqn:condition_on_sigma1} and get
\begin{gather}\label{eqn:skew-symmetry}
\langle \Psi_2, \Psi_1\rangle = \Psi_2(\sigma(\sigma(d))) - \Psi_1(\sigma(d))= - \langle \Psi_1, \Psi_2\rangle,
\end{gather}
so $\InnerBracket_E$ is skew-symmetric, as we claimed.

Assume now that $\phi\colon (\sD', \sigma') \rightarrow (\sD, \sigma)$ is a morphism in $\SymmVB[2]$. The graph of $\phi^\sv$ is not only a vector subbundle of $D_B^*\times {D'}_{B'}^*$ but is also a double vector subbundle of $\sD^\sv\times {\sD'}^\sv$. In particular, it is a vector subbundle of the product of the vector bundles $E\rightarrow C^*$ and $E'\rightarrow C^{'*}$. We need to show that it is an isotropic subbundle. Assume $(\Psi_j, \Psi_j')\in \operatorname{graph} \phi^\sv$, $j=1,2$, that project to $(b_j, b_j')\in B\times B'$ are in the same f\/iber over $C^*\times {C'}^*$. Then we can calculate
\begin{gather*}
\langle (\Psi_1, \Psi_1'), (\Psi_2, \Psi_2')\rangle_{E\times \bar{E'}} =
\langle \Psi_1, \Psi_2\rangle_E - \langle \Psi'_1, \Psi'_2\rangle_E' \\
\hphantom{\langle (\Psi_1, \Psi_1'), (\Psi_2, \Psi_2')\rangle_{E\times \bar{E'}}}{} =
\Psi_1(\sigma(d)) - \Psi_2(d) - \Psi'_1(\sigma'(d'))+ \Psi'_2(d')   \\
\hphantom{\langle (\Psi_1, \Psi_1'), (\Psi_2, \Psi_2')\rangle_{E\times \bar{E'}}}{}
=\Psi_1(\sigma(\phi(d'))) - \Psi'_1(\sigma'(d')) + \Psi'_2(d') - \Psi_2(\phi(d')) = 0,
\end{gather*}
as $\sigma\circ \phi = \phi\circ \sigma'$ and $\Psi_j' = \Psi_j\circ\phi$.

Conversely, assume that the graph of $\phi^\sv$ is an isotropic subbundle. Then, the calculation
above shows that for any $d'\in D'$ and $\Psi_1\in D^*_B$ we have
\begin{gather*}
\Psi_1(\sigma(\phi(d'))) = \Psi'_1(\sigma'(d')) = \Psi_1(\phi(\sigma'(d'))),
\end{gather*}
so $\sigma\circ \phi = \phi\circ \sigma'$. Thus we have established a functor $\SymmVB[2]\rightarrow \SpVB[2]$.

Given a symplectical DVB, say $\sE = (E; E_{01}, E_{10}; M)$, whose skew-symmetric form is def\/ined by a DVB morphism $\beta\colon \sE \rightarrow \sE^\sh$, we come back to $(\sD, \sigma)$ in an obvious way, by setting $\sigma := (I^{-1}\circ \beta)^V$ which is a DVB morphism $\sD\rightarrow \sD^\f$ satisfying \eqref{eqn:condition_on_sigma1} and \eqref{eqn:condition_on_sigma2}. Indeed, $\sigma\colon D_A\rightarrow D_B$ is characterized by the property
 \begin{gather}\label{eqn:sigma_from_beta}
 \beta(\Phi(d)) = \Phi(\sigma(d)),
 \end{gather}
 for any $\Phi\in D_B^*$ that projects to $b:=\pi^D_B(d)$. Since $\Phi$ and $\beta(\Phi)$ lies over the same functional $\alpha\in C^*$, we f\/ind from~\eqref{eqn:sigma_from_beta} that $\alpha(c) = \alpha(\sigma(c))$, hence~\eqref{eqn:condition_on_sigma2}. To see~\eqref{eqn:condition_on_sigma1}, take $\Psi_1, \Psi_2$, so that
 \begin{gather*}
 \langle \Psi_1, \Psi_2\rangle_E = \dblInnerBracket{\beta(\Psi_1), \Psi_2} = \beta(\Psi_1)(d)-\Psi_2(d),
 \end{gather*}
 where $d\in D$ is chosen so that it projects to $(a_1, b_2)\in A\times_M B$. By skew-symmetry of the bilinear form, $\beta(\Psi_1)(d)-\Psi_2(d)$ coincides with $-\beta(\Psi_2)(d') + \Psi_1(d')$, where now $d'$ should project to $(a_2, b_1)$, so we may take $d':=\sigma(d)$. Using~\eqref{eqn:sigma_from_beta}, we may cancel $\Psi_1(d')=\beta(\Psi_1)(d)$ and get $\Psi_2(d) = \beta(\Psi_2)(d')$, so $\Psi_2(d)=\beta(\Psi_2)(d') = \Psi_2(d'')$, where $d''=\sigma(d')=\sigma(\sigma(d))$. Since $\Psi_2$ is arbitrary, we get~\eqref{eqn:condition_on_sigma1}, this completes our proof.
\end{proof}

\begin{Remark} Note that the conditions \eqref{eqn:condition_on_sigma1} and \eqref{eqn:condition_on_sigma2} on $\sigma$ are symmetric with respect to the legs $D\rightarrow A$ and $D\rightarrow B$, whereas the skew-symmetric form on $E$ is not so. A question arise: what is a connection between $\sigma^\sv$ and $\sigma^\sh$? The later def\/ines a skew-symmetric form on $E^*:= \sD^{fH}= D^*_A$ which is the dual VB to $E$ over $C^*$. This suggests the answer: the skew-metrics on $E$ and $E^*$ def\/ined by $\sigma^\sv$ and $\sigma^\sh$, respectively, def\/ine the same identif\/ication $E\simeq E^*$:
\begin{gather*}
(\Phi, \Psi)\in \sigma^\sv \Leftrightarrow (\Psi, \Phi)\in \sigma^\sh.
\end{gather*}
Indeed, the former means that $\Phi(\sigma(d)) = \Psi(d)$ for any $d\in D$, while the latter is $\Psi(\sigma(d')) = \Phi(d')$ which is the same as $\sigma(\sigma(d)) = d$.
\end{Remark}

\begin{Remark}
Note that our observation provides a much simpler and shorter proof of the equivalence of the categories of metric DVBs and the category of $N$-manifolds of degree $2$ given in \cite{JotzLean:2015}, as well. All we need to do is to consider, instead of $\SymmVB[2]$, a similar category of DVBs $(D; A, B; M)$ equipped with a morphism $\sigma\colon \sD\rightarrow \sD^\f$ which satisf\/ies \eqref{eqn:condition_on_sigma1} and $\sigma|_{C} = - \id_C$ instead of~\eqref{eqn:condition_on_sigma2}. Then, the formula~\eqref{eqn:skew_form_on_E} changes to
\begin{gather}\label{eqn:symmetric_form_on_E}
\langle \Psi_1, \Psi_2\rangle_E = \dblInnerBracket{\sigma^{\sv}(\widetilde{\Psi_1}), \Psi_2} = \sigma^\sv(\widetilde{\Psi_1})(d)-\Psi_2(d) = - \Psi_1(\sigma(d)) - \Psi_2(d).
\end{gather}
Since $\sigma^\sv$ is minus the identity on $C^*$, we had to compose $\sigma^\sv$ with an DVB automorphism of $\sD^{\f\sv} = (D^*_B; B, C^*; M)$ which is minus the identity on the core. It is simply given by $\Psi \mapsto \widetilde{\Psi} = \underset{C^*}{-} \Psi$, $\Psi\in D^*_B$. It is straightforward to modify~\eqref{eqn:skew-symmetry} to see that the form given by~\eqref{eqn:symmetric_form_on_E} is symmetric and that DVB morphisms intertwining $\sigma$'s correspond to isotropic subbundles. A local form of~$\sigma$ satisfying~\eqref{eqn:condition_on_sigma1} and $\sigma|_{C}=-\id_C$ is $\sigma^\ast y^a_{01} = y^a_{10}$, $\sigma^\ast y^a_{10}= y^a_{01}$, $\sigma^\ast z^i = - z^i + y^a_{01} y^b_{10} \sigma^i_{ab }$ with \emph{symmetric} $\sigma^i_{ab}$: $\sigma^i_{ab}= \sigma^i_{ba}$. Thus we obtain an equivalence of categories of metric and skew-symmetric DVBs, \catname{skewSymmVB}[2], which dif\/fers from \SymmVB[2] only in that we impose $\sigma_C=-\id_C$ instead of~\eqref{eqn:condition_on_sigma1}. Then we can consider the linearisations of $N$-manifolds in a~similar manner as graded bundles, allowing the side bundles to be purely odd and prove that the linearisations of $N$-manifolds of degree $2$ are exactly skew-symmetric DVBs. We postpone details until Section~\ref{sec:Nmanifolds}.
\end{Remark}

\begin{Example}
Consider a vector bundle $\tau\colon E\rightarrow M$ equipped with a skew-symmetric form $\omega\colon E\times_{M} E \rightarrow \R$ and take its tangent lift
\begin{gather*}
\omega_{\sT E} := \dt \omega\colon \  \sT E\times_{\sT M} \sT E \rightarrow \R.
\end{gather*}
Then, $\sE = (\sT E; E, \sT M; M)$ is a symplectical DVB. What is the associated DVB morphism $\sigma$ which goes from $\sD = (\sT^*E; E, E^*; M)$ to its f\/lip $\sD^\f$? The form $\omega$ gives identif\/ications
 \begin{gather*}
 \tilde{\omega}\colon \ E\rightarrow E^*, \qquad \omega'\colon \ \sT^* E \rightarrow \sT^* E^*.
 \end{gather*}
 It is a matter of simple calculation to see that $\sigma$ is obtained by composing $\omega'$ with the canonical dif\/feomorphism $\tilde{\mathcal{R}}\colon \sT^*E^*\rightarrow \sT^* E$ (see~\cite{Dufour:1990, Mackenzie:1994, Tulczyjew:1977}) composed with an automorphism of $\sT^* E$ which is minus the identity on side bundle $E\rightarrow M$, and on the core bundle~$\sT^* M$, and so it is the identity on the side bundle $E^*\rightarrow M$, so that the composition $\sigma = \tilde{R}\circ \omega'$ is identity on the core bundle. In standard local coordinates $(x^a, y^i, p_a, p_i)$ on $\sT^* E$, $\sigma\colon \sT^*E \rightarrow \sT^* E$ reads as
 \begin{gather*}
 \sigma^*(x^a)=x^a, \qquad \sigma^*(y^j) = p_i  \omega^{ij}(x), \qquad \sigma^*(p_a) = p_a + y^k \omega_{kj}(x)\frac{\pa \omega^{ji}(x)}{\pa x^a} p_i,\\
  \omega^*(p_j) = y^i \omega_{ij}(x),
 \end{gather*}
 where $(\omega_{ij})$ is the matrix of the form $\omega$ and $(\omega^{ij})$ is its inverse. Due to skew-symmetry of $\omega$ the morphism $\sigma$ really squares to the identity.  We may similarly take a symmetric form on $E$ instead of $\omega$, def\/ine a metric DVB structure on $\sT E$, and f\/ind an analogues presentation of the associated morphism $\sigma$.
\end{Example}

\subsection{Lie algebroid structures}
A further observation here is that a symmetric DVB $(\sD, \sigma)$ gives rise to a Lie algebroid structure on the leg $D\rightarrow A$. To see this, take a graded bundle $F_2$ of degree $2$ and notice that the vertical subbundle $\sV F_2$ is a Lie sub-algebroid of the tangent Lie algebroid $\sT F_2$, and it induces a Lie algebroid structure on $\Linr(F_2) \rightarrow F_1$. Indeed, by very def\/inition of $\pLinr(F_2)$, a local section of the later can be naturally identif\/ied with a section $X$ of ${\sV F_2} \rightarrow F_2$ which is invariant with respect to the canonical action of the model vector bundle on the af\/f\/ine f\/ibration $\tau^2_1\colon F_2\rightarrow F_1$. In standard local coordinates $\big(x^A, y^a, z^i\big)$ on~$F_2$ it means that the vector f\/ield~$X$ has the form
\begin{gather*}
X=f^a(x, y)\pa_{y^a} + f^i(x, y) \pa_{z^i},
\end{gather*}
i.e., the coef\/f\/icient functions $f^a$, $f^i$ are constant on the f\/ibers of the f\/ibration $\tau^2_1$. We see immediately that such vector f\/ields are closed with respect to the Lie bracket of vector f\/ields. The anchor map $\rho\colon \pLinr(F_2)\rightarrow \sT F_1$ is simply $\sT \tau^2_1|_{\pLinr(F_2)}$, so $ \rho(X) = f^a(x, y)\pa_{y^a}$. Staring with an arbitrary symmetric DVB $(\sD, \sigma)$, we take as $F_2$ the diagonalisation of~$\sD$ def\/ined by~$\sigma$, and then transfer the obtained algebroid structure on the linearisation of $F_2$ to the vector bundle $D\rightarrow A$ by means of the isomorphism~\eqref{eqn:def_iso_I} $I\colon (\sD, \sigma) \rightarrow \Linr(F_2)\simeq \pLinr(F_2)$. A simple calculation shows that
\begin{gather*}
 [e_a, e_b] = \sigma_{ab}^i(x) f_i, \qquad [e_a, f_i]=0=[f_i, f_j], \qquad \rho(e_a) = \pa_{y_{01}^a}, \qquad \rho(f_i)=0,
\end{gather*}
where $(e_a, f_i)$ is a local basis of sections of $\pi^D_A\colon D\rightarrow A$ dual to $(y_{10}^a, z_{11}^i)$, and we use the local presentation of $\sigma$ as in~\eqref{eqn:sigma_ab_i}. It follows that the linear Poisson tensor on the dual bundle $D^*_A$, in standard f\/iber coordinates $(p_a^{10}, p_i^{11})$ dual to $(y_{10}^a, z_{11}^i)$, is given by
\begin{gather}\label{eqn:Poisson structure}
\Lambda = \frac12 \sigma_{ab}^i p_i^{11} \pa_{p_a^{10}} \wedge \pa_{p_b^{10}} + \pa_{p_a^{10}} \wedge \pa_{ y_{01}^a}.
\end{gather}
With respect to the standard gradation on the DVB $(D^*_A; A, C^*; M)$, the weights of $p_a^{10}$, $p_i^{11}$ are equal $(1,1)$ and $(1, 0)$, respectively. Therefore, the Poisson tensor $\Lambda$ is homogeneous of bi-weight~$(-1,-2)$.

Thus, we are lead to the following.
\begin{Proposition}
Any symmetric double vector bundle $(\sD, \sigma)$ is canonically equipped with the structure of a Lie algebroid on its leg $D \rightarrow A$ such that the associated Poisson structure on the dual double vector bundle~$D^*_A$ is homogeneous of bi-weight~$(-1,-2)$.
\end{Proposition}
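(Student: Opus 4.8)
The plan is to obtain the Lie algebroid on $D\to A$ by transporting, through the canonical isomorphism \eqref{eqn:def_iso_I}, a Lie algebroid structure that lives naturally on the linearisation of the diagonalisation of $(\sD,\sigma)$, and then to read off the bi-weight of the induced linear Poisson tensor from the explicit local formulas. First I would pass to $F_2:=\Psi(\sD,\sigma)$, so that \eqref{eqn:def_iso_I} provides a canonical isomorphism $I\colon(\sD,\sigma)\to(\Linr(F_2),\kappa|_{\Linr(F_2)})\simeq\pLinr(F_2)$; here $F_2$, $\pLinr(F_2)$ and $I$ are all canonically attached to $(\sD,\sigma)$ by the equivalence $\GrB[2]\simeq\SymmVB[2]$, by functoriality of $\pLinr$, and by the coordinate independence of $I$ noted after \eqref{eqn:def_iso_I}. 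Hence it suffices to put a canonical Lie algebroid structure on $\pLinr(F_2)\to F_1$ and to compute the corresponding Poisson tensor on the dual.

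The Lie algebroid on $\pLinr(F_2)$ comes from the tangent Lie algebroid. On $\sT F_2$, viewed as a Lie algebroid over $F_2$ with anchor $\id$ and the ordinary bracket of vector fields, the vertical subbundle $\sV F_2$ (with respect to $\tau\colon F_2\to M$) is an involutive subbundle and hence a Lie subalgebroid. By the definition of $\pLinr(F_2)$, a section of $\pLinr(F_2)\to F_1$ is the same as a section $X$ of $\sV F_2\to F_2$ invariant under the canonical action of the model vector bundle of the affine fibration $\tau^2_1\colon F_2\to F_1$; in standard coordinates $(x^A,y^a,z^i)$ on $F_2$ this says exactly that $X=f^a(x,y)\,\partial_{y^a}+f^i(x,y)\,\partial_{z^i}$, i.e.\ the coefficients do not depend on $z$. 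The decisive point is that such vector fields are closed under the bracket: $[X,X']$ is again of this form, so $\Gamma(\pLinr(F_2))$ inherits a Lie bracket, the anchor being $\rho=\sT\tau^2_1|_{\pLinr(F_2)}\colon\pLinr(F_2)\to\sT F_1$ with $\rho(X)=f^a(x,y)\,\partial_{y^a}$, and the Leibniz rule descending from the one for vector fields. This structure is visibly canonical.

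Then I would transport this structure along $I$ to $D\to A$ and compute in the local basis $(e_a,f_i)$ of $\Gamma(\pi^D_A\colon D\to A)$ dual to the fibre coordinates $(y_{10}^a,z_{11}^i)$. Using the explicit form of $I$ in \eqref{eqn:def_iso_I} and the local presentation \eqref{eqn:sigma_ab_i} of $\sigma$ (together with $\sigma_{ab}^i=-\sigma_{ba}^i$) one gets
\begin{gather*}
[e_a,e_b]=\sigma_{ab}^i(x)\,f_i,\qquad[e_a,f_i]=0=[f_i,f_j],\qquad\rho(e_a)=\partial_{y_{01}^a},\qquad\rho(f_i)=0,
\end{gather*}
the Jacobi identity being automatic since a genuine Lie algebroid has been transported. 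Dualising, the induced linear Poisson bivector on $D^*_A$, in fibre coordinates $(p_a^{10},p_i^{11})$ dual to $(y_{10}^a,z_{11}^i)$, reads
\begin{gather*}
\Lambda=\tfrac12\,\sigma_{ab}^i\,p_i^{11}\,\partial_{p_a^{10}}\wedge\partial_{p_b^{10}}+\partial_{p_a^{10}}\wedge\partial_{y_{01}^a}.
\end{gather*}
On the double vector bundle $(D^*_A;A,C^*;M)$ one has $\w(p_a^{10})=(1,1)$, $\w(p_i^{11})=(1,0)$ and $\w(y_{01}^a)=(0,1)$, whence each of the two terms of $\Lambda$ is of bi-weight $(-1,-2)$, as asserted.

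I expect the only genuinely non-formal step to be the closure claim, namely that the bracket of two $z$-independent vertical vector fields on $F_2$ is again $z$-independent and hence descends to a bracket of sections over $F_1$; once this is settled, the identification of the structure functions, the form of the Poisson bivector, and the bi-weight count are all routine bookkeeping.
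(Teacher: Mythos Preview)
Your proposal is correct and follows essentially the same route as the paper: pass to the diagonalisation $F_2$, observe that $z$-independent vertical vector fields on $F_2$ form a Lie subalgebra inducing a Lie algebroid on $\pLinr(F_2)\to F_1$, transport via the canonical isomorphism $I$ of \eqref{eqn:def_iso_I}, and then compute the structure constants, the dual Poisson tensor, and its bi-weight exactly as you do. The only cosmetic difference is ordering: the paper presents the construction of the Lie algebroid first and states the proposition as its summary, whereas you frame it as a proof of the proposition.
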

By symmetry, we obtain a Lie algebroid on the leg $D\rightarrow B$, as well. However, both structure of Lie algebroid on the legs of $D$ do not constitute a structure of a double Lie algebroid on~$\sD$ since simply they are of bi-weights~$(-1, -2)$ and $(-2,-1)$, not~$(-1, -1)$. Note also, that we do not have the structure of a weighted Lie algebroid (and so not a~$\mathcal{VB}$-algebroid) as def\/ined in~\cite{Bruce:2014}.

\begin{Proposition}
The Poisson structure on~$D^{*}_{B}$ associated with a symmetric double vector bundle~$(\sD,\sigma)$ can be identified with the Poisson structure on the dual of the linearisation~$\pLinr^{*}(F_{2})$, which itself is the reduction of the canonical Poisson structure on~$\sT^{*}F_{2}$.
\end{Proposition}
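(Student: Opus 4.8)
The plan is to realise the Lie--Poisson structure on $\pLinr^{*}(F_{2})$ as a reduction of the canonical cotangent Poisson structure on $\sT^{*}F_{2}$, performed in two natural stages that dualise the chain $\sT F_{2}\supset \sV F_{2}\twoheadrightarrow \pLinr(F_{2})$ of Lie algebroid constructions recalled above, and then to transport the result to $D^{*}_{B}$ along the isomorphism $I$ of~\eqref{eqn:def_iso_I}.

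First, recall that the canonical Poisson structure on $\sT^{*}F_{2}\to F_{2}$ is exactly the linear Lie--Poisson structure dual to the tangent Lie algebroid $\sT F_{2}\to F_{2}$. As noted above, the vertical bundle $\sV F_{2}=\ker(\sT\tau\colon \sT F_{2}\to \sT M)$ is a Lie subalgebroid of $\sT F_{2}$ over $\id_{F_{2}}$. The vector bundle dual of a base-preserving Lie algebroid morphism is a Poisson map between the corresponding Lie--Poisson manifolds --- the fibrewise analogue of the restriction map $\mathfrak g^{*}\to\mathfrak h^{*}$ being Poisson for a subalgebra $\mathfrak h\subset\mathfrak g$ --- so restriction of covectors to vertical vectors yields a surjective Poisson submersion $\sT^{*}F_{2}\twoheadrightarrow (\sV F_{2})^{*}$; thus $(\sV F_{2})^{*}$ carries the Lie--Poisson structure of $\sV F_{2}$ and is a Poisson reduction of $\sT^{*}F_{2}$.

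Second, by Proposition~\ref{prop:pLin}(\ref{prop_pL:item_pullback}) we have $\sV F_{2}=(\tau^{2}_{1})^{*}\pLinr(F_{2})$ as vector bundles, and the description of sections of $\pLinr(F_{2})\to F_{1}$ recalled above shows that the bundle projection $q\colon \sV F_{2}\to \pLinr(F_{2})$ is a fibrewise-invertible Lie algebroid morphism over the fibration $\tau^{2}_{1}\colon F_{2}\to F_{1}$. Dualising fibrewise gives $(\sV F_{2})^{*}=(\tau^{2}_{1})^{*}\pLinr^{*}(F_{2})$ together with a projection $(\sV F_{2})^{*}\twoheadrightarrow \pLinr^{*}(F_{2})$ over $\tau^{2}_{1}$, which is again Poisson onto $\pLinr^{*}(F_{2})$ with its Lie--Poisson structure. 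In adapted fibre coordinates this is transparent: the Lie--Poisson tensor on $(\sV F_{2})^{*}$ equals the tensor~\eqref{eqn:Poisson structure} plus the single extra anchor term $\partial_{p^{11}_{i}}\wedge\partial_{z^{i}}$, which is present because on $\sV F_{2}$ one has $\rho(f_{i})=\partial_{z^{i}}$ rather than $0$; this term is killed by the projection (its coefficient is $z$-independent and it pairs trivially with $z$-independent functions), whereas the remaining terms descend verbatim. Composing the two stages identifies the Lie--Poisson structure on $\pLinr^{*}(F_{2})$ with a reduction of the canonical Poisson structure on $\sT^{*}F_{2}$; and since $I$ of~\eqref{eqn:def_iso_I} is the Lie algebroid isomorphism through which the canonical bracket on the relevant leg of $\sD$ (namely $D\to B$, by the symmetric counterpart of the construction above) is defined, its fibrewise dual is a Poisson isomorphism $\pLinr^{*}(F_{2})\xrightarrow{\ \sim\ } D^{*}_{B}$ --- equivalently, one checks directly that~\eqref{eqn:def_iso_I} carries~\eqref{eqn:Poisson structure} to the Lie--Poisson tensor on $\pLinr^{*}(F_{2})$. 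This gives the assertion.

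The only point needing a genuine verification rather than a functoriality argument is the second stage: that the fibrewise dual of $\sV F_{2}=(\tau^{2}_{1})^{*}\pLinr(F_{2})$ really is a \emph{Poisson} submersion onto $\pLinr^{*}(F_{2})$, i.e.\ that the fibre directions of $\tau^{2}_{1}$ coincide with the extra anchor directions of $\sV F_{2}$ over $\pLinr(F_{2})$ and hence drop out of the bracket. For degree two this is the short coordinate check indicated above; in higher degrees it would be the main thing to set up carefully, and can alternatively be subsumed under a functoriality-of-duality statement for the (iterated) linearisation functor.
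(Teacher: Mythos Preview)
Your argument is essentially correct and takes a genuinely different route from the paper. The paper's proof is a direct coordinate computation: it cites~\cite{Bruce:2014} for the reduction $\sT^{*}F_{2}\twoheadrightarrow\pLinr^{*}(F_{2})$, writes the reduced Poisson tensor $\Lambda_{\pLinr}=\partial_{\pi_{a}^{(0,1)}}\wedge\partial_{y^{a,(1,0)}}$ in inherited Darboux coordinates, writes the dual isomorphism $\hat I\colon\pLinr^{*}(F_{2})\to D^{*}_{B}$ explicitly, and then pushes $\Lambda_{\pLinr}$ forward to recover the $B$-version of~\eqref{eqn:Poisson structure}. You instead factor the reduction conceptually through the chain $\sT F_{2}\supset\sV F_{2}\twoheadrightarrow\pLinr(F_{2})$ of Lie algebroid maps, invoking that (i) the fibrewise dual of a base-preserving subalgebroid inclusion is a Poisson submersion, and (ii) the projection $q$ is a fibrewise-invertible Lie algebroid morphism over $\tau^{2}_{1}$ whose dual is again Poisson; then the identification with $D^{*}_{B}$ is automatic since $I$ is, by construction, a Lie algebroid isomorphism. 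Your approach is more self-contained (no need to cite the reduction from~\cite{Bruce:2014}) and explains \emph{why} the reduction is Poisson, at the cost of the general duality statement in stage (ii) not being entirely standard~--- which you rightly acknowledge and settle by a short coordinate check. The paper's approach is quicker and makes the matching with~\eqref{eqn:Poisson structure} completely explicit.

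One small slip: the Lie--Poisson tensor on $(\sV F_{2})^{*}$ is $\partial_{p_{a}}\wedge\partial_{y^{a}}+\partial_{p_{i}}\wedge\partial_{z^{i}}$, i.e., $\Lambda_{\pLinr}$ plus the extra anchor term, \emph{not}~\eqref{eqn:Poisson structure} plus that term. The $\sigma$-term in~\eqref{eqn:Poisson structure} only appears after transporting along $\hat I$; on $(\sV F_{2})^{*}$ the basis sections $e_{a}=\partial_{y^{a}}$ commute, so no structure-function term is present. This does not affect the logic of your reduction argument.
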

\begin{proof}
The dual of the linearisation $\pLinr^{*}(F_{2})$ can be seen as the reduction of the cotangent bundle~$\sT^{*}F_{2}$ (cf.~\cite[Proposition~3.19]{Bruce:2014}) and the latter, of course, comes with a canonical symplectic structure. Let us employ local homogeneous coordinates on $\pLinr^{*}(F_{2})$ naturally inherited from canonical (Darboux) coordinates on~$\sT^{*}F_{2}$
\begin{gather*}\Big(\underbrace{x^{A}}_{(0,0)} ,\, \underbrace{y^{a,(1,0)}}_{(1,0)},\, \underbrace{\pi_{b}^{(0,1)}}_{(1,1)},\, \underbrace{\pi_{i}^{(1,1)}}_{(0,1)}\Big).\end{gather*}
In these canonical coordinates, the Poisson structure inherited from the reduction of the cano\-nical Poisson structure is
\begin{gather*}\Lambda_{\pLinr} = \partial_ {\pi_{a}^{(0,1)}} \wedge \partial_{ y^{a,(1,0)}},\end{gather*}
which by inspection is of bi-weight~$(-2,-1)$. One should also note that, although we have def\/ined the Poisson structure here in local coordinates, this def\/inition is invariant.

The isomorphism $I\colon \sD \rightarrow \pLinr(F_{2})$ gives rise to a dual isomorphism $\hat{I} \colon \pLinr^{*}(F_{2}) \rightarrow D^{*}_{B}$ which we write in local coordinates as
\begin{gather*}
\hat{I}^{*}y^{a}_{10} = y^{a,(1,0)}, \qquad \hat{I}^{*}p_{a}^{01} = \pi_{a}^{(0,1)} + \frac{1}{2!}y^{b,(1,0)}\sigma_{ba}^{i}(x)\pi^{(1,1)}_{i},\qquad \hat{I}^{*}p_{i}^{11} = \pi_{i}^{(1,1)}.
\end{gather*}
Thus we have
\begin{gather*}\hat{I}_{*} \Lambda_{\pLinr} = \frac12 \sigma_{ab}^i p_i^{11} \pa_{p_a^{01}} \wedge \pa_{p_b^{01}} + \pa_{p_a^{01}} \wedge \pa_{ y_{10}^a}, \end{gather*}
which, by comparison with~(\ref{eqn:Poisson structure}), is the desired Poisson structure.
\end{proof}

Because we are dealing with symmetric DVBs, application of the f\/lip functor means that we can also interpret the Poisson structure on $D^{*}_{A}$ as being inherited from the reduction of the canonical Poisson structure on~$\sT^{*}F_{2}$.

\subsection[The equivalence between graded bundles of degree $k$ and symmetric $k$-fold vector bundles]{The equivalence between graded bundles of degree $\boldsymbol{k}$\\ and symmetric $\boldsymbol{k}$-fold vector bundles}

Before giving careful statements and proofs let us sketch the basic idea. From~\cite{JG_MR_higher_vb} we know that any graded bundle $F_{k}$ can canonically be embedded in $\sT^{k}F_{k}$, and that $\sT^{k}F_{k}$ can be canonically embedded in~$\sT^{(k)}F_{k}$. \emph{Via} Proposition~\ref{prop:l_r_epi_and_inclusion} and Corollary~\ref{cor:L_emb_and_epim}, we see that the full linearisation of a~graded bundle $\Linr(F_{k})$ can be considered as a substructure of the iterated tangent bundle~$\sT^{(k)}F_{k}$. The latter has a number of canonical dif\/feomorphisms that can be enumerated by elements of the symmetric group $\Sgroup_{k}$. Thus the full linearisation of a graded bundle of degree $k$ is equipped with a number of canonical $k$-fold vector bundle morphisms $\sigma_{g}\colon \Linr(F_{k}) \rightarrow \Linr(F_{k})^{g}$, where $g \in \Sgroup_{k}$, and for any $k$-fold graded bundle $\sD =(D; \Delta_1, \ldots, \Delta_k)$ we def\/ine
\begin{gather*}\sD^g =\big(D; \Delta_{g(1)}, \ldots, \Delta_{g(k)}\big).
\end{gather*}
Note that $(\sD^{g_1})^{g_2} = \sD^{g_1g_2}$. A quick glimpse at the properties of $\sigma$'s lead us to the following def\/inition of a category~$\SymmVB[k]$, which will turn out to be equivalent with the category of graded bundles of degree $k$ (Theorem~\ref{thm:equivalence_k} below).

\begin{Definition}\label{def:SymmVB_k} Objects of the category $\SymmVB[k]$ are $k$-fold vector bundles $\sD = (D; \Delta_1, \ldots$, $\Delta_k)$ equipped with $k$-fold vector bundle isomorphisms $\sigma_g\colon \sD\rightarrow \sD^g$, where $g\in \Sgroup_k$, such that for each $g_1, g_2\in \Sgroup_k$ the composition
\begin{gather}\label{eqn:sigma1}
\sD \xrightarrow{\sigma_{g_2}} \sD^{g_2} \xrightarrow{(\sigma_{g_1})^{g_2}} \sD^{g_1g_2} = (\sD^{g_1})^{g_2}
\end{gather}
coincides with $\sigma_{g_1g_2}\colon \sD\rightarrow \sD^{g_1g_2}$ and
\begin{gather}\label{eqn:sigma2}
\sigma_{(i,j)}|_{C_{ij}} = \id_{C_{ij}},
\end{gather}
for any $1\leq i<j\leq k $, where $C_{ij}$ denotes the core bundle of the DVB $(D; \Delta_i, \Delta_j)$, and $(i,j)\in \Sgroup_k$ is the transposition swapping~$i$ and~$j$. The above $(\sigma_{g_1})^{g_2}$ coincides with $\sigma_{g_1}$ as a map $D\rightarrow D$ but we consider it here as a $k$-fold vector bundle morphism $\sD^{g_2} \rightarrow \sD^{g_1g_2}$. Note that the assignment $\sD \rightsquigarrow \sD^{g_2}$ is a functor between the categories of $k$-fold vector bundles, and it takes the mor\-phism~$\sigma_{g_1}$ to the morphism $(\sigma_{g_1})^{g_2}$. A morphism in $\SymmVB[k]$ from $(\sD, (\sigma_g)_{g\in \Sgroup_k})$ to $(\sD', ({\sigma'}_g)_{g\in \Sgroup_k})$ is a~$k$-fold vector bundle morphism $\phi\colon \sD\rightarrow \sD'$ such that
\begin{gather}\label{eqn:morphisms_in_C_k}
\sigma'_g \circ \phi = \phi^g \circ \sigma_g\colon \ \sD\rightarrow {\sD'}^g
\end{gather}
for any $g\in \Sgroup_k$.
\end{Definition}

\begin{Remark} One can give a def\/inition of symmetric $k$-fold vector bundle without setting the order of Euler vector f\/ields, and thus justifying the name ``symmetric". This is done by setting $T = \{t\colon \und{k}\to W\}$, where $\und{k}=\{1, \ldots, k\}$, $W=\{\Delta_1, \ldots, \Delta_k\}$ and considering the groupoid $\mathcal{G}\rightrightarrows T$ of all $k$-fold isomorphisms $\sD^t \to \sD^{t'}$, where $t, t'\in T$ and $\sD^t = (D; \Delta_{t(1)}, \ldots, \Delta_{t(k)})$. The symmetric group $\Sgroup_k$ acts canonically on $T$ and also on~$\mathcal{G}$. One can easily verify that giving a~groupoid morphism from the pair groupoid $T\times T\rightrightarrows T$ to $\mathcal{G}\rightrightarrows T$, covering $\id_T$, correspond with equipping $\sD$ with a family of DVB morphisms $\sigma_g$ as in Def\/inition~\ref{def:SymmVB_k} satisfying the condition~\eqref{eqn:sigma1}, but not necessary~\eqref{eqn:sigma2}.
\end{Remark}

\begin{Definition}
Let $(\sD, (\sigma_g)_{g\in \Sgroup_k})$ be a symmetric $k$-fold vector bundle, its \emph{diagonalisation} $F(\sD) \subset D$ is the graded bundle of degree $k$ constructed as the submanifold of $D$ consisting of all $\Sgroup_{k}$ invariant elements and the weight vector f\/ield is given by the total weight vector f\/ield $\Delta = \Delta_{1} + \cdots + \Delta_{k}$ restricted to the said submanifold.
\end{Definition}

It will be shortly seen that $\Delta$ is indeed tangent to $F(\sD)$ and the assignment $\sD \to F(\sD)$ is functorial. Moreover, the full linearisation functor and the diagonalisation functor are mutual inverses.

\begin{Example}
The pair $(\sT^{(k)}M, (\kappa_{g})_{g\in \Sgroup_k})$ is a symmetric $k$-fold vector bundle whose diagonalisation is clearly $\sT^{k}M$.
\end{Example}

\begin{Theorem} \label{thm:equivalence_k}
The full linearisation and the diagonalisation set up an equivalence of categories between $\GrB[k]$ and the category $\SymmVB[k]$.
\end{Theorem}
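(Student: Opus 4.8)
The natural route is induction on the degree $k$. For $k=1$ both categories are simply the category of vector bundles (as $\Sgroup_1$ is trivial) and $\Linr$ is the identity, while $k=2$ is exactly the Proposition established in Section~\ref{ssec:Characterisation_deg_two}. So I would assume the equivalence $\GrB[k-1]\simeq\SymmVB[k-1]$ and upgrade it to degree $k$ in four steps: (i) $\Linr$ takes values in $\SymmVB[k]$; (ii) diagonalisation takes values in $\GrB[k]$; (iii) $F\circ\Linr\simeq\id_{\GrB[k]}$; and (iv) $\Linr\circ F\simeq\id_{\SymmVB[k]}$, together with compatibility with morphisms.

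For step (i), Corollary~\ref{cor:L_emb_and_epim} lets us view $\Linr(F_k)\simeq\sT^{(k)}F_k[X_k=0]$ as a substructure of the iterated tangent bundle, and $\sT^{(k)}F_k$ carries the canonical family of diffeomorphisms $\kappa_g$, $g\in\Sgroup_k$, obtained by composing the standard flips $\kappa\colon\sT\sT N\to\sT\sT N$ in the various pairs of slots; these furnish an $\Sgroup_k$-action. Since $X_k=\Delta^0-\sum_{i=1}^k\Delta^i$ with $\Delta^0=\dd_{\sT}^{(k)}\Delta_{F_k}$ fixed by every $\kappa_g$ and the Euler fields $\Delta^i$ merely permuted, each $\kappa_g$ restricts to an isomorphism $\sigma_g\colon\Linr(F_k)\to\Linr(F_k)^g$; then \eqref{eqn:sigma1} is the group-action property of $g\mapsto\kappa_g$, and \eqref{eqn:sigma2} is a short check in adapted coordinates extending \eqref{eqn:sigma_ab_i}. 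For step (ii), given $(\sD,(\sigma_g))$ the balancing forced by \eqref{eqn:sigma1} identifies all building bundles of equal total weight, so in adapted coordinates the $\Sgroup_k$-invariance conditions defining $F(\sD)$ are transverse and cut out a submanifold; since each $\sigma_g$ intertwines $\Delta_i$ with $\Delta_{g(i)}$, it fixes the total weight vector field $\Delta=\sum_i\Delta_i$, which is therefore tangent to $F(\sD)$ and restricts there to a complete weight vector field of degree $k$, with non-negative eigenvalues at singular points inherited from $\sD$. For step (iii), iterating the canonical embedding $\iota$ from Proposition~\ref{prop:pLin} gives $F_k\hookrightarrow\Linr(F_k)$, and under the identification $\Linr(F_k)\subset\sT^{(k)}F_k$ this is the inclusion of holonomic points; since a point of $\sT^{(k)}N$ lies in $\sT^kN$ precisely when it is fixed by all $\kappa_g$, the image of $\iota$ is exactly the $\Sgroup_k$-invariant part, so $F(\Linr(F_k))=F_k$, naturally in $F_k$.

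The substantial step is (iv). Given $(\sD,(\sigma_g))$ put $F=F(\sD)$. It is straightforward that diagonalisation intertwines passage to feet with truncation of graded bundles, so the $k$-th foot $\sD_k=(D_k;\Delta_1,\dots,\Delta_{k-1})$, equipped with the isomorphisms $\sigma_g$ for $g$ ranging over the stabiliser $\Sgroup_{k-1}\subset\Sgroup_k$ of $k$, is a symmetric $(k-1)$-fold vector bundle whose diagonalisation is the degree-$(k-1)$ truncation $F_{k-1}$ of $F$; by the inductive hypothesis it is canonically isomorphic to $\Linr(F_{k-1})=\pLinr^{(k-2)}(F_{k-1})$. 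It then remains to identify the remaining leg datum --- the vector bundle $\sD\to\sD_k$ together with the extra transpositions $\sigma_{(i,k)}$ --- with the single further application of $\pLinr$ taking $\pLinr^{(k-2)}(F_{k-1})$ to $\Linr(F_k)=\pLinr^{(k-1)}(F_k)$. Concretely this amounts to writing down, as in \eqref{eqn:def_iso_I}, an explicit isomorphism $I\colon\sD\to\Linr(F)$ of the shape ``identity plus correction terms built from the structure functions of the $\sigma_g$'', checking that it is independent of the adapted coordinates chosen on $D$, and verifying $\kappa_g|_{\Linr(F)}\circ I=I^g\circ\sigma_g$ for every $g$. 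Compatibility with morphisms is then automatic, since \eqref{eqn:morphisms_in_C_k} on a morphism $\phi$ says exactly that $\phi$ restricts to a graded bundle morphism $F(\sD)\to F(\sD')$ between the diagonals, and $\Linr$ of that restriction recovers $\phi$ after conjugating by $I$.

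The main obstacle is the coordinate-independence of $I$ together with the $\Sgroup_k$-equivariance of its higher correction terms. The degree-$2$ computation of Section~\ref{ssec:Characterisation_deg_two} already displays the shape of $I$ and of the structure functions $\sigma^i_{ab}$, but in the general case one must organise the multi-indices $\varepsilon\in\{0,1\}^k$ labelling the building bundles, keep track of the $\Sgroup_k$-action on them, and show that the correction terms in $I$ can be assembled so that \eqref{eqn:sigma1}--\eqref{eqn:sigma2} all hold simultaneously. Performing this one tangent slot at a time --- matching the inductive step above --- is what keeps the bookkeeping manageable, and is where the bulk of the work lies.
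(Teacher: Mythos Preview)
Your outline is plausible but takes a different and harder route than the paper, and the central step~(iv) is not actually carried out. You acknowledge this yourself: writing down the isomorphism $I\colon\sD\to\Linr(F)$ in the form ``identity plus correction terms built from the structure functions of the~$\sigma_g$'', checking its coordinate-independence, and verifying the $\Sgroup_k$-equivariance is precisely the heart of the theorem, and your proposal leaves it as a promissory note. The inductive reduction to one foot $\sD_k$ also hides a subtlety: the isomorphism $\sD_k\simeq\Linr(F_{k-1})$ produced by the inductive hypothesis depends only on the $\sigma_g$ with $g\in\Sgroup_{k-1}$, and there is no a~priori reason it should be compatible with the remaining transpositions $\sigma_{(i,k)}$ in the way needed to lift it to an isomorphism $\sD\simeq\Linr(F_k)$. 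Making this precise would essentially force you to redo the degree-$2$ argument fibrewise over $\sD_k$ while keeping track of all the cross-compatibilities, which is exactly the bookkeeping you flag as the main obstacle.

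The paper sidesteps all of this with a single averaging trick that avoids induction entirely. Starting from arbitrary graded coordinates $(x^A,y^i_\veps)$ on~$\sD$, one symmetrises over the group,
\[
z^i_\veps := \frac{1}{k!}\sum_{g\in\Sgroup_k}\sigma_g^*\big(\Y{g}^i_{\veps.g^{-1}}\big),
\]
and checks, using \eqref{eqn:sigma1} and \eqref{eqn:sigma2}, that $(x^A,z^i_\veps)$ is again a graded coordinate system in which each $\sigma_h$ acts by the pure permutation $\sigma_h^*(\Z{h}^i_\veps)=z^i_{\veps.h}$. In these ``normal'' coordinates the diagonal $F(\sD)$ is manifestly a graded bundle, the isomorphism $I$ is the tautology $I^*y^{i,(\veps)}=z^i_\veps$, its coordinate-independence and $\Sgroup_k$-equivariance are immediate, and the characterisation of morphisms drops out from the observation that \eqref{eqn:morphisms_in_C_k} forces the coefficient tensors to be symmetric in indices of equal total weight. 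This is both shorter and conceptually cleaner than the inductive scheme; the key idea you are missing is that the family $(\sigma_g)$ can be used to \emph{construct} adapted coordinates by averaging, rather than merely to constrain a putative~$I$ written in unadapted ones.
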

\begin{proof} Consider any graded coordinate system $\big(x^A, y^i_\veps\big)$ on $\sD$, where $1\leq i\leq N_\veps$, and the weight of $y^i_\veps$ is $\veps\in\{0,1\}^k$, $\veps\neq 0^k$. Since~$\sigma_g$ are dif\/feomorphisms of $D$ that mixes coordinates of the same total weight, the numbers $N_\veps$, $N_{\veps'}$ of coordinates of weights $\veps$ and $\veps'$ coincides if $|\veps|=|\veps'|$. Let $\Y{g}^i_\veps$ denote the same function on $D$ as $y^i_\veps$, but considered as a coordinate function on $\sD^g$, hence its weight is $\veps.g =(\veps_{g(1)}, \ldots, \veps_{g(k)})$ ($(\veps, g)\mapsto \veps.g$ is a right action of $\Sgroup_k$ on $\{0, 1\}^k$, and also on $\{0, 1\}^k \setminus \{0\}^k$). Def\/ine
\begin{gather*}
z^i_\veps := \frac{1}{k!}\sum_{g\in \Sgroup_k} \sigma^*_g \big(\Y{g}^i_{\veps.g^{-1}}\big).
\end{gather*}
Note that the weight of $\Y{g}^i_{\veps.g^{-1}}$ is $\veps.g^{-1}.g = \veps$, so $z^i_\veps$ has weight $\veps$ on $\sD$. Moreover, due to~\eqref{eqn:sigma2},
\begin{gather*}
\sigma_g^* \big(\Y{g}^i_\veps\big) = y^i_{\veps.g} + \text{lower terms}
\end{gather*}
holds for any transposition $g=(i, j)$, and so for any $g\in \Sgroup_k$, thanks to~\eqref{eqn:sigma1}, where \emph{lower terms} means the sum of products of at least two functions of non-zero weights that sum up to~$\veps.g$, hence they have to be of lower weight than $\veps.g$ with respect to the (partial) product order on~$\{0,1\}^k$.  It follows that
\begin{gather*}
z^i_\veps = y^i_\veps + \sum_{{j\geq 2, \,i_1, \ldots, i_j,}\atop {\veps^1+\cdots +\veps^j = \veps}} C^{i;\veps^1,\ldots,\veps^j}_{i_1, \ldots, i_j} y^{i_1}_{\veps^1} \cdots y^{i_j}_{\veps^j},
\end{gather*}
where the coef\/f\/icients $C^{i;\veps^1,\ldots,\veps^j}_{i_1, \ldots, i_j}$ are functions of $\big(x^A\big)$, so $\big(x^A, z^i_\veps\big)$ form a new graded coordinate system for $\sD$. Denote the corresponding coordinates on $\sD^g$ by $\Z{g}^i_\veps$. We shall show that
\begin{gather}\label{eqn:Zhi}
\sigma_h^* \big(\Z{h}^i_\veps\big) = z^i_{\veps.h}.
\end{gather}
Both sides of \eqref{eqn:Zhi} are of weight $\veps.h$, while abstracting from weights of coordinates we get
\begin{gather*}
\sigma_h^* (z^i_\veps) = \frac{1}{k!} \sum_{g\in \Sgroup_k} \sigma_h^* \sigma_g^* \big(y^i_{\veps.g^{-1}}\big) = \frac{1}{k!} \sum_{g\in \Sgroup_k} \sigma_{gh}^* \big(y^i_{\veps.g^{-1}}\big) = z^i_{\veps.h},
\end{gather*}
what justif\/ies our claim.

Thus the morphisms $\sigma_h\colon \sD\rightarrow \sD^h$ in new coordinates look extremely simple. The graded bundle $F_k\subset D$ associated with $(\sD, (\sigma_g)_{g\in \Sgroup_k})$ is given by
\begin{gather*}
F_k = \{q\in D\colon \sigma_g(q)=q \ \text{for any} \ g\in \Sgroup_k\} = \big\{\big(z^i_\veps\big)\colon z^i_\veps = z^i_{\veps'} \ \text{if}\ |\veps|=|\veps'|\big\},
\end{gather*}
and thus we may def\/ine without ambiguity graded coordinates on $F_k$ by $z^i_w := z^i_\veps|_{F_k}$, where $w=|\veps|$. Therefore, $F_k$ is indeed a graded bundle of degree $k$ and the diagonalisation is a~functor $\SymmVB[k] \to \GrB[k]$. Indeed, if $\psi\colon \sD\to \sD'$ is a morphism in~$\SymmVB[k]$, then for any $g\in\Sgroup_k$, and $q\in F_k$ holds $\sigma_g(\psi(q)) =\psi(\sigma_g(q)) = \psi(q)$ and~$\psi$ preserves the total weight given by $\Delta = \Delta_1+\cdots+\Delta_k$, hence $\psi|_{F_k}$ is a morphism in $\GrB[k]$ between diagonalisations of~$\sD$ and~$\sD'$.

Consider the full linearisation $\Linr{F_k}\subset \sT^{(k)} F_k$ with coordinates, denoted by $y^{i, (\veps)}$, inherited from $\sT^{(k)} F_k$. It is clear that $I\colon \sD\rightarrow \Linr(F_k)$, def\/ined locally by
\begin{gather*}
I^* y^{i, (\veps)} =z^i_\veps
\end{gather*}
is well def\/ined globally and makes the following diagram commutative:
\begin{gather*}
 \xymatrix{
 \sD \ar[rr]^{\sigma_g}\ar[d]_I && \sD^g \ar[d]^{I^g} \\
 \Linr F_k \ar[rr]^{\kappa_g|_{\Linr F_k}} && (\Linr F_k)^g. }
 \end{gather*}

We have described a natural one-to-one correspondence between objects in our categories. Now we shall point the correspondence of morphisms between these categories.

It is clear that a morphism of graded bundles $\phi\colon F_k\rightarrow F_k'$ gives rise to a morphism $\Linr{\phi}\colon$ $\Linr(F_k)\rightarrow \Linr(F_k')$ satisfying~\eqref{eqn:morphisms_in_C_k}. Conversely, assume $\psi\colon (\sD, (\sigma_g)) \rightarrow (\sD', (\sigma'_g))$ is a $k$-fold vector bundle morphism satisfying \eqref{eqn:morphisms_in_C_k}. It follows that we may restrict $\psi$ to $F_k\subset D$ and get a graded bundle morphism $\phi := \psi|_{F_k}\colon F_k\rightarrow F_k'$. Actually, the morphism $\psi\colon \sD\rightarrow \sD'$ is fully determined by its restriction to~$F_k$. To see this write a general form of~$\psi$ in the local coordinates $(z^i_\veps)$ def\/ined above:
\begin{gather*}
\psi^* z^{i'}_\veps = \sum_{{j\geq 1; \,k_1, \ldots, k_j,}\atop {\veps^1+\cdots +\veps^j = \veps}} Q^{i';\veps^1,\ldots,\veps^j}_{k_1, \ldots, k_j} z^{k_1}_{\veps^1} \cdots z^{k_j}_{\veps^j}, \qquad Q^{i';\veps^1,\ldots,\veps^j}_{k_1, \ldots, k_j}\in \R.
\end{gather*}
It follows from \eqref{eqn:morphisms_in_C_k} that the latter expression has to coincide with
\begin{gather*}
\sum_{{j\geq 1; \,k_1, \ldots, k_j,}\atop {\veps^1+\cdots +\veps^j = \veps}} Q^{i';\veps^1,\ldots,\veps^j}_{k_1, \ldots, k_j}   z^{k_1}_{\veps^1.g} \cdots z^{k_j}_{\veps^j.g}
\end{gather*}
for any $g\in \Sgroup_k$, so the coef\/f\/icients $Q^{\ldots}_{\ldots}$ must be symmetric in indices $k_a, k_b$ if $|\veps^a| = |\veps^b|$. Note that the formula for $\psi|_{F_k}$ involves only symmetric part of $Q^{\ldots}_{\ldots}$.
This completes our proof.
\end{proof}

Diagrammatically, we have
\begin{gather*}
\leavevmode
\begin{xy}
(0,20)*+{\catname{GrB}[k]}="a"; (50,20)*+{\catname{SymmVB}[k].}="b";%
{\ar@<1.ex>"a";"b"}?*!/_3mm/{\textnormal{full linearisation}};%
{\ar@<1.ex> "b";"a"} ?*!/_3mm/{\textnormal{diagonalisation}};%
\end{xy}
\end{gather*}

\begin{Example}
Following Example~\ref{exmpl:F3} of the full linearisation of a graded bundle of degree $3$, we see that the diagonalisation of $\Linr(F_{3})$, which is of course just $F_{3}$ itself, can be described in natural coordinates as
\begin{gather*}
 y^{a} := y^{a,(00)} = y^{a,(10)} = y^{(a),(01)}, \\ z^{i} := 2! z^{i,(10)}= 2! z^{i,(01)}= 2! z^{i,(11)}, \qquad w^{\kappa} := 3! w^{\kappa,(11)},
\end{gather*}
and by quick inspect it is clear that we recover the correct transformation laws for the non-zero weight coordinates. Up to combinatorial factors, one sets the coordinates of a given total weight equal: in a loose sense we look at the `diagonal submanifold' of the full linearisation. The generalisation of this local diagonalisation to higher degree graded bundles is clear.
\end{Example}

\section[$N$-manifolds and their linearisation]{$\boldsymbol{N}$-manifolds and their linearisation}\label{sec:Nmanifolds}
\subsection[The linearisation of $N$-manifolds]{The linearisation of $\boldsymbol{N}$-manifolds}
Given a supermanifold $\cM$, consider its tangent bundle $\sT \cM$. It is known that there exists a~ca\-no\-nical involution $\kappa_\cM$ of $\sT\sT \cM$, which in adapted coordinates looks the same as the classical (purely even) case. Let us denote local coordinates on $\cM$ by $x^A$, the \emph{parity} of $x^A$ by $ \widetilde{x}^{A} = \tilde{A}\in \Zet_2$. Then the adapted coordinates $\big(x^A, \dot{x}^A\big)$ on $\sT \cM$ transform in the usual way
\begin{gather*}
x^{A'} = x^{A'}(x), \qquad \dot{x}^{A'} = \dot{x}^B  \frac{\pa x^{A'}}{\pa x^B},
\end{gather*}
where $\dot{x}^A$ has the same Grassmann parity as $x^A$, i.e., $\widetilde{\dot{x}}^{A} = \widetilde{x}^{A} = \tilde{A}$. Thus we can employ on $\sT \sT \cM$ standard coordinates $\big(x^A, \dot{x}^A; \delta x^A, \delta \dot{x}^A\big)$ with additional transformation law of the form
\begin{gather}\label{eqn:TT_cM}
\delta {x}^{A'} = \delta {x}^B  \frac{\pa x^{A'}}{\pa x^B}, \qquad
\delta \dot{x}^{A'} = \delta{\dot{x}^B} \frac{\pa x^{A'}}{\pa x^B} + \delta x^C \dot{x}^B \,\frac{\pa^2 x^{A'}}{\pa x^B \pa x^C},
\end{gather}
hence the usual formula
\begin{gather*}
\kappa_\cM^*\big(\dot{x}^A\big) = \delta x^A, \qquad \kappa_\cM^*\big(\delta {x}^A\big) = \dot{x}^A, \qquad \kappa_\cM^*\big(\delta \dot{x}^A\big) = \delta \dot{x}^A
\end{gather*}
works f\/ine for supermanifolds.

Consider an $N$-manifold $\cF$ (of degree $k$ with base $M$) (cf.~\cite{Roytenberg:2002,Severa:2005}) on which we employ coordinates $\big(x^A, y^a_w\big)$ as usual, so the parity of $y^a_w$ is $w \bmod\, 2$ and $1\leq w\leq k$. Then, we take the vertical subbundle $\sV \cF$ of $\sT \cF$ which we consider as a (super) submanifold of $\sT \cF$ def\/ined locally by equations $\dot{x}^A=0$, but we change the weight as in the ordinary case. Thus $\dot{y}^a_w$ is of weight $w-1$ and the same parity as~$y^a_w$. Then, we may get rid of coordinates $y^a_k$ to obtain a local description of the \emph{linearisation}~$\pLinr(\cF)$ of $N$-manifold~$\cF$, exactly as in the case of (even) graded bundle. As possible bi-weights of our coordinates on $\pLinr(\cF)$ are $(w_1, w_2)$ with $w_1\in \{0,1\}$, $0 \leq w_2\leq k-1$, the linearisation $\pLinr(\cF)$ is a $\GrL$-bundle but in the category of supermanifolds. We may def\/ine the \emph{full linearisation} functor exactly as in even case as $\Linr(\cF) := \pLinr^{(k-1)}(\cF)$.

Consider the case $k=2$ and employ coordinates $\big(x^A, \zx^a, z^i\big)$ on $\cF$ of degrees $0$, $1$, $2$, respectively, with transformation law of the form
\begin{gather}\label{eqn:coord_law_F}
x^{A'} = x^{A'}(x), \qquad \zx^{a'} = \zx^{b}T_{b}^{\:\:\: a'}(x), \qquad z^{i'} = z^{j}T_{j}^{\:\:\: i'}(x) + \frac{1}{2}\zx^{a}\zx^{b}T_{ba}^{\:\:\:\: i'}(x),
\end{gather}
where $T_{ba}^{i'} = -   T_{ab}^{i'}$, as $\zx^a \zx^b = -  \zx^b \zx^a$. The formula
\begin{gather*}
i^*\big(x^A\big)=x^A, \qquad i^*(\dot{\zx}^a) = \zx^a, \qquad i^*(\delta \zx^a) = \dot{\zx}^a, \qquad i^*(\delta \dot{z}^i) = \dot{z}^i, \\
0=i^*\big(\dot{x}^A\big)=i^*(\delta x^A) = i^*\big(\delta\dot{x}^A\big) = i^*(\zx^a) = i^*(\delta \dot{\zx}^a) = i^*(z^i) = i^*(\dot{z}^i) = i^*(\delta z^i),
\end{gather*}
def\/ines a map $i\colon \pLinr(\cF)\rightarrow \sT\sT \cF$. Indeed, e.g., $i^*(\delta \dot{z}^{i'}) = \dot{z}^{i'} = \dot{z}^{j}T_{j}^{\:\:\: i'}(x) + \dot{\zx}^{a}\zx^{b}T_{ba}^{\:\:\:\: i'}(x)$, since using~\eqref{eqn:TT_cM} we f\/ind that $\delta \dot{z}^{i'} = \delta \dot{z}^j \frac{\pa z^{i'}}{\pa z^j} + \delta\zx^a\,\dot{\zx}^b \frac{\pa^2 z^{i'}}{\pa \zx^b \pa \zx^a}$ plus other terms on which $i^*$ vanishes, so~$i^*$ takes~$\delta \dot{z}^{i'}$ to $\dot{z}^{j}T_{j}^{\:\:\: i'}(x) + \dot{\zx}^{a}\zx^{b}T_{ba}^{\:\:\:\: i'}(x)$, the latter coincides with~$\dot{z}^{i'}$, by~\eqref{eqn:coord_law_F}. Thus we may obtain~$\pLinr(\cF)$ as a substructure of~$\sT \sT \cF$.  Furthermore, the canonical involution on the double tangent bundle of a supermanifold behaves exactly the same as in the classical case and hence~$\pLinr(\cF)$ is equipped with a canonical dif\/feomorphism $\sigma = {\kappa_{\cF}}|_{\pLinr(\cF)}$.

To describe $\sigma$ in terms of ordinary even dif\/ferential geometry, we shall consider the \emph{parity reversion functor} for double vector bundles (in the category of supermanifolds) as def\/ined in~\cite{Voronov:2012}.

\subsection{Parity reversion functor for double vector bundles}

All the necessary notions from the def\/inition of a double vector bundle carry over to the supermanifold setup. Thus a double vector bundle $(D; A, B; M)$ admits local homogeneous coordinates $\big(x^A; u^i, w^\alpha; z^\mu\big)$ of weights $(0,0)$, $(0,1)$, $(1,0)$, $(1,1)$, respectively, and of arbitrary parity. Within our notation, $(u^i)$, $(w^\alpha)$ are pullbacks of linear f\/iber coordinates of the side bundles $A\rightarrow M$, $B\rightarrow M$, respectively, and coordinate changes are of the form
\begin{gather}\label{eqn:superDtransformations}
x^{A'} = x^{A'}(x), \qquad u^{i'} = u^i T_{i}^{i'}, \qquad w^{\alpha'}= w^\alpha T_{\alpha}^{\alpha'}, \qquad z^{\mu'} = z^{\mu} T_{\mu}^{\mu'} + u^i w^\alpha T_{\alpha i}^{\mu'}.
\end{gather}
We apply the parity reversion functor to the vector bundle projection $D\rightarrow A$ in the f\/irst step and get a DVB denoted by $(\Pi_1 D; A, \Pi B; M)$. The index $1$ at $\Pi$ indicates that we take into account the f\/irst vector bundle structure on $D$ and $\Pi B \rightarrow M$ is the bundle obtained from the initial vector bundle $B\rightarrow M$ by changing parity to f\/iber coordinates. (This operation in not possible for graded bundles of higher degree.) Next we apply the parity reversion functor $\Pi_2$ to the second vector bundle structure on the obtained DVB and get a DVB $(\Pi_{21} D; \Pi A, \Pi B; M)$, where $\Pi_{21}=\Pi_2\circ \Pi_1$. Following explanation from~\cite{Voronov:2012}, we denote local coordinates on $(\Pi_1 D; A, \Pi B; M)$ and $(\Pi_2\Pi_1 D; \Pi A, \Pi B; M)$ by $\big(x^A; u^i, \eta^\alpha; \theta^\mu\big)$ and $\big(x^A; \zx^i, \eta^\alpha; z^\mu\big)$, respectively, and f\/ind that coordinate changes in $\Pi_{21}\sD$ have the form
\begin{gather}\label{eqn:pi2_pi1}
x^{A'} = x^{A'}(x), \qquad \zx^{i'} = \zx^i T_{i}^{i'}, \qquad \eta^{\alpha'}= \eta^\alpha T_{\alpha}^{\alpha'}, \qquad z^{\mu'} = z^{\mu} T_{\mu}^{\mu'} + (-1)^{\tilde{i}} \zx^i \eta^\alpha T_{\alpha i}^{\mu'},
\end{gather}
where $\tilde{i}$ is the parity of $u^i$, not $\zx^i$. Consider the square $S=\Pi_{21} \Pi_{21}$ of the functor $\Pi_{21}$. Although all the building bundles of $D$ and $S D$ can be canonically identif\/ied, it follows from~\eqref{eqn:pi2_pi1} that coordinates changes on $S \sD$ have the form
\begin{gather*}
x^{A'} = x^{A'}(x), \qquad u^{i'} = u^i T_{i}^{i'}, \qquad w^{\alpha'}= w^\alpha T_{\alpha}^{\alpha'}, \qquad z^{\mu'} = z^{\mu} T_{\mu}^{\mu'} - u^i w^\alpha T_{\alpha i}^{\mu'},
\end{gather*}
resulting from the fact that the parity of $\zx^i$ is $\tilde{i}+1\, \bmod\, 2$, and $(-1)^{\tilde{i}} (-1)^{\tilde{i}+1} = -1$. Although there is a canonical DVB isomorphism between $D$ and $SD$, namely given by
\begin{gather}\label{eqn:iso_D_sD}
I_\sD\colon \  \sD \rightarrow S\sD, \qquad\!\! I_{\sD}^*(x^A)=x^A, \qquad\!\! I_{\sD}^*(u^i)=u^i, \qquad\!\! I_{\sD}^*(w^\alpha) = w^\alpha, \qquad\!\! I_{\sD}^*(z^\mu)=-z^\mu,\!\!\!
\end{gather}
and although there exist many (non-canonical) DVB isomorphisms between $\sD$ and $S \sD$ inducing the identities on all building vector bundles of~$\sD$ (i.e., the core and on the side bundles), by referring to the theory of \emph{statomorphisms} introduced in \cite{Gracia-Saz:2009}, there is not, in general, a \emph{canonical} double vector bundle isomorphism between $\sD$ and $S\sD$ which induces the identities on all building vector bundles. This is in full analogy to DVB isomorphisms between the cotangent bundles~$\sT^* E$ and~$\sT^* E^*$ of total spaces of a vector bundle $E\rightarrow M$ and its dual~$E^*$.

\begin{Remark}[\cite{Voronov:2012}] The functors $\Pi_{12}$ and $\Pi_{21}$ are dif\/ferent in the sense presented in \cite{Gracia-Saz:2009}, as they are linked by the functor $S$.
\end{Remark}

Another observation is that, in general, there is no canonical statomorphism between $\Pi_{21} \sD^\f$ and $(\Pi_{21} \sD)^\f$, but the following holds.

\begin{Proposition}
There is a canonical statomorphism between $\Pi_{21} (\sD^\f)$ and $S (\Pi_{21} \sD)^\f$.
\end{Proposition}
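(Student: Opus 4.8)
The plan is to prove this by a direct computation in adapted coordinates, exploiting that a \emph{statomorphism} is merely a double vector bundle isomorphism inducing the identity on both side bundles and on the core, so it suffices to exhibit one in coordinates and check that it glues. Concretely, I would fix on $\sD$ adapted coordinates $\big(x^A; u^i, w^\alpha; z^\mu\big)$ with transition functions \eqref{eqn:superDtransformations}, chosen from a single atlas on $M$ together with fixed local frames for the side bundles $A$, $B$ and for the core $C$. The same data canonically induces adapted coordinates on every DVB produced from $\sD$ by the functors $\Pi_1,\Pi_2,S$ and the flip. The elementary fact to record first is: if two DVBs are presented in such coordinated atlases and their full transition cocycles---in particular the quadratic core term---agree, then the coordinate-wise identity glues to a global DVB isomorphism; since changing the chosen frames modifies the cocycles of both DVBs in the same manner, this isomorphism is canonical and induces the identity on all building bundles, i.e.\ it is the statomorphism we want.

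It then remains to compute the two core transitions. For $\Pi_{21}(\sD^\f)$ one first rewrites the core term $u^iw^\alpha T_{\alpha i}^{\mu'}$ of $\sD$ in the order dictated by the flipped weights, which by the Koszul sign rule produces a factor $(-1)^{\tilde i\tilde\alpha}$, and then applies \eqref{eqn:pi2_pi1}, which attaches the parity sign of the new weight-$(0,1)$ slot, here $(-1)^{\tilde\alpha}$; the resulting coordinates are $\big(x^A;\,\Pi w^\alpha,\,\Pi u^i;\,z^\mu\big)$ and the core transition is $z^{\mu'}=z^\mu T^{\mu'}_\mu+(-1)^{\tilde\alpha+\tilde i\tilde\alpha}\,\Pi w^\alpha\,\Pi u^i\,T^{\mu'}_{\alpha i}$. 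For $(\Pi_{21}\sD)^\f$ one instead starts from the transition \eqref{eqn:pi2_pi1} for $\Pi_{21}\sD$, reorders the two linear coordinates under the flip---now both have parity raised by one---and finds the same core term but multiplied by an extra $-1$; this overall sign is precisely the obstruction mentioned in the sentence preceding the Proposition. Finally, by \eqref{eqn:iso_D_sD} and the discussion around it, the functor $S=\Pi_{21}\Pi_{21}$ acts on an adapted atlas by leaving the side coordinates and all parities unchanged and negating the core term of the transition; applying it to $(\Pi_{21}\sD)^\f$ therefore restores exactly the core transition found for $\Pi_{21}(\sD^\f)$. Hence the two DVBs have literally the same transition cocycle and the coordinate-wise identity is the desired canonical statomorphism.

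I expect the sign bookkeeping to be the only real obstacle: one must keep straight three separate sources of signs---the Koszul sign $(-1)^{\tilde i\tilde\alpha}$ coming from transposing the two linear fibre coordinates under the flip, the parity sign in \eqref{eqn:pi2_pi1} coming from the parity reversions, and the sign $(-1)$ introduced by $S$---and show that they conspire so that the two core transitions coincide on the nose, not merely up to a further automorphism of the core. As a sanity check and conceptual guide I would note that the flip interchanges $\Pi_1$ and $\Pi_2$ up to canonical statomorphism, so $\Pi_{21}(\sD^\f)\simeq(\Pi_{12}\sD)^\f$, while $\Pi_{12}$ and $\Pi_{21}$ are linked by $S$ as recorded in the Remark above (following \cite{Voronov:2012,Gracia-Saz:2009}); composing these gives $\Pi_{21}(\sD^\f)\simeq(\Pi_{12}\sD)^\f\simeq S(\Pi_{21}\sD)^\f$, which is the assertion.
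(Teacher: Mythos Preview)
Your proposal is correct and follows essentially the same approach as the paper's proof: a direct sign computation in adapted coordinates, tracking the Koszul sign from the flip, the parity sign from \eqref{eqn:pi2_pi1}, and the sign reversal from $S$, to see that the two core transition cocycles agree so that the coordinate-wise identity is the desired statomorphism. Your presentation is somewhat more structured (you spell out the gluing principle for statomorphisms and add the conceptual reformulation via $\Pi_{21}\circ\f\simeq\f\circ\Pi_{12}$ and $\Pi_{12}\simeq S\circ\Pi_{21}$), but the substance is identical to the paper's argument.
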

\begin{proof} Let us write the transformation law \eqref{eqn:superDtransformations} for coordinates $(z^\mu)$ on $\sD^\f$ in the form
\begin{gather*}
z^{\mu'} = z^{\mu} T_{\mu}^{\mu'} + w^\alpha u^i \hat{T}_{i \alpha}^{\mu'},
\end{gather*}
where $\hat{T}_{i \alpha}^{\mu'} = (-1)^{\tilde{i}\tilde{\alpha}} T_{\alpha i}^\mu$. Then, according to \eqref{eqn:pi2_pi1}, for coordinates $(z^\mu)$ on $\Pi_{21} \sD^\f$ we have
\begin{gather*}
z^{\mu'} = z^{\mu} T_{\mu}^{\mu'} + (-1)^\alpha \eta^\alpha \zx^i \hat{T}_{i \alpha}^{\mu'}.
\end{gather*}
Directly from \eqref{eqn:pi2_pi1} we f\/ind that coordinates $(z^\mu)$ on $(\Pi_{21} \sD)^\f$ transform as
\begin{gather*}
z^{\mu'} = z^{\mu} T_{\mu}^{\mu'} + \epsilon   \zx^i \eta^\alpha T_{\alpha i}^{\mu'},
\end{gather*}
where $\epsilon = (-1)^{\tilde{i}} (-1)^{(\tilde{i}+1)(\tilde{\alpha}+1)}$. Hence, the functors $\Pi_{21}\circ \f$ and $\f\circ \Pi_{21}$ dif\/fers by $S$, as $(-1)^{\tilde{\alpha}} (-1)^{\tilde{i}\tilde{\alpha}} = - \epsilon$, and the result follows.
\end{proof}

\subsection[$N$-manifolds of degree 2 and metric double vector bundles]{$\boldsymbol{N}$-manifolds of degree 2 and metric double vector bundles}

Apply the parity reversion functor $\Pi_{21}$ to the linearisation of an $N$-manifold $\cF$ of degree $2$ and the morphism $\sigma\colon \Linr(\cF) \rightarrow \Linr(\cF)^\f$ def\/ined by $\kappa_\cF\colon \sT\sT \cF\to \sT\sT \cF$ and the inclusion $\Linr(\cF)\subset \sT\sT \cF$. We shall obtain an ordinary (even) double vector bundle $\sD = (D; A, B; M) = \Pi_{21} \Linr(\cF)$ with isomorphic side bundles $A\simeq B\simeq \Pi \cF_1$ and a DVB morphism $\Pi_{21}(\sigma)\colon \sD \rightarrow \Pi_{21} (\Linr(\cF)^\f)$ which is the identity on all the building bundles as~$\kappa_\cF$ is so. To get a morphism $\sD\rightarrow \sD^\f$ we compose~$\Pi_{21}(\sigma)$ with a DVB isomorphism \eqref{eqn:iso_D_sD}
\begin{gather*}
\Pi_{21} \big(\Linr(\cF)^\f\big) \simeq S (\Pi_{21} \Linr(\cF))^\f \xrightarrow{\eqref{eqn:iso_D_sD}} (\Pi_{21} \Linr(\cF))^\f = \sD^\f.
\end{gather*}
Due to properties of above isomorphisms, the obtained DVB morphism $\hat{\sigma}\colon \sD \rightarrow \sD^\f$ is minus the identity on the core and the identity on the side bundles. A morphism $\phi\colon \cF \rightarrow \cF'$ between $N$-manifolds $\cF$ and $\cF'$ of degree $2$ is given locally by
\begin{gather*}
x^{A'} = x^{A'}(x), \qquad \zx^{a'} = \zx^{b}Q_{b}^{\:\:\: a'}(x), \qquad z^{i'} = z^{j}Q_{j}^{\:\:\: i'}(x) + \frac{1}{2}\zx^{a}\zx^{b}Q_{ba}^{\:\:\:\: i'}(x),
\end{gather*}
where $Q_{ba}^{\:\:\:\: i'} = - Q_{ab}^{\:\:\:\: i'}$, and it results in a DVB morphism $(\Pi_{21}\Linr)(\phi): \sD \rightarrow \sD'$ of the form
\begin{gather*}
x^{A'} = x^{A'}(x), \qquad y^{a'}_{01} = y^{b}_{01} Q_{b}^{\:\:\: a'}(x), \qquad y^{a'}_{10} = y^{b}_{10} Q_{b}^{\:\:\: a'}(x), \\
z^{i'} = z^{j}Q_{j}^{\:\:\: i'}(x) - y^b_{01} y^a_{10} Q_{ab}^{\:\:\:\: i'}(x),
\end{gather*}
which obviously `respects' additional structures $\hat{\sigma}$, $\hat{\sigma'}$.
This lead us to introducing a category $\catname{skewSymmVB}[2]$ of \emph{skew-symmetric DVBs}, whose objects are ordinary double vector bundles $\sD$ equipped with a DVB morphism $\sigma\colon \sD \rightarrow \sD$ with only such dif\/ference to the category $\SymmVB[2]$ that~\eqref{eqn:condition_on_sigma2} is replaced with $\sigma_C= - \id_C$ where $C$ is the core of $\sD$. As we have mentioned earlier in the paper, the category $\catname{skewSymmVB}[2]$ just introduced is equivalent to the category of metric DVBs (cf.~\cite{JotzLean:2015}). Thus we have proved
\begin{Proposition}\label{prop:Jotzlean}
The category of metric double vector bundles is equivalent to the category $\catname{skewSymm}[2]$, and hence is also equivalent to the category of $N$-manifolds of degree~$2$.
\end{Proposition}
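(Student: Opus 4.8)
The plan is to factor the asserted equivalence through the category $\catname{skewSymmVB}[2]$, proving in turn (i) that the category of metric double vector bundles is equivalent to $\catname{skewSymmVB}[2]$, and (ii) that $\catname{skewSymmVB}[2]$ is equivalent to the category of $N$-manifolds of degree~$2$; the Proposition then follows by composing these two equivalences. For (i) I would rerun the proof of Theorem~\ref{thm:equivalence_symmetric-symplectical} making only the sign changes forced by replacing~\eqref{eqn:condition_on_sigma2} by $\sigma|_C=-\id_C$. Starting from a skew-symmetric DVB $(\sD,\sigma)$ with $\sD=(D;A,B;M)$, dualising $\sigma$ over the vertical legs and pre-composing with the core-reversing automorphism $\Psi\mapsto\widetilde\Psi=\underset{C^*}{-}\Psi$ of $\sD^{\f\sv}$ gives the bilinear form~\eqref{eqn:symmetric_form_on_E} on $E=D^*_B\to C^*$; rerunning the computation~\eqref{eqn:skew-symmetry} while carrying the extra minus sign coming from $\sigma|_C=-\id_C$ yields $\langle\Psi_2,\Psi_1\rangle_E=\langle\Psi_1,\Psi_2\rangle_E$, so the form is symmetric and, being built from the isomorphism $\sigma^{\sv}$, non-degenerate, i.e.\ a metric DVB structure. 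Conversely, a metric DVB $(E,\beta)$ returns $\sigma:=(I^{-1}\circ\beta)^{\sv}$ exactly as in that proof, and the relation $\sigma|_C=-\id_C$ drops out because $\beta$ is minus the identity on the core for the pairing~\eqref{eqn:DVB_pairing}. The morphism part is identical: a DVB morphism intertwines the $\sigma$'s if and only if its $\sv$-dual graph is isotropic for the difference of the two symmetric forms, and functoriality together with the fact that the two constructions are mutually inverse is checked as in Theorem~\ref{thm:equivalence_symmetric-symplectical}.

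For (ii) I would take as one functor the composite already assembled in Section~\ref{sec:Nmanifolds}: linearise an $N$-manifold $\cF$ of degree~$2$ to the super $\GrL$-bundle $\pLinr(\cF)=\Linr(\cF)\subset\sT\sT\cF$ carrying the canonical involution $\sigma=\kappa_\cF|_{\Linr(\cF)}$, apply the parity reversion functor $\Pi_{21}$, and post-compose with the canonical isomorphism~\eqref{eqn:iso_D_sD}; this produces an even DVB $\sD=\Pi_{21}\Linr(\cF)$ equipped with $\widehat\sigma\colon\sD\to\sD^\f$ satisfying~\eqref{eqn:condition_on_sigma1} and $\widehat\sigma|_C=-\id_C$, and sends $\phi\colon\cF\to\cF'$ to $(\Pi_{21}\Linr)(\phi)$, which visibly respects the $\widehat\sigma$'s. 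For the inverse, since $\Pi_{21}$ is invertible as a functor on (super) double vector bundles, one applies its inverse to an object of $\catname{skewSymmVB}[2]$ to return to a super DVB with purely odd side bundles and a canonical involution, and then performs the \emph{super diagonalisation}: pass to the $\sigma$-invariant substructure equipped with the total weight vector field $\Delta^1+\Delta^2$, exactly as the diagonalisation functor $\Psi$ of Section~\ref{ssec:Characterisation_deg_two} but now in the category of supermanifolds. Writing $\sigma$ in the adapted coordinates of~\eqref{eqn:pi2_pi1}, one checks that this returns an $N$-manifold of degree~$2$ with transformation law~\eqref{eqn:coord_law_F}, the skew-symmetry $T^{i'}_{ba}=-T^{i'}_{ab}$ being forced by $\zx^a\zx^b=-\zx^b\zx^a$; that the two functors are mutually quasi-inverse and natural follows from the corresponding statements in the purely even case (Theorem~\ref{thm:equivalence_symmetric-symplectical}) together with the invertibility of $\Pi_{21}$. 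Composing (i) and (ii), and observing that $\sigma$ identifies both side bundles of $\sD$ with $\Pi\cF_1$, recovers the association described just before the Proposition.

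The one genuinely delicate step is the bookkeeping of Grassmann signs through $\Pi_{21}$. The key verification is that the sign $(-1)^{\tilde i}(-1)^{\tilde i+1}=-1$ entering the description of $S=\Pi_{21}\circ\Pi_{21}$, combined with the parity shift on the weight-one coordinates, is precisely what converts the \emph{skew}-symmetric tensor $T^{i'}_{ba}$ of an $N$-manifold into a \emph{symmetric} tensor $\sigma^i_{ba}=\sigma^i_{ab}$ on the even side (and conversely), so that the composite functor lands in $\catname{skewSymmVB}[2]$ rather than in $\SymmVB[2]$. Granting this, everything else is a routine transcription of the degree-two arguments already carried out; the higher-degree analogue is obtained by the same modification of signs in the proof of Theorem~\ref{thm:equivalence_k}.
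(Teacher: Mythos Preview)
Your proposal is correct and follows essentially the same route as the paper. The paper's ``proof'' is really just the phrase ``Thus we have proved'' appended to the preceding discussion: the equivalence (i) between metric DVBs and $\catname{skewSymmVB}[2]$ is sketched in the Remark following Theorem~\ref{thm:equivalence_symmetric-symplectical} exactly as you describe (rerunning that proof with $\sigma|_C=-\id_C$ and the modified pairing~\eqref{eqn:symmetric_form_on_E}), and the forward functor for (ii) is assembled in Section~\ref{sec:Nmanifolds} precisely as you outline, via $\Pi_{21}\circ\Linr$ post-composed with~\eqref{eqn:iso_D_sD}.

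The one place where you go beyond the paper is in spelling out the inverse functor for (ii) via super diagonalisation; the paper leaves this implicit, relying on the reader to transport the even-case argument of Section~\ref{ssec:Characterisation_deg_two} to the super setting. Your description of this step is correct, and your flagging of the $\Pi_{21}$ sign bookkeeping as the delicate point is apt---indeed, the paper's own treatment of why $\widehat\sigma|_C=-\id_C$ (rather than $+\id_C$) hinges entirely on the observation that $S=\Pi_{21}\Pi_{21}$ introduces the sign $(-1)^{\tilde i}(-1)^{\tilde i+1}=-1$, which is exactly the mechanism you identify. The closing remark about higher degree is extraneous to the Proposition as stated but harmless.
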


\section{Superisation of graded bundles}\label{sec:superisation}

 \subsection{Using the parity shifted full linearisation}
 It is clear that we can use the total linearisation and the parity reversion functor in each `direction' to associate with a graded bundle an $N$-manifold (cf.~\cite{Grabowski:2006,Voronov:2012}). Of course this is not a categorical equivalence in general, but we do have a categorical equivalence when we restrict attention to the subcategory of $N$-manifolds that arise as the superisation of symmetric $k$-fold vector bundles.

 \begin{Proposition}
 Let $F_{k}$ be a graded bundle of degree $k$, $\Pi^{(k)}\Linr(F_{k})$ is a $N$-manifold of degree~$k$, where $\Pi^{(k)} := \Pi_{k} \Pi_{k-1} \cdots \Pi_{1}$ is the total parity reversion functor. Moreover, this association is functorial.
 \end{Proposition}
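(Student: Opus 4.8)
The plan is to verify the two assertions in the Proposition—that $\Pi^{(k)}\Linr(F_k)$ is an $N$-manifold of degree $k$, and that the construction is functorial—by reducing everything to facts already established for the full linearisation functor together with the elementary behaviour of the parity reversion functor on multi-fold vector bundles. First I would recall from Definition~\ref{def:full_linearisation_functor} and Corollary~\ref{cor:L_emb_and_epim} that $\Linr(F_k)$ is a $k$-fold vector bundle, so it carries $k$ mutually commuting Euler vector fields $\Delta^1,\dots,\Delta^k$ and admits an atlas of charts with coordinates simultaneously homogeneous of weight $\veps\in\{0,1\}^k$. Applying $\Pi_j$ to the $j$-th vector bundle structure does not touch the underlying commuting structure of the other legs; it only changes the Grassmann parity assigned to the weight-one-in-slot-$j$ coordinates. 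Hence after applying the total parity reversion functor $\Pi^{(k)}=\Pi_k\cdots\Pi_1$, the resulting supermanifold $\cM:=\Pi^{(k)}\Linr(F_k)$ carries coordinates $\big(x^A, y^{i,(\veps)}\big)$ for which the Grassmann parity of $y^{i,(\veps)}$ equals $|\veps|\bmod 2$, and the total weight vector field $\Delta^1+\cdots+\Delta^k$ assigns weight $|\veps|$.

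The heart of the argument is then to check that, on the diagonal substructure where the full linearisation's intrinsic graded bundle of degree $k$ lives (the $\Sgroup_k$-invariant locus of Theorem~\ref{thm:equivalence_k}), the coordinate changes of the super structure $\cM$ are precisely those of an $N$-manifold of degree $k$ in the sense of Roytenberg~\cite{Roytenberg:2002}: homogeneous of the correct weight, polynomial in the nonzero-weight coordinates, with even-weight coordinates Grassmann even and odd-weight coordinates Grassmann odd. For this I would argue weight-slot by weight-slot, exactly as in the degree-$2$ computation carried out in Section~\ref{sec:Nmanifolds}: the change-of-coordinates tensors $T_{j_n\cdots j_1}^{\ i'}(x)$ of $F_k$ reappear in the transformation law on $\cM$ up to sign factors $(-1)^{\text{something}}$ coming from the successive $\Pi_j$'s, and the key point is that these signs are exactly what is needed to convert the symmetry of the $T$-tensors (which holds in $F_k$) into the graded (anti)symmetry demanded of an $N$-manifold's transition data. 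The cylindricity / $h$-completeness of the even-weight part of the total weight vector field is inherited because each Euler vector field $\Delta^j$ is $h$-complete on $\Linr(F_k)$ and parity reversion is an isomorphism on the reduced manifold, so the purely even part is unchanged; thus $\cM$ satisfies \v{S}evera's condition that $-1\in\R$ act as the parity operator.

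For functoriality I would simply invoke that $\Linr$ is a functor $\GrB[k]\to\VB[k]$ (Definition~\ref{def:full_linearisation_functor}, built from the tangent functor) and that each $\Pi_j$ is a functor on $k$-fold vector bundles in the supermanifold category. A morphism $\phi\colon F_k\to F_k'$ is sent by $\Linr$ to a $k$-fold vector bundle morphism $\Linr(\phi)$, and applying the commuting functors $\Pi_1,\dots,\Pi_k$ in turn gives a morphism $\Pi^{(k)}\Linr(\phi)$ of the resulting $N$-manifolds; compatibility with composition and identities is immediate from the same property for $\Linr$ and for each $\Pi_j$. One should only note that the order of the $\Pi_j$'s is fixed as part of the definition of $\Pi^{(k)}$ (the $\Pi_j$'s commute up to the canonical isomorphisms of the type in~\eqref{eqn:iso_D_sD}, but pinning down one order makes the functor well defined on the nose).

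The main obstacle I anticipate is purely bookkeeping of signs: tracking how the iterated parity reversion $\Pi_k\cdots\Pi_1$ acts on a weight-$\veps$ monomial $y^{i_1}_{\veps^1}\cdots y^{i_j}_{\veps^j}$ with $\veps^1+\cdots+\veps^j=\veps$, and confirming that the accumulated sign $(-1)^{\sum_{a<b}\langle \veps^a,\veps^b\rangle}$ (or whatever the precise combinatorial exponent turns out to be) is consistent, i.e.\ independent of the order in which one lists the factors once the Grassmann parities are taken into account. This is exactly the $\mathbb{Z}_2^k$-flavoured sign arithmetic that the paper has already navigated in the $k=2$ case, and it generalises, but it is where the proof has to be done with care rather than by citation. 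Everything else follows formally from results already in the excerpt.
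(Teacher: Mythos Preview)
Your first paragraph is essentially a complete proof and matches the paper's own treatment: the paper gives no explicit proof of this Proposition, treating it as immediate from the fact that $\Linr(F_k)$ is a $k$-fold vector bundle (so carries an atlas with coordinates of multi-weight $\veps\in\{0,1\}^k$), that the total parity reversion $\Pi^{(k)}$ assigns Grassmann parity $|\veps|\bmod 2$ to each such coordinate, and that the total weight $|\veps|$ therefore defines an $N$-grading of degree $k$. Functoriality follows by composing the functors $\Linr$ and $\Pi_j$, exactly as you say in your third paragraph.

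However, your second paragraph introduces a genuine confusion. You write that ``the heart of the argument'' is to check the $N$-manifold property ``on the diagonal substructure where the full linearisation's intrinsic graded bundle of degree $k$ lives (the $\Sgroup_k$-invariant locus of Theorem~\ref{thm:equivalence_k})''. This is wrong: the $N$-manifold in the Proposition is the \emph{whole} of $\Pi^{(k)}\Linr(F_k)$ equipped with the total weight, not any diagonal or invariant substructure. The diagonalisation functor plays no role here. Correspondingly, there is no need to ``convert the symmetry of the $T$-tensors \dots\ into the graded (anti)symmetry demanded of an $N$-manifold's transition data'': coordinates on $\Linr(F_k)$ with distinct multi-weights $\veps\neq\veps'$ (even when $|\veps|=|\veps'|$) are independent coordinates, so no (anti)symmetry condition on the transition tensors arises. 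The transition laws on $\Pi^{(k)}\Linr(F_k)$ are simply those of the $k$-fold vector bundle $\Linr(F_k)$, reinterpreted in the graded-commutative world after parity reversion; they automatically respect the total weight and are polynomial.

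Your final paragraph on sign bookkeeping is closer to the real subtlety, and it is exactly the point the paper flags in the remarks \emph{following} the Proposition: applying the $\Pi_j$'s requires fixing an ordering convention, and different orderings give functors related only by natural transformations (cf.\ the discussion around~\eqref{eqn:iso_D_sD} and the references to \cite{Grabowski:2006,Voronov:2012}). But this is a caveat about the \emph{definition} of $\Pi^{(k)}$, not an obstruction to the $N$-manifold claim once that definition is fixed. In short: keep your first and third paragraphs, delete the second, and your proof is both correct and aligned with the paper's.
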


However, one has to be carefully applying parity reversion functors to $k$-fold vector bundles. A choice in ordering the coordinates as they appear in the transformation law or, equivalently, a choice which order to apply the parity reversion functors is needed. Dif\/ferent choices here will result in spurious minus sign dif\/ferences at the level of coordinates and their transformation rules. We will obtain dif\/ferent functors for each of these choices, however they will be related via natural transformations. From our perspective, the fact that one has to make choices and consider natural transformations here is somewhat unpleasing. For a full discussion con\-sult~\cite{Grabowski:2006,Voronov:2012}.

Note that if one restricts attention to vector bundles, then the subtleties explained above do not appear. The standard notion of a superisation works \emph{perfectly} for vector bundles, and this suggests that some more general notion of a `superisation' is needed for $k$-fold vector bundles.

\subsection[Using $\mathbb{Z}_{2}^{k}$-supermanifolds]{Using $\boldsymbol{\mathbb{Z}_{2}^{k}}$-supermanifolds}
As an alternative to the standard parity reversion functor, one can use the full linearisation to def\/ine a $\mathbb{Z}_{2}^{k}$-supermanifold by declaring the commutation rules of the coordinates to be def\/ined by the scalar product of the weights. The reader should consult \cite{Covolo:2014a,Covolo:2014b} for the locally ringed space approach, noting that Molotkov \cite{Molotkov:2010} developed his notions in the categorical framework of the functor of points. We denote the category of $\mathbb{Z}_{2}^{k}$-supermanifolds with $\catname{SMan}[k]$.

From \cite[Proposition~5.5]{Covolo:2014a} we know that the superisation of a $k$-fold vector bundle, $(k >1)$ is a $ \mathbb{Z}_{2}^k$-supermanifold.
More specif\/ically, in local coordinates $(y^{a, (\ep)}_w)_{w=|\ep|}$ on $\Linr(F_{k})$ one can def\/ine a $\mathbb{Z}_{2}^{k}$-supermanifold using the sign rule
\begin{gather*}
Y^{a, (\ep)}_w Y^{b, (\delta)}_v = (-1)^{\langle \ep , \delta \rangle}Y^{b, (\delta)}_v Y^{a, (\ep)}_w,
\end{gather*}
where $\langle \ep , \delta \rangle=\sum\ze_i\zd_i$ is the standard scalar product on $\mathbb{Z}_{2}^{k}$: to each $y$ one assigns a correspon\-ding~$Y$ of the same multi-weight. The \emph{Grassmann parity} of a given coordinate $Y$ is def\/ined as the total weight of the coordinate, but note that the parity does \emph{not} determine the commutation rules.

Let us denote the $\mathbb{Z}_{2}^{k}$-supermanifold build from the full linearisation of a graded bundle with~$\Pi \Linr(F_{k})$. As the changes of local coordinates on $\Linr(F_{k})$ respect the weight, this superisation is well-def\/ined. Furthermore, the changes of local coordinates on $\Pi \Linr(F_{k})$ are exactly the same as for the full linearisation itself. As we have assumed $F_{k}$ to be a manifold, there are no potential minus signs that arise via any reordering, since the changes of coordinates never involve coordinates that do not strictly commute. Evidently, the graded commutation rules for the local coordinates do not ef\/fect the form of the transformation laws and so the local gluings of coordinate charts. There are no choices by hand here that manifest themselves as specious minus signs; we can all agree on $\Pi \Linr(F_{k})$ without any further choices or conventions.

If we start with a vector bundle in the category of smooth manifolds, say $E$, then it is easy to see that the full linearisation acts as the identity functor, i.e., $\Linr(E) \simeq E$. Thus the $\mathbb{Z}_{2}^{k}$-superisation reduces to the standard $\mathbb{Z}_{2}$-superisation via the parity reversion functor.

We summarise the main observations and results of this paper in the following f\/inal theorem.

\begin{Theorem}\label{thm:superisation}
There exists a canonical superisation functor
\begin{gather*}\Pi \Linr\colon \ \catname{GrB}[k] \longrightarrow \catname{SMan}[k], \end{gather*}
defined as the composition of the full linearisation functor and the canonical $\mathbb{Z}_2^k$-superisation functor of $k$-fold vector bundles.
\end{Theorem}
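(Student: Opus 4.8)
\textbf{Proof proposal for Theorem~\ref{thm:superisation}.}

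The plan is to assemble the functor from two pieces that have already been built in the paper and to check that their composition is well-defined on morphisms. First I would invoke the full linearisation functor $\Linr\colon \catname{GrB}[k]\to \catname{VB}[k]$ of Definition~\ref{def:full_linearisation_functor}, whose functoriality is guaranteed by Theorem~\ref{theom:functorial linearisation} together with the iteration described in Proposition~\ref{prop:l_r_epi_and_inclusion}. This takes a graded bundle $F_k$ to the $k$-fold vector bundle $\Linr(F_k)\subset \sT^{(k)}F_k$, and every morphism $\phi\colon F_k\to F_k'$ to the $k$-fold vector bundle morphism $\Linr(\phi)$. Next I would apply the canonical $\mathbb{Z}_2^k$-superisation of $k$-fold vector bundles, which exists by \cite[Proposition~5.5]{Covolo:2014a} (cited in the excerpt): to the $k$-fold vector bundle one assigns the $\mathbb{Z}_2^k$-supermanifold obtained by declaring, in adapted homogeneous coordinates $\big(y^{a,(\ep)}_w\big)_{w=|\ep|}$, the commutation rule $Y^{a,(\ep)}_w Y^{b,(\delta)}_v = (-1)^{\langle \ep,\delta\rangle} Y^{b,(\delta)}_v Y^{a,(\ep)}_w$ with $\langle\ep,\delta\rangle$ the standard scalar product on $\mathbb{Z}_2^k$. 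The composition of these two functors is by definition $\Pi\Linr$, landing in $\catname{SMan}[k]$.

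The substance of the argument is the verification that the $\mathbb{Z}_2^k$-superisation step is well-defined and functorial when applied to the image of $\Linr$, i.e., that no ``by-hand'' choices or spurious signs intervene — precisely the point stressed in the paragraph preceding the theorem. Here I would argue as follows. The coordinate changes on $\Linr(F_k)$ respect the multi-weight $\ep\in\{0,1\}^k$; hence, pulling back a coordinate $y^{i',(\ep)}$ produces a polynomial in the $y$'s all of whose monomials carry the same multi-weight $\ep$. Two coordinates $y^{a,(\ep)}$, $y^{b,(\delta)}$ entering such a monomial satisfy $\ep+\delta$ is again a $\{0,1\}^k$-vector appearing among the weights, which forces $\langle\ep,\delta\rangle=0$, i.e., the coordinates in any monomial strictly commute. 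Consequently, reordering factors within a monomial never introduces a sign, the transformation laws on $\Pi\Linr(F_k)$ are \emph{verbatim} the same as on $\Linr(F_k)$, and the coordinate charts glue to a bona fide $\mathbb{Z}_2^k$-supermanifold. The same computation applied to the polynomial expressions defining $\Linr(\phi)$ shows that $\Linr(\phi)$ lifts to a morphism of $\mathbb{Z}_2^k$-supermanifolds without sign ambiguity, and functoriality ($\Pi\Linr(\psi\circ\phi)=\Pi\Linr(\psi)\circ\Pi\Linr(\phi)$, $\Pi\Linr(\id)=\id$) is then immediate from functoriality of $\Linr$ and from the fact that the superisation step only changes the structure sheaf's commutation rules, not the underlying gluing data.

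The main obstacle I anticipate is \emph{not} any hard computation but rather making precise the claim that the $\mathbb{Z}_2^k$-superisation is genuinely canonical on this subcategory, as opposed to the ordinary $\mathbb{Z}_2$-parity-reversion discussed in the previous subsection, where a choice of ordering of directions produces only naturally-isomorphic functors. The key point to isolate cleanly is the ``orthogonality'' observation above: in any admissible coordinate monomial on $\Linr(F_k)$ the multi-weights of distinct factors are $\mathbb{Z}_2^k$-orthogonal, so the sign rule is invisible to the transition functions. Once this is stated carefully, one may also record the degenerate case: if $F_k=E$ is an ordinary vector bundle then $\Linr(E)\simeq E$, all weights lie in a single $\mathbb{Z}_2$-direction, and $\Pi\Linr$ reduces to the classical parity reversion $\Pi E$, consistent with \cite{Voronov:2012}. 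This both sanity-checks the construction and explains why the $\mathbb{Z}_2^k$-framework is the natural home for the superisation of genuinely ``multi-linear'' objects.
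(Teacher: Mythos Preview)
Your proposal is correct and follows essentially the same approach as the paper, which in fact does not give a formal proof of Theorem~\ref{thm:superisation} but rather presents it as a summary of the observations in the preceding paragraphs. Your argument actually sharpens the paper's informal remark that ``the changes of coordinates never involve coordinates that do not strictly commute'': you make explicit the orthogonality observation that if $\ep^{1}+\cdots+\ep^{j}=\ep$ with all $\ep^{a}\in\{0,1\}^{k}$, then the $\ep^{a}$ have pairwise disjoint supports and hence $\langle\ep^{a},\ep^{b}\rangle=0$, which is precisely why no signs appear and why the $\mathbb{Z}_{2}^{k}$-superisation is canonical on the image of $\Linr$.
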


\subsection*{Acknowledgements}
The authors thank the anonymous referees whose comments and suggestions have served to improve the presentation of this work. Research funded by the Polish National Science Centre grant under the contract number DEC-2012/06/A/ST1/00256.

\addcontentsline{toc}{section}{References}
\LastPageEnding

\end{document}